\newcommand{\Jac}[1]{\mathrm{Jac(}#1\mathrm{)}}
\newcommand{\Ann}[1]{\mathrm{Ann(}#1\mathrm{)}}
\newcommand{\V}{\mathbf{V}}
\DeclareMathOperator{\res}{res}
\DeclareMathOperator{\size}{size}
\title{Algebraic Pseudorandomness in $\VNC^0$}
\author{Robert Andrews\thanks{Cheriton School of Computer Science, University of Waterloo. Part of this work was done at the Institute for Advanced Study and was supported by NSF grant CCF-1900460 and the Erik Ellentuck Endowed Fellowship Fund. Email: randrews@uwaterloo.ca.}}
\date{May 15, 2025}
\begin{document}

\maketitle

\begin{abstract}
    We study the arithmetic complexity of hitting set generators, which are pseudorandom objects used for derandomization of the polynomial identity testing problem.
    We give new explicit constructions of hitting set generators whose outputs are computable in $\VNC^0$, i.e., can be computed by arithmetic formulas of constant size.
    Unconditionally, we construct a $\VNC^0$-computable generator that hits arithmetic circuits of constant depth and polynomial size.
    We also give conditional constructions, under strong but plausible hardness assumptions, of $\VNC^0$-computable generators that hit arithmetic formulas and arithmetic branching programs of polynomial size, respectively.
    As a corollary of our constructions, we derive lower bounds for subsystems of the Geometric Ideal Proof System of Grochow and Pitassi.

    Constructions of such generators are implicit in prior work of Kayal on lower bounds for the degree of annihilating polynomials. 
    Our main contribution is a construction whose correctness relies on circuit complexity lower bounds rather than degree lower bounds.
\end{abstract}

\section{Introduction}

\subsection{Polynomial Identity Testing}

Algebraic complexity is a vibrant subarea of complexity theory that studies computation of polynomial and rational functions using basic arithmetic operations.
Like boolean complexity theory, algebraic complexity enjoys a rich theory of pseudorandomness, with the polynomial identity testing (PIT) problem playing a central role.
The input to the PIT problem is an arithmetic circuit, and the goal is to decide whether the circuit computes the identically zero polynomial.
This problem can be efficiently solved with randomness using the Schwartz--Zippel lemma \cite{Schwartz80,Zippel79}, which says that if a degree-$d$ polynomial $f$ is nonzero, then with probability at least $1/2$ a randomly-chosen point from a grid of side-length $2d$ will lead to a nonzero evaluation of $f$.
A great deal of work has gone into designing efficient deterministic algorithms for polynomial identity testing, leading to beautiful constructions and connections to other areas of computer science and mathematics.

Algorithms for PIT are often designed by constructing \emph{hitting set generators}, which play a role analogous to pseudorandom generators in boolean complexity.
For a complexity class $\mathscr{C}$, a hitting set generator $\mathcal{G}$ for $\mathscr{C}$ is a (family of) polynomial map(s) $\mathcal{G} : \F^{\ell} \to \F^n$ such that for every polynomial $f \in \mathscr{C}$, we have $f = 0$ if and only if $f \circ \mathcal{G} = 0$.\footnote{Formally, we consider \emph{families} of polynomials $(f_1, f_2, \ldots)$ and families of polynomial maps $(\mathcal{G}_1, \mathcal{G}_2, \ldots)$, rather than a single polynomial $f$ and a single map $\mathcal{G}$. We will gloss over this distinction throughout the introduction for the sake of readability, focusing instead on single elements $f_n$ and $\mathcal{G}_n$ of the respective families.}
Conceptually, one can think of a generator as mapping $\ell$ truly random field elements to $n$ pseudorandom field elements.
If the generator $\mathcal{G}$ can be implemented by an efficient algorithm, then we are led to an improved deterministic PIT algorithm for $\mathscr{C}$-circuits: given a circuit $C$, test the composition $C \circ \mathcal{G}$ by brute force.
The running time of the na\"{i}ve brute-force algorithm for PIT has an exponential dependence on the number of variables, and the generator $\mathcal{G}$ improves this exponent from $n$ to the smaller $\ell$, a parameter referred to as the \emph{seed length} of the generator.
The usual aim is to construct generators with seed length that is as small as possible, since this is the most important parameter for improving the running time of the deterministic algorithm.

Numerous constructions of hitting set generators are known, both conditional and unconditional.
In this work, we will be interested in designing hitting set generators for strong circuit classes, at or beyond the frontier of our ability to prove super-polynomial lower bounds for explicit polynomials.\footnote{For results on polynomial identity testing for weaker circuit classes, we refer the reader to the recent survey of \textcite{DG24}, as well as the surveys of \textcite{Saxena09,Saxena14}.}
Starting with \textcite{KI04}, who adapted the Nisan--Wigderson \cite{NW94} generator to the algebraic setting, there has been a successful line of work constructing hitting set generators for strong circuit classes---including general, unrestricted circuits---under hardness assumptions \cite{DSY09,CKS19a,Andrews20,ST21c,GKSS22}.
In fact, some of the lower bounds assumed by these works have since been proven unconditionally, leading to new deterministic algorithms for PIT!
Specifically, the super-polynomial lower bounds of \textcite{LST21} against constant-depth arithmetic circuits, combined with the hardness-to-randomness theorem of \textcite{CKS19a}, led to the first deterministic sub-exponential time algorithm to test polynomial identities written as constant-depth circuits.

Recently, \textcite{CT25} raised a very interesting question about the complexity of hitting set generators.
They asked if it is possible to construct a hitting set generator that is computable in some small complexity class $\mathscr{C}$, yet appears pseudorandom to a larger complexity class $\mathscr{D} \supsetneq \mathscr{C}$.
This sort of parameter regime is common throughout cryptography, so they termed such a generator a \emph{cryptographic} hitting set generator.
The question of constructing cryptographic hitting set generators is an extremely interesting one, and is the focus of our work.
Aside from this question's inherent interest, one might expect the techniques underlying the construction of a cryptographic hitting set generator to have other applications within algebraic complexity theory.
For example, the analogous question of constructing low-complexity pseudorandom generators was answered in a beautiful work of \textcite{AIK06}, and the techniques therein have found numerous applications within cryptography and complexity, such as to the recent study of the range avoidance problem \cite{RSW22}.

Most known hardness-randomness connections in algebraic complexity cannot hope to produce a hitting set generator that operates in this cryptographic regime of parameters.
Almost all generators that come from algebraic hardness-randomness are \emph{reconstructive}, meaning that their correctness proofs follow a common template.
In a reconstructive proof of correctness, the generator $\mathcal{G}$ is proven correct by an argument of the following form: suppose some explicit polynomial $f$, such as the $n \times n$ permanent, is hard to compute.
Given a nonzero circuit $C$ of size $s$ that satisfies $C \circ \mathcal{G} = 0$, there is a reconstruction procedure that modifies $C$ into a circuit $C'$ of size $s + t$ that computes the supposedly-hard polynomial $f$.
If $s + t \ll \size(f)$, where $\size(f)$ is the complexity of $f$, then we arrive at a contradiction.
The upshot of this argument is that if the circuit $C$ is sufficiently small---in particular, when the circuit $C$ is of size $s \ll \size(f)$---the hardness of $f$ implies that the composition $C \circ \mathcal{G}$ must be nonzero, i.e., that $\mathcal{G}$ hits small circuits.
The bottleneck in this argument is that the generator $\mathcal{G}$ is usually defined using the hard polynomial $f$ (for example, $\mathcal{G}$ may be the Nisan--Wigderson generator applied to $f$), so the complexity of $\mathcal{G}$ is necessarily bounded from below by $\size(f)$.
Because the reconstruction argument only proves that $\mathcal{G}$ is pseudorandom against circuits of size $s \ll \size(f) \le \size(\mathcal{G})$, this proof template is doomed to fail in constructing a cryptographic hitting set generator.
To design cryptographic hitting set generators for strong circuit classes, we need to avoid this reconstructive approach.

One instance where the overhead from reconstruction can be avoided is found in the work of \textcite{AF22}.
They gave a subexponential-time algorithm to test polynomial identities that are written as constant-depth circuits.
The same algorithmic result was already obtained by the previously-mentioned work of \textcite{LST21}.
However, these two algorithms differ in the complexity of the hitting set generator underlying the algorithm: the generator of \cite{LST21} is reconstructive, whereas the generator of \cite{AF22} provably has smaller complexity than the circuits it hits.
Unlike most works in algebraic hardness-randomness, \textcite{AF22} do not design a generator that is based on evaluations of hard functions, but rather design the generator so that its annihilator ideal consists only of hard polynomials.
For a generator $\mathcal{G}$, its \emph{annihilator ideal}, denoted by $\Ann{\mathcal{G}}$, is the set of all polynomials $f$ such that $f \circ \mathcal{G} = 0$.
In principle, one could show that a generator hits a circuit class $\mathscr{C}$ by showing that all nonzero polynomials in the annihilator ideal are so complex that they lie outside the circuit class $\mathscr{C}$.
Proving a statement like this is necessary for the proof of correctness of a generator, but these statements are usually not the core thrust of the argument and only arise as a byproduct of the correctness proof.
As we will see, adopting this viewpoint will be useful for constructing further examples of generators that hit circuits more complex than the generator itself.

\subsection{Cryptographic Generators from Degree Bounds for Annihilating Polynomials} \label{subsec:annihilator degree bounds}

Although the question of constructing cryptographic hitting set generators in algebraic complexity is a recent one, prior work on degree bounds for annihilating polynomials leads, at least implicitly, to constructions of cryptographic generators.
\textcite[Theorem 12]{Kayal09} showed that over a field of characteristic zero, any annihilator of the $n+1$ polynomials
\begin{align*}
    g_1(\vec{x}) &= x_1^d - 1 \\
    & \vdotswithin{=} \\
    g_n(\vec{x}) &= x_n^d - 1 \\
    g_{n+1}(\vec{x}) &= x_1 + x_2 + \cdots + x_n - n
\end{align*}
must have degree at least $d^n$.
Equivalently, the corresponding generator $\mathcal{G} : \F^n \to \F^{n+1}$ given by $\mathcal{G}(\vec{x}) = (g_1(\vec{x}), \ldots, g_{n+1}(\vec{x}))$ hits all polynomials of degree less than $d^n$.
Because many restricted forms of arithmetic circuits, such as arithmetic formulas and branching programs, necessarily compute low-degree polynomials, this generator hits powerful classes of circuits.
For example, size-$s$ arithmetic branching programs cannot compute polynomials of degree greater than $s$, so this generator hits \emph{all} branching programs of size less than $d^n$.
This is a bit unusual. 
The best-known lower bounds against arithmetic branching programs are only quadratic \cite{CKSV22}, so intuition from the hardness versus randomness paradigm suggests we should only expect to have constructions of generators that hit branching programs of size up to $O(n^2)$.

The preceding example of \textcite{Kayal09} fits into the cryptographic regime, as the outputs are extremely simple to compute.
Taking $d = 2$, we obtain a generator where the first $n$ outputs can be computed by formulas of constant size, the last output can be computed by a formula of size $n$ and depth two, and yet the generator hits all branching programs of size less than $2^n$.
Although this generator only stretches its input by one field element in length, it is possible to improve the stretch of the generator by invoking $n^{1 - \eps}$ copies in parallel, each on a fresh set of $n^\eps$ variables. 
This results in a generator $\mathcal{G}' : \F^n \to \F^{n + n^{1 - \eps}}$ that stretches its input by $n^{1 - \eps}$ field elements.
Kayal's construction can also be modified to obtain a generator with similar parameters where \emph{all} outputs are computable by constant-size formulas.
To do this, we encode the final output $g_{n+1}$ with additional variables $y_2, \ldots, y_{n-1}$ via
\begin{align*}
    h_{n+1}(\vec{x}, \vec{y}) &= x_1 + x_2 - y_2 \\
    h_{n+2}(\vec{x}, \vec{y}) &= y_2 + x_3 - y_3 \\
    &\vdotswithin{=} \\
    h_{2n-1}(\vec{x}, \vec{y}) &= y_{n-1} + x_n - n.
\end{align*}
Kayal's lower bound on the degree of the annihilator extends to this variation on his example, so we already have constructions of generators where each output is computable by a formula of constant size, yet the generator itself hits polynomials of much higher complexity.

\subsection{The Ideal Proof System}

Not only is the problem of constructing cryptographic hitting set generators interesting in its own right, but it is also closely tied to open problems in algebraic proof complexity.
The central goal of propositional proof complexity is to understand the $\NP$ versus $\coNP$ problem via the complexity of the $\coNP$-complete unsatisfiability problem: given a boolean formula $\varphi$, accept $\varphi$ if and only if $\varphi$ is unsatisfiable.
The typical setting is to first fix a proof system, and then try to find families of boolean formulas $(\varphi_n)_{n \in \naturals}$ such that any proof $\pi_n$ of the unsatisfiability of $\varphi_n$ requires length that is super-polynomial in $n$.
Numerous proof systems based on different areas of mathematics, including logic, algebra, and geometry, have been considered.

Most relevant to our work are proof systems based on algebraic reasoning.
Without loss of generality, we may assume that our unsatisfiable boolean formula $\varphi$ is in 3CNF form.
If $\varphi$ is an $n$-variate 3CNF formula with $m$ clauses, there is a translation of $\varphi$ to a system $\mathcal{F}$ of $n+m$ polynomial equations such that $\mathcal{F}$ is satisfiable if and only if $\varphi$ is satisfiable.
This system $\mathcal{F}$ consists of the $n$ boolean axioms $x_i^2 - x_i = 0$, which force the variables $x_i$ to be $\bits$-valued, and for each clause of the form $(x_1 \oplus b_1) \lor (x_2 \oplus b_2) \lor (x_3 \oplus b_3)$, the trivariate equation
\[
    (x_1 - (1 - b_1))\, (x_2 - (1 - b_2))\, (x_3 - (1 - b_3)) = 0.
\]
It is easy to see that any boolean assignment to the $x$ variables satisfies this equation if and only if the same assignment satisfies the corresponding clause $(x_1 \oplus b_1) \lor (x_2 \oplus b_2) \lor (x_3 \oplus b_3)$.
Thus, if we want to prove that $\varphi$ is unsatisfiable, we can instead try to prove the unsatisfiability of the resulting system of polynomial equations $\mathcal{F}$.

How does one prove that a system of polynomial equations is unsatisfiable?
The answer comes from Hilbert's Nullstellensatz.
Over an algebraically closed field $\F$, the system of equations 
\[
    f_1(\vec{x}) = \cdots = f_m(\vec{x}) = 0 
\]
is unsatisfiable if and only if there are polynomials $g_1, \ldots, g_m \in \F[x_1,\ldots,x_n]$ such that
\[
    \sum_{i=1}^m f_i(\vec{x}) \, g_i(\vec{x}) = 1.
\]
We can take the polynomials $\set{g_1,\ldots,g_m}$ to be our proof of unsatisfiability.
Our goal, then, is to understand the complexity of the simplest refutation $\set{g_1,\ldots,g_m}$ of the system of equations $\mathcal{F}$.

The complexity of the proof depends on the choice of proof system.
The Nullstellensatz proof system \cite{BIKPP96,BIKPRS96} measures the length of the proof by the total number of monomials in the $g_i$.
A slightly stronger system is the Polynomial Calculus \cite{Razborov98,IPS99}, where we are allowed to write the derivation of $1$ from the $f_i$ in a line-by-line manner, but we still pay for the number of monomials written on every line.
Much stronger is the Ideal Proof System (IPS), introduced by \textcite{GP18}, which allows us to write the $g_i$ succinctly as arithmetic circuits.
Given that refutations in IPS are written as arithmetic circuits, one would hope that techniques for proving arithmetic circuit lower bounds will eventually lead to lower bounds for the IPS.

This hope has played out successfully in recent years.
\textcite{FSTW21} introduced two techniques to infer IPS lower bounds from arithmetic circuit lower bounds, and applied these techniques for restricted circuit classes to conclude unconditional lower bounds.
The first technique, known as the functional lower bound method, has been further refined and studied by \textcite{GHT22, HLT24}.
Here, the hard system of equations $\mathcal{F}$ typically consists of boolean axioms together with a sparse polynomial that corresponds to an instance of subset sum, possibly lifted with an appropriate gadget.\footnote{These hard instances are closely related to the previously-discussed example of \textcite{Kayal09} that leads to lower bounds on the degree of annihilating polynomials.}
This method is capable of proving strong lower bounds; for example, \textcite{HLT24} proved super-polynomial lower bounds on the size of constant-depth IPS refutations for such systems of equations.
However, current applications of the functional lower bound method are only able to prove lower bounds against refutations of individual degree $O(\log \log n)$.
This is to be expected, as functional lower bounds without the individual degree constraint, even for depth-four arithmetic circuits, would lead to new lower bounds in boolean complexity theory \cite{FKS16}.

The second technique introduced by \textcite{FSTW21} is based on hardness of multiples.
This method is also capable of proving strong lower bounds: later work by \textcite{AF22} proved super-polynomial lower bounds on the size of constant-depth IPS refutations for a system of equations related to matrix rank.
Unlike the functional lower bound method, the lower bounds produced from hardness of multiples do not require the IPS refutation to be of small individual degree.
However, the drawback to this method is that in order to prove a lower bound against IPS certificates computed by $\mathscr{C}$-circuits, there must be at least one polynomial $f$ in the hard instance $\mathcal{F}$ which itself cannot be computed efficiently by $\mathscr{C}$-circuits.
This limitation is inherent to the method, as the lower bound against IPS is derived from circuit lower bounds for one (or more) of the polynomials in the hard instance $\mathcal{F}$.

Strong conditional lower bounds for IPS are also known.
\textcite{AGHT24} proved super-polynomial lower bounds on the size of constant-free IPS refutations of the binary value principle, a system of equations that asserts a natural number $n$ given in binary is negative.
Their lower bound assumes the $\tau$-conjecture of Shub and Smale, which asserts that the straight-line program complexity of computing (a multiple of) the integer $n!$ is bounded from below by $\log^{\omega(1)}n$.
\textcite{ST21a} showed that under the assumption $\VP \neq \VNP$, there is a sequence of 3CNF formulas that require IPS refutations of super-polynomial size.
One drawback of this result is that the sequence of CNF formulas considered by \textcite{ST21a} may be satisfiable; if this were the case, then the lower bound becomes trivial, as the soundness of IPS implies that it cannot refute a system of satisfiable equations.

Despite the success so far in bringing techniques from arithmetic circuit complexity to bear on IPS, we still seem far from proving unconditional lower bounds for systems of polynomial equations that encode unsatisfiable 3CNF formulas, even against weak subsystems of IPS.
Current applications of the functional lower bound method are limited to proving lower bounds against refutations of low individual degree, and some formulations of the method provably cannot lead to lower bounds for boolean instances \cite{HLT24}.
The method of lower bounds for multiples can prove unconditional lower bounds against fragments of IPS with no restrictions on the individual degree of the refutation, but the method requires the hard instance to contain polynomials of large circuit complexity, which precludes its application to systems of equations that encode a 3CNF formula.
As a step towards proving IPS lower bounds for such systems, can we prove IPS lower bounds for any system of equations where each polynomial in the system can be described by an arithmetic formula of constant size?

\subsection{Hard Families of Simple Polynomials from Nullstellensatz Degree Bounds} \label{subsec:nullstellensatz degree bounds}

Just as prior work on annihilating polynomials led to simple constructions of cryptographic hitting set generators, existing work on lower bounds for Nullstellensatz degree---in particular, examples demonstrating the tightness of these bounds---easily leads to families of simple polynomials that are hard for fragments of IPS.
Recall that the Nullstellensatz says that if $f_1 = \cdots = f_m = 0$ is an unsatisfiable system of equations, then there are polynomials $g_1, \ldots, g_m$ such that $\sum_{i=1}^m f_i g_i = 1$.
Of particular relevance to our work are degree bounds for the Nullstellensatz: given the polynomials $f_i$, what is the smallest possible degree of the polynomials $g_i$ that witness the identity $\sum_{i=1}^n f_i g_i = 1$?
A long line of work \cite{Hermann1926,Brownawell87,CGH88,Kollar88,FG90,Sombra99,KPS01,Jelonek05} has established various bounds on the degrees and heights of such polynomials.
For example, we know that such polynomials $g_i$ can always be found with degree $\deg(g_i) \le d^n$, where $d = \max_{i} \deg(f_i)$ is the maximum degree of the $f_i$.
An example due to Masser and Philippon (see \cite{Brownawell87}) shows that bounds of this shape are tight.
In particular, if we take
\begin{align*}
    f_1(\vec{x}) &= x_1^d \\
    f_2(\vec{x}) &= x_1 - x_2^d \\
    & \vdotswithin{=} \\
    f_{n-1}(\vec{x}) &= x_{n-2} - x_{n-1}^d \\
    f_{n}(\vec{x}) &= 1 - x_{n-1} x_n^{d-1},
\end{align*}
then in any expression $\sum_{i=1}^n f_i g_i = 1$, the polynomial $g_1$ must satisfy $\deg(g_1) \ge d^n - d^{n-1}$.
This degree lower bound implies similar lower bounds on the size of IPS refutations of the above system when the refutation is written as an arithmetic formula or arithmetic branching program.
The lower bound follows from the fact that an arithmetic formula or branching program of size $s$ can only compute polynomials of degree at most $s$, so no formula or branching program of size less than $d^n - d^{n-1}$ can compute a refutation of this system.

At first glance, it appears that we have made progress towards IPS lower bounds for refuting 3CNF formulas.
The example of Masser and Philippon above gives us a system of polynomial equations, each of which can be encoded by a constant-size arithmetic formula, that requires large refutations in powerful subsystems of IPS.
Unfortunately, this line of reasoning cannot lead to IPS lower bounds for systems of polynomials that encode 3CNF formulas.
An unsatisfiable 3CNF formula on $n$ variables always admits a degree-$O(n)$ refutation \cite{GP18}, so degree bounds will not result in new lower bounds on the complexity of refuting 3CNF formulas.
To make progress towards proving lower bounds for IPS refutations of 3CNF formulas, we need techniques rooted in finer complexity measures than degree.

\subsection{Our Contributions}

We now describe our results.
As we have seen in \cref{subsec:annihilator degree bounds,subsec:nullstellensatz degree bounds}, prior work on degree bounds can be used to construct cryptographic hitting set generators and hard instances for the Ideal Proof System.
The main contribution of our work is a new construction of cryptographic hitting set generators whose correctness is based on arithmetic circuit lower bounds, not degree lower bounds.
As a corollary, we will obtain families of simple polynomials that are hard to refute in subsystems of the the Geometric Ideal Proof System, itself a restricted form of the Ideal Proof System \cite[Appendix B]{GP18}.
Although the theorem statements below already follow from prior work, the constructions appearing in our proofs do not, and we believe there is value in developing techniques in algebraic pseudorandomness and proof complexity that go beyond degree lower bounds.

Our first result is an explicit construction of a low-complexity hitting set generator that hits $\VAC^0$, the class of polynomials computed by constant-depth circuits of polynomial size.
Each output of our generator is computable in $\VNC^0$, i.e., can be computed by an arithmetic formula of constant size.

\begin{theorem}[see \cref{thm:vac0 generator}] \label{thm:vac0 main}
    Let $\F$ be a field of characteristic zero.\footnote{For technical reasons, we can only prove that our generator is pseudorandom over fields of characteristic zero or sufficiently large characteristic. See \Cref{sec:instantiate} for an explanation of the underlying issue.}
    There is a $\VNC^0$-computable hitting set generator with stretch $n^{0.99}$ that hits $\VAC^0$.
\end{theorem}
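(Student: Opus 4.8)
The plan is to construct the generator so that its annihilator ideal contains only polynomials of large $\VAC^0$-complexity, following the philosophy described in the introduction rather than building the generator from evaluations of a hard function. The starting point is the observation that a $\VAC^0$-computable polynomial $f(\vec{z})$ that vanishes on the image of a map $\mathcal{G}$ is an annihilator of $\mathcal{G}$; so it suffices to design a low-complexity $\mathcal{G} : \F^\ell \to \F^n$ with $\ell = n - n^{0.99}$ (to get stretch $n^{0.99}$) whose annihilator ideal is disjoint from $\VAC^0 \setminus \{0\}$. Concretely, I would partition the $n$ output coordinates into $n^{1-\eps}$ blocks of size $n^\eps$ each, and on each block place a copy of a ``gadget map'' $\Gamma : \F^{n^\eps - 1} \to \F^{n^\eps}$ whose every output coordinate is computable in $\VNC^0$. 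Taking a direct sum of these gadgets gives a $\VNC^0$-computable generator with stretch $n^{1-\eps}$; choosing $\eps$ small (e.g. $\eps = 0.01$) yields stretch $n^{0.99}$. The key point is to choose the gadget $\Gamma$ cleverly: its annihilator ideal should consist of polynomials whose restriction to a single block already forces high $\VAC^0$-complexity, and a product/tensor argument should then amplify this across the $n^{1-\eps}$ blocks so that any nonzero annihilator of the full $\mathcal{G}$ is too complex for $\VAC^0$.

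The heart of the argument is therefore the analysis of the gadget's annihilator ideal, and here is where I expect to exploit the recent constant-depth lower bounds of \textcite{LST21} rather than degree bounds. The idea is to pick $\Gamma$ so that the lowest-degree nonzero polynomial in $\Ann{\Gamma}$ (or a natural generator of the ideal) is forced to be, up to composition and projection, a hard-for-$\VAC^0$ polynomial such as an appropriately padded version of the polynomials witnessing the \cite{LST21} lower bound. One clean way to arrange this: let one coordinate of $\Gamma$ compute a ``hard'' low-degree polynomial $p(\vec{x})$ in the seed variables, and let the remaining coordinates compute the seed variables $x_i$ themselves (or monomials in them computable in $\VNC^0$). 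Then $\Ann{\Gamma}$ is essentially the ideal generated by $z_0 - p(z_1,\dots,z_{n^\eps-1})$ after identifying the ``trivial'' coordinates, so every nonzero annihilator, when we substitute $z_i \mapsto x_i$, must be divisible by $z_0 - p(\vec x)$ and hence at least as complex as $p$ up to the cost of one subtraction --- and $p$ is chosen to be hard for $\VAC^0$. The subtlety is that $p$ must \emph{itself} be $\VNC^0$-computable (a single gate, so of bounded degree and bounded size), yet its presence must obstruct $\VAC^0$ annihilators; this is delicate, since a single-gate polynomial is trivially in $\VAC^0$. The resolution is that the annihilator is not $p$ itself but something built from $p$ across all blocks simultaneously --- e.g. a polynomial encoding a system in the style of the Masser--Philippon or Kayal examples but with the degree growth replaced by a \emph{circuit-depth} growth coming from iterated composition.

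The main obstacle, then, is proving the circuit lower bound for the annihilator ideal: showing that every nonzero $f \in \Ann{\mathcal{G}}$ requires super-polynomial-size constant-depth circuits. Degree bounds give this for free in Kayal's example because annihilators there have degree $\ge d^n$; we deliberately cannot use that, so we need a substitute. The plan is a \emph{hardness-of-composition} or \emph{hardness-of-multiples}-style argument: if $f \circ \mathcal{G} = 0$ with $f$ small in $\VAC^0$, then because $\mathcal{G}$ is itself $\VAC^0$-computable (indeed $\VNC^0$), the composition $f \circ \mathcal{G}$ would be a small constant-depth circuit; one then argues that the \emph{only} way for a small constant-depth circuit to compute the zero polynomial after substituting the structured map $\mathcal{G}$ is for $f$ to have been a multiple of some canonical hard polynomial $q$ in the ideal, and hence $f$ is at least as complex as $q$ (modulo the $\VAC^0$-computable map $\mathcal{G}$ and an additive factor accounting for the division), contradicting smallness of $f$ when $q$ is the \cite{LST21}-hard polynomial suitably lifted. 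Making this rigorous requires (i) identifying the canonical generator $q$ of $\Ann{\mathcal{G}}$ and showing $\Ann{\mathcal{G}}$ is prime (or at least that $q \mid f$ for all $f$ in it), which should follow from $\mathcal{G}$ being a dominant map onto an irreducible variety; (ii) a ``hardness escalation'' lemma of the form: if $q \cdot h$ has small constant-depth circuits and $q$ is a fixed polynomial, then $q$ has small constant-depth circuits --- this is precisely a hardness-of-multiples statement, and the technology of \textcite{FSTW21} and \textcite{AF22} for constant-depth circuits is what I would invoke here; and (iii) checking the arithmetic so that the blow-up from the $n^{1-\eps}$ parallel blocks and from the characteristic-zero technicalities (flagged in the paper's footnote) does not destroy the bound. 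Step (ii) is where the real work lies, since hardness of multiples for constant-depth circuits is exactly the ingredient that lets us replace degree lower bounds with genuine circuit lower bounds.
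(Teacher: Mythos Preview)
Your high-level architecture is right: build a gadget $\Gamma$ with stretch $1$ whose annihilator ideal is principal and generated by a polynomial hard for $\VAC^0$, then take $n^{1-\eps}$ parallel copies and run a hybrid argument for the stretch. Step~(ii), hardness of multiples for constant-depth circuits via \cite{AF22}, is exactly what the paper invokes, and step~(i) is also on target.

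The genuine gap is the gadget itself. You correctly identify the tension: if $\Gamma$ outputs $(\vec{x}, p(\vec{x}))$ with $p$ in $\VNC^0$, then the canonical annihilator $z_0 - p(z_1,\ldots)$ is trivially in $\VAC^0$ and no lower bound follows. But your proposed resolution --- that hardness emerges from ``something built from $p$ across all blocks simultaneously'' via ``iterated composition'' --- does not work. In the paper the blocks sit on disjoint variable sets, the hybrid argument reduces a distinguisher for the full generator to a distinguisher for a \emph{single} block, and the single gadget must already have only hard annihilators. There is no cross-block amplification of hardness; the parallel copies buy stretch and nothing else.

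The idea you are missing is the \emph{local encoding}. Take a polynomial $f$ that is hard for $\VAC^0$ (the determinant) and a size-$s$ circuit $\Phi$ computing it. Introduce a fresh variable $y_i$ for each internal gate of $\Phi$, and let the gadget output the shifted inputs $x_j - \alpha_j$, the gate constraints $y_i - (y_j \odot y_k)$ for every gate, and finally $y_s - \beta$. Every output is a formula of size at most~$2$, so the gadget is $\VNC^0$-computable even though $f$ is not. The key computation (\cref{prop:ann ideal generator}) is that the annihilator ideal of this map is principal, generated by a polynomial $h(\vec z)$ with
\[
    h(z_1,\ldots,z_n,0,\ldots,0,z_{n+s+1}) = z_{n+s+1} - f(\vec z + \vec\alpha) + \beta.
\]
Hence every nonzero annihilator is a multiple of $h$, and extracting the bottom homogeneous component in the auxiliary variables $z_{n+1},\ldots,z_{n+s+1}$ (by interpolation) converts a small constant-depth circuit for such a multiple into a small constant-depth circuit for a nonzero multiple of $f(\vec z) - \beta$, to which the \cite{LST21,AF22} lower bound applies. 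This is how the paper decouples the $\VNC^0$ complexity of the outputs from the $\VAC^0$-hardness of the annihilators: the hard polynomial $f$ is never an output of the gadget, only implicitly encoded by the gate constraints.
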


Under strong but reasonable hardness assumptions, the same techniques yield $\VNC^0$-computable hitting set generators that hit larger circuit classes.
Our first conditional construction yields a $\VNC^0$-computable generator that hits $\VF$, which corresponds to families of polynomials that are computable by polynomial-size formulas.
The correctness of this construction relies on the assumption that the border formula complexity of the determinant is super-polynomial.

\begin{theorem}[see \cref{thm:vf generator}] \label{thm:vf main}
    Let $\F$ be a field of characteristic zero.
    If the border formula complexity of the $n \times n$ determinant is $n^{\omega(1)}$, then there is a $\VNC^0$-computable hitting set generator that hits $\VF$.
\end{theorem}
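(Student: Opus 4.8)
The plan is to follow the ``hard annihilator'' strategy of \textcite{AF22}: I want to produce a $\VNC^{0}$-computable map $\mathcal{G}\colon\F^{\ell}\to\F^{n}$ whose image is a hypersurface, so that $\Ann{\mathcal{G}}$ is the principal ideal $\langle A\rangle$ generated by an irreducible polynomial $A$, and to engineer $A$ so that no nonzero multiple of $A$ lies in $\VF$. Since every nonzero element of $\langle A\rangle$ is a multiple of $A$, this is exactly the assertion that $\mathcal{G}$ hits $\VF$. As with the examples in \cref{subsec:annihilator degree bounds}, it suffices to do this with stretch one and then amplify the stretch by running disjoint copies of $\mathcal{G}$ in parallel on fresh blocks of variables.

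To build such a $\mathcal{G}$ I would start from Valiant's completeness theorem for the determinant. After converting a universal polynomial-size algebraic branching program into one of out-degree $O(1)$ by subdividing high-degree vertices, its Valiant representation is an $s\times s$ matrix $M(\vec z)$ each of whose entries is a variable or a constant, with only $O(1)$ nonzero entries per row, and with $\det M = h$ a $\VBP$-complete polynomial. The generator takes seed variables $(\vec z,\vec u)$ and produces the outputs $(M(\vec z)\vec u,\ \vec z,\ u_{1})$; since each row of $M$ is sparse, every output is a sum of $O(1)$ bilinear terms and hence lies in $\VNC^{0}$. The image is the hypersurface of triples $(\vec w,\vec z,t)$ for which the linear system $M(\vec z)\vec u=\vec w$ has a solution with $u_{1}=t$, and one checks that its defining polynomial is, up to sign, $A = h(\vec z)\,t - P(\vec z,\vec w)$ with $P = \det[\,M\vec e_{2}\mid\cdots\mid M\vec e_{s}\mid\vec w\,]$; in particular $A$ is affine in $t$ with leading coefficient the hard polynomial $h$. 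A small amount of care in the choice of $M$ ensures that $A$ is irreducible and that $\langle A\rangle=\Ann{\mathcal{G}}$, using that the image has codimension one.

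For correctness, suppose some nonzero $f\in\VF$ satisfies $f\circ\mathcal{G}=0$; then $A\mid f$. Passing from a polynomial-size formula for the multiple $f=A\cdot q$ to a small \emph{border} formula for the factor $A$ requires a closure-under-factorization input in the border-formula regime; this is the technical heart of converting hardness of $h$ into the hitting property, and it is also the reason that the determinant only needs to be hard in the border sense. Granting it, interpolation in the variable $t$ extracts the coefficient of $t$ in $A$, which is exactly $h$, by a border formula of size $O(\size(A))$, so $h\in\overline{\VF}$; and then $\det_{m}\in\overline{\VF}$, because $\det_{m}$ is a $p$-projection of the $\VBP$-complete polynomial $h$, contradicting the hypothesis that the border formula complexity of the determinant is $n^{\omega(1)}$.

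The main obstacle is the construction itself: the $\VNC^{0}$ requirement forces every output --- equivalently, every row of the Jacobian $\Jac{\mathcal{G}}$ --- to touch only $O(1)$ of the seed variables, so determinant-hard behavior must be realized through a matrix with $O(1)$ nonzeros per row, and Valiant's completeness theorem applied to a bounded-out-degree branching program is precisely what makes this possible. The remaining work is bookkeeping: verifying that $A$ is irreducible and generates $\Ann{\mathcal{G}}$, ruling out accidental easy relations among the outputs, matching the number of outputs to the number of seed variables and then amplifying stretch by parallel repetition, and --- the delicate point noted above --- pinning down the exact closure-under-factorization (or hardness-of-multiples) statement for border formulas that the argument consumes. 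An alternative route replaces the explicit computation of $A$ with the Jacobian criterion: $A\circ\mathcal{G}=0$ forces the gradient $\bigl((\partial_{x_{i}}A)\circ\mathcal{G}\bigr)_{i}$ into the one-dimensional left kernel of $\Jac{\mathcal{G}}$, which is spanned by the vector of its maximal minors, and designing $\mathcal{G}$ so that one such minor is the $\VBP$-complete polynomial $h$ again yields a small border formula for $h$ after clearing a polynomial factor.
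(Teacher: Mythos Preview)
Your construction is genuinely different from the paper's. The paper takes a polynomial-size fan-in-two circuit $\Phi$ computing the \emph{generic} determinant $\det_n(X)$ and forms its \emph{local encoding}: one fresh gate variable $y_i$ per internal gate, one output $y_i-(y_j\odot y_k)$ per gate, together with input shifts $x_i-\alpha_i$ and an output constraint $y_s-\beta$. The annihilator ideal is principal, generated by a polynomial which, after zeroing the gate-indexed coordinates and undoing the shift, becomes $-\det_n(Z)+\beta$. Crucially, the paper never factors: from a small formula for an arbitrary annihilator $p$ it scales the gate-indexed coordinates by a fresh parameter $w$ and interpolates out the bottom $w$-homogeneous part, which is a nonzero \emph{multiple of $\det_n$} computed by a formula of size $O(\size(p)^2)$. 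Then the theorem of \cite{AF22} that every nonzero multiple of $\det_n$ requires large border formulas (under the hypothesis on $\det_n$) finishes the argument. No closure-under-factorization statement is consumed anywhere.

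Your Valiant-matrix generator is a clean alternative $\VNC^0$ encoding and does yield a principal annihilator $A=h(\vec z)\,t-P(\vec z,\vec w)$, but there is a genuine gap at the step you yourself flag as ``delicate.'' You propose to pass from a small formula for $f=A\cdot q$ to a small border formula for the \emph{factor} $A$; polynomial closure of (border) formulas under taking factors is a well-known open problem, and ``granting it'' is not innocuous. One can avoid factoring by the scaling trick (send $\vec w\mapsto\epsilon\vec w$; since $P$ is linear in $\vec w$, the bottom $\epsilon$-part of any multiple of $A$ is a nonzero multiple of $h\cdot t$), but then you need hardness of multiples of $h$, and the result of \cite{AF22} is stated and proved \emph{only for the generic determinant} $\det_n(X)$, not for an arbitrary $\VBP$-complete $h$. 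Your reduction $\det_m=h\circ\pi$ goes the wrong way: a small formula for $g\cdot h$ yields a small formula for $g(\pi)\cdot\det_m$, but $g(\pi)$ may vanish, so hardness of multiples does not transfer from $\det_m$ to $h$. The simplest repair is to start from a bounded-out-degree ABP for $\det_n(X)$ itself, so that $h=\det_n$; then scaling-plus-interpolation extracts a nonzero multiple of $\det_n$ from any annihilator, and \cite{AF22} applies directly. With that change your argument and the paper's converge on the same hardness-of-multiples input while using different $\VNC^0$ encodings of the hard polynomial.
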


Although it is commonly conjectured that the formula complexity of the determinant is $n^{\omega(1)}$, it is less clear whether we should expect the same lower bound to hold for border formulas, as border computation is poorly understood even in very weak settings.
Despite this gap in our understanding, we find it conceivable that the determinant does require border formulas of super-polynomial size.
The reason for this is that recent progress on arithmetic circuit lower bounds \cite{LST21,TLS22,KS22,KS23,FLST24,Forbes24} has relied on the use of complexity measures based on matrix rank.
Because matrix rank is lower semi-continuous, these lower bounds often directly imply lower bounds on border complexity.\footnote{See e.g.~\cite[Section 6.1]{AF22} for an explanation of how the lower bound of \textcite{LST21} implies a lower bound on border complexity.}
It is unclear if these methods will prove lower bounds against arithmetic formulas in the near future, and if they do, the hard polynomial may not be the determinant.
However, in light of the recent success of rank-based methods in proving lower bounds, coupled with the fact that many of these lower bounds apply to the determinant, it is not out of the question that when we manage to prove lower bounds against arithmetic formulas, the methods used will be rank-based and will apply to the determinant. 
In that case, we will have corresponding lower bounds on the border formula complexity of the determinant.

Our second conditional construction produces a $\VNC^0$-computable generator that hits polynomials that are computable by polynomial-size arithmetic branching programs, a class commonly known as $\VBP$.
This result assumes there is a family of polynomials that can be computed by polynomial-size arithmetic circuits but not by polynomial-size arithmetic branching programs.

\begin{theorem}[see \cref{thm:vbp generator}] \label{thm:vbp main}
    Let $\F$ be a field of characteristic zero.
    If there is a family of polynomials in $\VP$ that require arithmetic branching programs of super-polynomial size, then there is a $\VNC^0$-computable hitting set generator that hits $\VBP$.
\end{theorem}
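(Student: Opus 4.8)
The plan is to build the generator directly from a small arithmetic circuit for the hard polynomial, using the same ``gate-equation encoding'' that underlies \cref{thm:vac0 main,thm:vf main}, and then to determine its annihilator ideal exactly. Fix the hypothesized family $h = (h_n)$ with $h \in \VP$ but $h \notin \VBP$, and for each $n$ fix a fan-in-two arithmetic circuit of polynomial size $s$ computing $h_n$ on the variables $w_1, \dots, w_m$. Introduce a fresh variable $z_k$ for the $k$-th gate, topologically ordered so that gate $k$ computes $\phi_k$, a sum or product of two variables among $w_1, \dots, w_m, z_1, \dots, z_{k-1}$, with gate $s$ producing the output. Define $\mathcal{G}_n$ on the $\ell = m+s$ variables $(\vec{w}, \vec{z})$ to output: (i) the pass-throughs $w_1, \dots, w_m$; (ii) the gate equations $z_k - \phi_k$ for $k = 1, \dots, s$; and (iii) the output value $z_s$, a total of $m + s + 1$ coordinates. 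Each coordinate is a polynomial of degree at most two in at most three variables, hence computable by a constant-size formula, so $\mathcal{G}_n$ is $\VNC^0$-computable.

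The key structural step is to show that $\Ann{\mathcal{G}_n}$ is the \emph{principal prime ideal} $\langle t - \Psi_s\rangle$, where $t$ is the type-(iii) output variable and $\Psi_s$ is a polynomial in the type-(i) and type-(ii) output variables $\vec{y}$ and $\vec{u}$ satisfying $\Psi_s(\vec{y}, 0) = h_n(\vec{y})$. The point is that the polynomial map $(\vec{w}, \vec{z}) \mapsto (\vec{w},\, \vec{z} - \vec{\phi}(\vec{w}, \vec{z}))$ is a triangular automorphism of $\F^{m+s}$, since $z_k \mapsto z_k - \phi_k(\vec{w}, z_1, \dots, z_{k-1})$ can be inverted one coordinate at a time; its inverse expresses $\vec{z}$ as a polynomial map $\vec{z} = \Psi(\vec{w}, \vec{u})$, and setting $\vec{u} = 0$ forces every $z_k$ to equal the value computed at gate $k$, so $\Psi_s(\vec{w}, 0) = h_n(\vec{w})$. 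Hence the image of $\mathcal{G}_n$ is exactly the graph $\{(\vec{w}, \vec{u}, \Psi_s(\vec{w}, \vec{u}))\}$, which is Zariski closed with vanishing ideal $\langle t - \Psi_s\rangle$.

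It remains to show that no nonzero element of $\Ann{\mathcal{G}_n}$ lies in $\VBP$. Any nonzero $R \in \Ann{\mathcal{G}_n}$ factors as $R = Q \cdot (t - \Psi_s)$ with $t - \Psi_s$ irreducible, being monic and linear in $t$. One must avoid the tempting but false claim that every nonzero multiple of a hard polynomial is hard; instead I would invoke closure of $\VBP$ under factorization (Sinhababu--Thierauf), which implies that a polynomial-size branching program for $R$ would yield one for the factor $t - \Psi_s$. Restricting that branching program by $t = 0$, $\vec{u} = 0$ then produces a polynomial-size branching program for $-h_n$, contradicting $h \notin \VBP$. I expect this to be the main obstacle: the argument needs both a tight description of the annihilator ideal---so that membership forces divisibility by $t - \Psi_s$---and a factorization-closure theorem for the target class. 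This is also where the characteristic hypothesis enters, both through the interpolation underlying coefficient extraction and through the hypotheses of the factorization-closure result, which is why one restricts to fields of characteristic zero (or sufficiently large characteristic).

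Finally, each $\mathcal{G}_n$ above has stretch one, so to obtain a generator of the asserted form I would take disjoint parallel copies on fresh blocks of variables, together with pass-through coordinates to pad to every output length. Parallel repetition preserves the hitting property because $\VBP$ is closed under substituting field constants for variables and under composition with $\VNC^0$ maps: a nonzero branching-program polynomial that survives one block survives the product. Combining the three steps gives a $\VNC^0$-computable generator that hits $\VBP$, as claimed.
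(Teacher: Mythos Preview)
Your proposal is correct and follows the paper's overall architecture: build the $\VNC^0$ generator as a local (gate-equation) encoding of a small circuit for the hard polynomial, show the annihilator ideal is principal and generated by a polynomial that specializes to $h_n$, and then transfer hardness from $h_n$ to every nonzero annihilator. There are two genuine differences worth flagging. First, you identify the annihilator ideal via the observation that $(\vec{w},\vec{z}) \mapsto (\vec{w},\vec{z}-\vec{\phi})$ is a triangular automorphism, so the image of $\mathcal{G}_n$ is literally a graph with defining equation $t - \Psi_s$; the paper instead argues principality via the Jacobian/resultant route (\cref{lem:ann ideal principal}) and constructs the generator explicitly by simulating the circuit (\cref{lem:annihilator inductive step}, \cref{prop:ann ideal generator}). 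Your argument is cleaner here and yields the same conclusion. Second, for the hardness transfer you apply Sinhababu--Thierauf directly to an annihilator $R = Q \cdot (t - \Psi_s)$, extract the irreducible factor $t - \Psi_s$, and then substitute $t = 0,\ \vec{u} = 0$ to recover $-h_n$. The paper instead passes through an interpolation step (\cref{lem:annihilator hardness}) to produce a multiple of $f - \beta$ from the annihilator, and only then invokes factoring closure. Your shortcut is valid because $t - \Psi_s$ is irreducible and $\VBP$ is closed under variable substitution; the paper's detour exists because \cref{lem:annihilator hardness} is meant to serve all three instantiations uniformly, including $\VAC^0$ and $\VF$ where factoring closure is not available. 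Either way the characteristic-zero hypothesis enters through the factoring theorem, as you note. The parallel-repetition step is the same hybrid argument as in \cref{thm:vac0 generator}.
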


As a corollary of our generator constructions, we obtain new lower bounds for subsystems of the Geometric Ideal Proof System of \textcite[Appendix B]{GP18}, a restricted form of the Ideal Proof System.
An easy observation, already made by \textcite{GKSS17}, shows that hitting set generators immediately yield hard instances for the Geometric Ideal Proof System.
The complexity of the generator upper bounds the complexity of the equations in the hard instance, and the hitting property of the generator is used to infer the lower bound against Geometric IPS.
Because we construct $\VNC^0$-computable generators against $\VAC^0$, we obtain a system of equations where every equation can be written as an arithmetic formula of constant size, but no Geometric IPS refutation can be computed by a circuit of constant depth and polynomial size.

\begin{theorem}[see \cref{thm:vac0 geom ips lb}]
    Let $\F$ be a field of characteristic zero.
    There is an explicit system of polynomial equations $\mathcal{F}_n$ such that (1) each equation in $\mathcal{F}_n$ depends on at most $3$ variables, (2) the system $\mathcal{F}_n$ can be refuted by Geometric IPS, and (3) any Geometric IPS refutation of $\mathcal{F}_n$ cannot be computed by a circuit of constant depth and polynomial size.
\end{theorem}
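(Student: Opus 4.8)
The plan is to derive this theorem as a direct corollary of \cref{thm:vac0 main} by recalling the standard connection between hitting set generators and hardness for Geometric IPS, as observed by \textcite{GKSS17}. Let $\mathcal{G} = (g_1, \ldots, g_n) : \F^\ell \to \F^n$ be the $\VNC^0$-computable generator that hits $\VAC^0$ provided by \cref{thm:vac0 main}. Introduce fresh seed variables $y_1, \ldots, y_\ell$, and take the system of equations $\mathcal{F}_n$ to be
\[
    x_1 - g_1(\vec{y}) = 0, \quad \ldots, \quad x_n - g_n(\vec{y}) = 0.
\]
Because each $g_i$ is computable by an arithmetic formula of constant size, and in particular depends on only a constant number of seed variables, each equation $x_i - g_i(\vec{y}) = 0$ depends on at most $O(1)$ variables; with a small amount of care in instantiating Kayal's construction (or by the explicit encoding tricks of \cref{subsec:annihilator degree bounds} that make every output depend on $\le 3$ variables), one arranges that each equation depends on at most $3$ variables, giving property (1).

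The next step is to verify property (2), unsatisfiability together with refutability by Geometric IPS. The system $\mathcal{F}_n$ is unsatisfiable precisely when $\mathcal{G}$ is a nonconstant map onto a proper subvariety, which holds because $\mathcal{G}$ has stretch $n^{0.99} > 0$, so its image is a variety of dimension $\le \ell < n$; equivalently, there is a nonzero polynomial in $\Ann{\mathcal{G}}$, and $\Ann{\mathcal{G}}$ is exactly the vanishing ideal of the image variety. The defining feature of Geometric IPS (as opposed to general IPS) is that a refutation of $\mathcal{F}_n$ corresponds to an element of the ideal generated by the $x_i - g_i(\vec{y})$ that, after eliminating the $x$-variables, witnesses a nonzero polynomial vanishing on the image of $\mathcal{G}$ — so the complexity of the smallest Geometric IPS refutation of $\mathcal{F}_n$ is, up to routine overhead, the circuit complexity of the smallest nonzero polynomial in $\Ann{\mathcal{G}}$. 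That $\mathcal{F}_n$ admits \emph{some} Geometric IPS refutation is immediate, since $\Ann{\mathcal{G}}$ is nonzero.

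For property (3), suppose toward a contradiction that $\mathcal{F}_n$ had a Geometric IPS refutation computable by a circuit of constant depth and polynomial size. By the correspondence above, this yields a nonzero polynomial $f \in \Ann{\mathcal{G}}$ that is computable in $\VAC^0$. But $\mathcal{G}$ hits $\VAC^0$ by \cref{thm:vac0 main}, so every nonzero $f \in \VAC^0$ satisfies $f \circ \mathcal{G} \neq 0$, i.e., $f \notin \Ann{\mathcal{G}}$ — a contradiction. Hence no Geometric IPS refutation of $\mathcal{F}_n$ lies in $\VAC^0$.

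The main obstacle is not any single hard step but rather pinning down the precise definition of Geometric IPS from \cite[Appendix B]{GP18} and checking that the translation between ``refutation of $x_i - g_i(\vec{y}) = 0$'' and ``nonzero element of the annihilator ideal'' is faithful with respect to the complexity measure at hand — in particular that eliminating the $x$-variables does not blow up constant depth, and that the size/depth bookkeeping from replacing $g_i$ by its $\VNC^0$ formula is genuinely absorbed into ``polynomial size, constant depth.'' Once that bookkeeping is set up, the argument is the standard generator-to-proof-complexity reduction of \textcite{GKSS17} specialized to our generator. I would also remark that the same template, applied to \cref{thm:vf main} and \cref{thm:vbp main}, yields conditional $\VNC^0$-equation hard instances for the Geometric IPS subsystems corresponding to $\VF$ and $\VBP$.
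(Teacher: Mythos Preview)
There is a genuine gap in your construction. The system $\mathcal{F}_n = \{x_i - g_i(\vec{y}) = 0 : i \in [n]\}$ is \emph{always satisfiable}: for any choice of $\vec{y}$, setting $x_i = g_i(\vec{y})$ satisfies every equation, so no sound proof system---in particular Geometric IPS---can refute it. Equivalently, the polynomial map $(\vec{x}, \vec{y}) \mapsto (x_1 - g_1(\vec{y}), \ldots, x_n - g_n(\vec{y}))$ is surjective (shift the $x$-variables), hence its annihilator ideal is zero and no $r$ with $r(\vec{0}) = 1$ can vanish on it. Your claim that unsatisfiability follows from the image of $\mathcal{G}$ being a proper subvariety conflates the image of $\mathcal{G}$ in $\F^n$ with the graph of $\mathcal{G}$ in $\F^{\ell + n}$. (As a secondary issue, the extra variable $x_i$ also pushes each equation from three variables to four.)

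The paper instead takes $\mathcal{F}_n$ to be the system $g_1(\vec{y}) = 0, \ldots, g_m(\vec{y}) = 0$ directly, where the $g_i$ are the outputs of the local encoding of $\Phi_n(A_n) = 0$; a Geometric IPS refutation is then precisely a nonzero element of $\Ann{\mathcal{G}}$ with constant term $1$, and the $\VAC^0$ lower bound follows from \cref{lem:local enc hits vac0} together with \cref{lem:geom ips from hsg}. There is a second subtlety you overlook even under this correction: Geometric IPS is incomplete, so one must check that $\Ann{\mathcal{G}}$ actually contains an element with nonzero constant term, which does not follow from $\Ann{\mathcal{G}} \neq 0$ alone. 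The paper secures this by choosing $A_n$ with $\det(A_n) \neq 0$; by \cref{prop:ann ideal generator} the generator $h_n$ of $\Ann{\mathcal{G}_n}$ then has constant term $-\det(A_n) \neq 0$, and a scalar multiple of $h_n$ is a Geometric $\VP$-IPS refutation. The ``at most $3$ variables'' property is immediate from \cref{def:local encoding} (each output is $x_i - \alpha_i$, $y_s - \beta$, or $y_i - L(u) \odot L(w)$), not from the tricks of \cref{subsec:annihilator degree bounds}.
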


We also prove conditional lower bounds against Geometric IPS refutations computed by arithmetic formulas (\cref{thm:vf geom ips lb}) and arithmetic branching programs (\cref{thm:vbp geom ips lb}).
The conditions used to prove these lower bounds are the same ones used to construct $\VNC^0$-computable generators that hit arithmetic formulas and arithmetic branching programs, respectively.

Although our hard instances do not correspond to encodings of 3CNF formulas, we view these results as progress towards proving lower bounds for refutations of unsatisfiable boolean formulas.
As mentioned in \Cref{subsec:nullstellensatz degree bounds}, unsatisfiable 3CNF formulas on $n$ variables can always be refuted in degree $O(n)$, so any method of proving lower bounds on the complexity of refuting 3CNF's must tolerate the existence of low-degree refutations.
In all of our lower bounds for Geometric IPS, the hard systems of equations admit refutations of degree $n^{O(1)}$.
To the best of our knowledge, this is the first example of a system of equations where each equation can be expressed by an arithmetic formula of constant size, the system admits low-degree refutations, and the system is provably hard to refute in a nontrivial subsystem of IPS.

\subsection{Our Techniques}

We obtain our $\VNC^0$-computable hitting set generators through a transformation that takes a known hitting set generator $\mathcal{G}$ and compiles it into a related generator $\mathcal{G}'$ that has much lower complexity, yet retains the hitting property of $\mathcal{G}$.
A similar high-level approach was taken by \textcite{AIK06} to obtain $\NC^0$-computable one-way functions and pseudorandom generators in the boolean setting.
There are some superficial similarities between our work and theirs, since both works obtain the new generator $\mathcal{G}'$ through an encoding of the computation of $\mathcal{G}$.

However, one difference between our work and \cite{AIK06} is the set of requirements placed on the high-complexity generator $\mathcal{G}$.
\textcite{AIK06} require $\mathcal{G}$ to be of sufficiently low complexity (say, $\NC^1$-computable), but are agnostic to the particular choice of the generator $\mathcal{G}$.
In contrast, our work does not place a requirement on the complexity of the starting generator, but we are only able to handle generators $\mathcal{G}$ that are of the specific form
\[
    \mathcal{G} : (x_1, \ldots, x_n) \mapsto (x_1, \ldots, x_n, f(\vec{x})),
\]
where $f(\vec{x})$ is a polynomial.

We require the starting generator to be of this form because it provides sufficient algebraic structure to completely analyze the resulting low-complexity generator $\mathcal{G}'$.
To every generator $\mathcal{G} : \F^\ell \to \F^n$, one can associate its \emph{annihilator ideal} $\Ann{\mathcal{G}}$, the set of polynomials $h \in \F[z_1,\ldots,z_n]$ such that $h \circ \mathcal{G} = 0$, i.e., the composition of $h$ and $\mathcal{G}$ results in the identically zero polynomial.
To prove that a generator $\mathcal{G}$ hits a circuit class $\mathscr{C}$, it is both necessary and sufficient to show that every nonzero polynomial in the annihilator $\Ann{\mathcal{G}}$ cannot be computed within the resource bounds of $\mathscr{C}$.

For some generators, we can completely characterize the ideal $\Ann{\mathcal{G}}$, which a useful first step towards proving lower bounds for $\Ann{\mathcal{G}}$.
In the case where $\mathcal{G}$ is of the form $\mathcal{G}(\vec{x}) = (x_1,\ldots,x_n, f(\vec{x}))$, it is easy to show that the annihilator ideal is given by
\[
    \Ann{\mathcal{G}} = \abr{z_{n+1} - f(z_1,\ldots,z_n)} \subseteq \F[z_1,\ldots,z_{n+1}].
\]
That is, every annihilator of $\mathcal{G}$ is a multiple of the polynomial $z_{n+1} - f(z_1,\ldots,z_n)$.
Because this ideal is generated by a single polynomial (i.e., is a principal ideal), we can leverage existing techniques on polynomial factorization to prove lower bounds for $\Ann{\mathcal{G}}$.
The argument proceeds by contradiction: if there is a small $\mathscr{C}$-circuit that computes an element of $\Ann{\mathcal{G}}$, and if $\mathscr{C}$-circuits are polynomially-closed under factorization, then there is a small $\mathscr{C}$-circuit for $z_{n+1} - f(z_1,\ldots,z_n)$.
Thus, if the circuit class $\mathscr{C}$ is closed under factorization (as are $\VP$ \cite{Kaltofen87} and $\VBP$ \cite{ST21c}), then to prove lower bounds for $\Ann{\mathcal{G}}$, it suffices to prove a lower bound on $f(z_1,\ldots,z_n)$.
This is a much easier task, since we only have to reason about a single polynomial $f$ and not an arbitrary multiple of $f$.
In some cases, the circuit class $\mathscr{C}$ is not known to be polynomially-closed under factorization.
Despite this, we can still make use of results that show for a \emph{specific choice} of $f$, lower bounds on the $\mathscr{C}$-circuit complexity of $f$ imply lower bounds for all multiples of $f$.

So far, we have seen that it is possible to completely understand generators of the form $\mathcal{G}(x_1,\ldots,x_n) = (x_1, \ldots, x_n, f(\vec{x}))$.
On its own, this generator has no hope of producing a generator that is of lower complexity than the circuits it hits, since there is always an annihilator of complexity comparable to $f$.

To obtain a generator of lower complexity, we replace the single output $f(\vec{x})$ by a sequence of $s+1$ outputs that encode a size-$s$ circuit $\Phi$ that computes $f(\vec{x})$.\footnote{This transformation is similar to the reduction of Circuit-SAT to 3CNF-SAT and was recently used by \textcite{Grochow23} to study the PIT instances that arise from verification of IPS refutations.}
For each internal gate of $\Phi$, we introduce a fresh variable $y_i$, and we include the polynomial 
\[
    y_i - y_j \odot y_k 
\]
in the output of the generator, where $\odot \in \set{+, \times}$ is the operation labeling the $i$\ts{th} gate and the children of the $i$\ts{th} gate are gates $j$ and $k$.\footnote{If the children of the $i$\ts{th} gate are not internal gates, we use a different polynomial, but this is a technical point that does not meaningfully impact the overview here. See \cref{def:local encoding} for the precise definition.}
We also include $y_s$ as an output so that the generator stretches its $n+s$ inputs to $n+s+1$ outputs.
It is clear from the definition that each output of this new generator $\mathcal{G}'$ can be computed by an arithmetic formula of constant size, but it is not obvious why $\mathcal{G}'$ preserves any pseudorandom properties the original generator $\mathcal{G}$ had.

To show that $\mathcal{G}'$ is pseudorandom, we follow the approach suggested above: we determine the annihilator ideal $\Ann{\mathcal{G}'}$ and subsequently prove lower bounds on the complexity of all nonzero polynomials in $\Ann{\mathcal{G}'}$.
As we saw, the annihilator ideal $\Ann{\mathcal{G}}$ of the starting generator was generated by $z_{n+1} - f(z_1,\ldots,z_n)$, so we could infer lower bounds on $\Ann{\mathcal{G}}$ from lower bounds for $f$ and its multiples.
For $\mathcal{G}'$, the annihilator ideal $\Ann{\mathcal{G}'}$ is again principal and is generated by a polynomial of the form
\[
    h(\vec{z}) = z_{n+s+1} - f(z_1,\ldots,z_n) + g(\vec{z}),
\]
where $g$ is a structured polynomial that should be thought of as an error term for the purposes of this overview.
To prove that $\mathcal{G}'$ is pseudorandom, it suffices to prove lower bounds on the complexity of multiples of $h$.
Because $h$ is close to $f$, one could hope that lower bounds for multiples of $f$ imply comparable lower bounds for multiples of $h$.
This is precisely the route we follow to show that $\mathcal{G}'$ is pseudorandom.
By instantiating the construction of $\mathcal{G}'$ with a polynomial $f$ whose multiples are hard for a circuit class $\mathscr{C}$ (possibly under hardness assumptions), we obtain a generator that is computable in $\VNC^0$ yet hits the much larger class $\mathscr{C}$.

We encourage readers who want to see a concrete example of this new generator $\mathcal{G}'$ to skip ahead to \cref{ex:local encoding}. 
There, we describe and analyze the generator $\mathcal{G}'$ when the polynomial $f$ is taken to be $f(x_1,x_2) = x_1^2 - x_2^2$. 
This polynomial is too simple to yield a generator with useful pseudorandom properties, but its simplicity allows us to explicitly write down the resulting generator and its annihilator ideal.

\subsection{Organization}

The rest of this paper is organized as follows.
We review preliminary material in \cref{sec:prelim}.
We then begin in \cref{sec:generator construction}, where we describe a general construction of $\VNC^0$-computable hitting set generators.
\cref{sec:instantiate} gives three concrete instantiations of this generator construction.
\cref{sec:geom ips} uses our generators to obtain lower bounds for the Geometric Ideal Proof System.
Finally, we conclude in \cref{sec:conclusion} with some open questions.

\section{Preliminaries} \label{sec:prelim}

For a natural number $n \in \naturals$, we write $[n] \coloneqq \set{1,2,\ldots,n}$.
We abbreviate a vector of variables $(x_1,\ldots,x_n)$ as $\vec{x}$.
For a field $\F$, we write $\F[\vec{x}]$ and $\F(\vec{x})$ for the polynomial ring and field of rational functions, respectively, over $\F$ in the variables $x_1, \ldots, x_n$.
A \emph{polynomial map} $\mathcal{G}$ is a tuple of polynomials $(g_1(\vec{x}), \ldots, g_n(\vec{x})) \in \F[x_1,\ldots,x_{\ell}]^n$, which defines a map $\mathcal{G} : \F^\ell \to \F^n$ via $\mathcal{G}(\vec{\alpha}) = (g_1(\vec{\alpha}), \ldots, g_n(\vec{\alpha}))$.
We often abuse notation and refer to the induced map $\mathcal{G} : \F^\ell \to \F^n$ as a polynomial map; the two objects are equivalent over infinite fields, but when $\F$ is finite, different tuples of polynomials may induce the same map $\F^\ell \to \F^n$.
For a collection of polynomials $f_1, \ldots, f_m \in \F[\vec{x}]$, we write $\abr{f_1,\ldots,f_m}$ for the ideal of $\F[\vec{x}]$ generated by $f_1, \ldots, f_m$.

\subsection{Arithmetic Circuit Complexity}

This subsection recalls basic notions of arithmetic circuit complexity.
For more on this topic, we refer the reader to the surveys of \textcite{SY10} and \textcite{Saptharishi-survey}.

\begin{definition}[Arithmetic circuits]
    Let $\F$ be a field and let $\F[\vec{x}]$ be the polynomial ring over $\F$ in $n$ variables.
    An \emph{arithmetic circuit $\Phi$} is a directed acyclic graph equipped with the following data.
    The vertices of in-degree zero are called \emph{input gates} and are labeled by either a variable $x_i$ or a constant $\alpha \in \F$.
    Vertices of positive in-degree are called \emph{internal gates} and are labeled by either addition or multiplication.
    Each vertex $v$ of $\Phi$ computes a polynomial $f_v(\vec{x}) \in \F[\vec{x}]$ in the natural way.
    If a vertex $v$ has out-degree zero, we call $v$ an \emph{output gate} and say that the circuit $\Phi$ computes $f_v(\vec{x})$.

    We measure the \emph{size} of $\Phi$ by the number of internal gates in the circuit.
    The \emph{depth} of a circuit is the length of the longest path from any input gate to any output gate.
\end{definition}

We will sometimes insist on working with circuits whose internal gates have fan-in two.
A circuit with internal gates of unbounded fan-in can be simulated by a circuit where every internal gate has fan-in two by replacing each gate of large fan-in with a binary tree of fan-in two gates.
This has the potential to increase the size of the circuit from $s$ to $O(s^2)$, which will be negligible for our purposes.

Next, we define arithmetic formulas, which are arithmetic circuits whose the underlying graph is a tree.

\begin{definition}[Arithmetic formulas]
    An \emph{arithmetic formula} is an arithmetic circuit whose underlying graph is a tree.
    Equivalently, an arithmetic formula is an arithmetic circuit in which every gate has out-degree at most one.
\end{definition}

We also need the notion of arithmetic branching programs, whose expressive power lies somewhere between that of formulas and circuits.

\begin{definition}[Arithmetic branching programs]
    An \emph{arithmetic branching program} is a layered directed acyclic graph $G = (V, E)$ with a single source vertex $s$ and a single sink vertex $t$.
    The fact that $G$ is layered means that there is a partition $V = V_0 \sqcup V_1 \sqcup \cdots \sqcup V_d$ such that $V_0 = \set{s}$, $V_d = \set{t}$, and every edge of $G$ is between vertices in $V_{i-1}$ and $V_i$ for some $i \in [d]$.
    Each edge $e$ of $G$ is labeled by an affine linear polynomial $\ell_e(\vec{x}) \in \F[\vec{x}]$.
    The branching program computes the polynomial 
    \[
        \sum_{P : s \rightsquigarrow t} \prod_{e \in P} \ell_e(\vec{x}),
    \]
    where the sum is over all $(s,t)$-paths in $G$.
    We measure the \emph{size} of the branching program by $|V|$, the total number of vertices.
\end{definition}

Having defined our model of computation, we now define the objects we are interested in computing.
These are families of polynomials whose degree is polynomially-bounded; such families of polynomials are called $p$-families.

\begin{definition}[$p$-families]
    Let $\F$ be a field and let $f = (f_n)_{n \in \naturals}$ be a sequence of polynomials with coefficients in $\F$.
    We say that $f$ is a \emph{$p$-family} if $\deg(f_n)$ is polynomially-bounded as a function of $n$.
\end{definition}

We now define the complexity classes we will be interested in throughout this work.
Although the definitions of these classes depend on the underlying field $\F$, we suppress this dependence for the sake of readability. 
The field $\F$ will always be clear from context.

\begin{definition}[Complexity classes]
    Let $\F$ be a field.
    \begin{enumerate}
        \item 
            The class $\VP$ consists of all $p$-families $(f_n)_{n \in \naturals}$ over $\F$ such that $f_n$ can be computed by an arithmetic circuit of size $n^{O(1)}$.
        \item 
            The class $\VBP$ consists of all $p$-families $(f_n)_{n \in \naturals}$ over $\F$ such that $f_n$ can be computed by an arithmetic branching program of size $n^{O(1)}$.
        \item 
            The class $\VF$ consists of all $p$-families $(f_n)_{n \in \naturals}$ over $\F$ such that $f_n$ can be computed by an arithmetic formula of size $n^{O(1)}$.
        \item 
            The class $\VAC^0$ consists of all $p$-families $(f_n)_{n \in \naturals}$ over $\F$ such that $f_n$ can be computed by an arithmetic circuit of size $n^{O(1)}$ and depth $O(1)$.
        \item 
            The class $\VNC^0$ consists of all $p$-families $(f_n)_{n \in \naturals}$ over $\F$ such that $f_n$ can be computed by an arithmetic circuit of size $O(1)$. \qedhere
    \end{enumerate}
\end{definition}

In addition to the standard notion of computing a polynomial via an arithmetic circuit, we will occasionally need to refer to border complexity, which corresponds to a notion of approximate computation using arithmetic circuits.

\begin{definition}[Border complexity] \label{def:border complexity}
    Let $\F$ be a field an $\eps$ be an indeterminate.
    Let $f(\vec{x}) \in \F[\vec{x}]$ be a polynomial.
    We say that a circuit $\Phi$ over $\F(\eps)$ \emph{approximately computes} $f(\vec{x})$ if $\Phi$ computes a polynomial of the form 
    \[
        f(\vec{x}) + \eps \cdot g(\vec{x}, \eps),
    \]
    where $g(\vec{x}, \eps) \in \F[\vec{x}, \eps]$ is a polynomial in $x_1,\ldots,x_n$ and $\eps$.
    We will abbreviate this by saying that $\Phi$ computes $f(\vec{x}) + O(\eps)$.
    If $\Phi$ has size $s$, we say that the \emph{border complexity} of $f$ is bounded by $s$.
\end{definition}

Over fields of characteristic zero, such as the complex numbers, one should interpret a circuit $\Phi$ that approximately computes $f$ as a circuit that computes $f$ in the limit as $\eps$ tends towards zero.
A potentially more general definition of approximate computation over $\complexes$ would be to consider a sequence of circuits $(\Phi_{m})_{m \in \naturals}$ that computes a sequence of polynomials $(f_m)_{m \in \naturals}$ such that $f = \lim_{m \to \infty} f_m$, without requiring a uniform description of the circuits $(\Phi_m)_{m \in \naturals}$ as in \cref{def:border complexity}.
However, a result of \textcite{Alder84} shows that these two notions of computation coincide over $\complexes$, so we may work with \cref{def:border complexity} without any loss in generality.

An alternate description of the limit-based definition is that the target polynomial $f$ lies in the Euclidean closure of the set of polynomials of complexity $s$.
By replacing the Euclidean topology with the Zariski topology, one obtains a notion of approximate computation that extends to algebraically closed fields of arbitrary characteristic.
As in the preceding paragraph, \textcite{Alder84} showed that this notion of approximate computation agrees with \cref{def:border complexity}.

Naturally, one can define complexity classes in terms of border complexity.

\begin{definition}[Border complexity classes]
    Let $\F$ be a field and let $f = (f_n)_{n \in \naturals}$ be a $p$-family.
    For a complexity class $\mathscr{C}$, we say that $f \in \overline{\mathscr{C}}$ if there is a $p$-family $g = (g_n)_{n \in \naturals}$ over $\F(\eps)$ such that $g \in \mathscr{C}$ and for all $n$, we have $g_n(\vec{x}) = f_n(\vec{x}) + O(\eps)$.
\end{definition}

For example, a $p$-family $(f_n)_{n \in \naturals}$ is in $\overline{\VP}$ if there is a sequence of circuits of size $n^{O(1)}$ over $\F(\eps)$ that compute $f_n + O(\eps)$.

\subsection{Polynomial Identity Testing}

Many deterministic algorithms for polynomial identity testing are obtained by explicitly constructing a hitting set generator.
A hitting set generator, defined below, is a pseudorandom object that plays a role analogous to pseudorandom (and hitting set) generators in boolean derandomization, stretching a short seed of truly random field elements into a longer string of pseudorandom field elements.

\begin{definition}[Hitting set generator] \label{def:hsg}
    Let $\F$ be a field and let $\mathscr{C}$ and $\mathscr{D}$ be classes of $p$-families over $\F$.
    Let $\mathcal{G} = (\mathcal{G}_n : \F^{\ell(n)} \to \F^n)_{n \in \naturals}$ be a sequence of polynomial maps, where $\mathcal{G}_n = (g_{n,1}, \ldots, g_{n,n}) \in \F[\vec{y}]^n$.
    We say that $\mathcal{G}$ is a \emph{$\mathscr{D}$-computable hitting set generator for $\mathscr{C}$} if the following conditions hold.
    \begin{enumerate}
        \item
            Every $p$-family of the form $g = (g_{n, i_n})_{n \in \naturals}$ is an element of $\mathscr{D}$.
        \item
            For every $p$-family $f = (f_n)_{n \in \naturals} \in \mathscr{C}$ and for all sufficiently large $n$, we have $(f_n \circ \mathcal{G}_{m(n)})(\vec{y}) = 0$ if and only if $f_n(\vec{x}) = 0$, where $m(n)$ is the number of variables appearing in the polynomial $f$.
    \end{enumerate}
    The \emph{seed length} of $\mathcal{G}$ is $\ell(n)$.
    The \emph{stretch} of $\mathcal{G}$ is $n - \ell(n)$.
    The \emph{degree} of $\mathcal{G}$ is $d(n) \coloneqq \max_{i \in [n]} \deg(g_{n, i})$.
\end{definition}

\begin{remark}[Padding] \label{rem:padding}
    In \cref{def:hsg}, we adopt the convention that the $n$\ts{th} map in the sequence $(\mathcal{G}_n : \F^{\ell(n)} \to \F^n)_{n \in \naturals}$ has output length $n$ largely as a matter of simplicity.
    When we construct generators later in \cref{sec:instantiate}, our constructions will naturally lead to sequences of maps where the $n$\ts{th} map has output length $m(n)$ for a polynomially-bounded function $m(n)$.
    To obtain a generator with output length $m(n) + q < m(n+1)$, we pad the input and output of the generator with $q - m(n)$ fresh variables.
    This preserves the $\VNC^0$-computability of the generator and does not impact the pseudorandom properties of the generator.
\end{remark}

For the sake of readability, we may refer to a single polynomial map $\mathcal{G} : \F^\ell \to \F^n$ as a hitting set generator.
This is done with the understanding that there is an underlying sequence of polynomial maps $(\mathcal{G}_n : \F^{\ell(n)} \to \F^n)_{n \in \naturals}$ and we have implicitly fixed a choice of $n$.

An explicit construction of a hitting set generator $\mathcal{G}$ for a circuit class $\mathscr{C}$ immediately yields an algorithm to test $\mathscr{C}$-circuits: given a polynomial $f \in \mathscr{C}$, test the composed polynomial $f \circ \mathcal{G}$ by brute force.
By the Schwartz--Zippel lemma, testing $f \circ \mathcal{G}$ can be done with $(\deg(f) \deg(\mathcal{G})+1)^\ell$ evaluations.
If $\ell$ is small compared to $n$ and $\deg(\mathcal{G})$ is not large, this yields an improvement over the brute-force identity testing algorithm, which evaluates $f$ at $(\deg(f) + 1)^n$ points.

One means of studying the pseudorandom properties of a candidate generator $\mathcal{G}$ is by studying its annihilator ideal $\Ann{\mathcal{G}}$, which we now define.

\begin{definition}[Annihilator ideal]
    Let $\mathcal{G} : \F^\ell \to \F^n$ be a polynomial map.
    The \emph{annihilator ideal} of $\mathcal{G}$, denoted by $\Ann{\mathcal{G}}$, is the set of all polynomials that vanish when composed with $\mathcal{G}$.
    Formally, 
    \[
        \Ann{\mathcal{G}} \coloneqq \set{f \in \F[x_1,\ldots,x_n] : f(g_1(\vec{y}), \ldots, g_n(\vec{y})) = 0}. \qedhere
    \]
\end{definition}

To understand the relation between the pseudorandom properties of a candidate generator $\mathcal{G}$ and its annihilator ideal $\Ann{\mathcal{G}}$, suppose we want to show that $\mathcal{G}$ hits some circuit class $\mathscr{C}$.
The statement ``$\mathcal{G}$ hits $\mathscr{C}$'' is equivalent to the containment 
\[
    \mathscr{C} \cap \Ann{\mathcal{G}} \subseteq \set{0}.
\]
This containment should be interpreted as saying that every nonzero polynomial in $\Ann{\mathcal{G}}$ is too complex to be computed within the resource bounds of the circuit class $\mathscr{C}$.
Thus, to prove that a candidate generator $\mathcal{G}$ hits a circuit class $\mathscr{C}$, one can instead prove lower bounds on every nonzero polynomial in $\Ann{\mathcal{G}}$.
This is the route we will take to prove the correctness of our generator.

We formalize the preceding discussion in the following lemma.

\begin{lemma} \label{lem:hitting vs annihilators}
    Let $\F$ be a field and let $\mathscr{C}$ be a class of $p$-families over $\F$.
    Let $\mathcal{G} = (\mathcal{G}_n : \F^{\ell(n)} \to \F^n)_{n \in \naturals}$ be a sequence of polynomials maps.
    Then $\mathcal{G}$ is a hitting set generator for $\mathscr{C}$ if and only if for every $p$-family $f = (f_n)_{n \in \naturals}$ where $f_n \in \Ann{\mathcal{G}_n} \setminus \set{0}$, we have $f \notin \mathscr{C}$.
\end{lemma}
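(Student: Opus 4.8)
The plan is to unwind the definitions and reduce both directions of the biconditional to the second condition in the definition of a hitting set generator (\cref{def:hsg}); the first condition there, about $\mathscr{D}$-computability of the generator's outputs, is not about $\mathscr{C}$ and plays no role in \cref{lem:hitting vs annihilators}, so I will suppress it. The key observation is the translation dictionary: for a fixed $n$, the polynomial $f_n$ satisfies $f_n \circ \mathcal{G}_{m} = 0$ (where $m = m(n)$ is the number of variables of $f_n$) precisely when $f_n \in \Ann{\mathcal{G}_m}$. So the hitting property ``$f_n \circ \mathcal{G}_m = 0 \iff f_n = 0$ for all large $n$'' says exactly that no nonzero $f_n \in \Ann{\mathcal{G}_m}$ can belong to (the $n$th slice of) a $p$-family in $\mathscr{C}$, once $n$ is large.

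For the forward direction, suppose $\mathcal{G}$ is a hitting set generator for $\mathscr{C}$ and let $f = (f_n)_{n}$ be a $p$-family with $f_n \in \Ann{\mathcal{G}_n} \setminus \set{0}$ for all $n$. If $f \in \mathscr{C}$, then by the hitting property, for all sufficiently large $n$ we have $f_n \circ \mathcal{G}_{m(n)} = 0 \iff f_n = 0$; but $f_n \neq 0$ by hypothesis, so $f_n \circ \mathcal{G}_{m(n)} \neq 0$, i.e. $f_n \notin \Ann{\mathcal{G}_{m(n)}}$ — and here I should note that by convention/indexing, $\Ann{\mathcal{G}_n}$ in the lemma statement refers to the map applied to an $n$-variate polynomial, so $m(n) = n$ in the way the lemma is phrased. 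This contradicts $f_n \in \Ann{\mathcal{G}_n}$. Hence $f \notin \mathscr{C}$.

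For the reverse direction, suppose that every $p$-family $f$ with all slices nonzero and in the annihilator lies outside $\mathscr{C}$, and let $f = (f_n) \in \mathscr{C}$; I want the hitting property. The ``only if'' half — $f_n = 0 \implies f_n \circ \mathcal{G} = 0$ — is trivial. For the ``if'' half, I argue by contradiction: if there are infinitely many $n$ with $f_n \neq 0$ yet $f_n \circ \mathcal{G}_{n} = 0$, I would like to extract a $p$-family contradicting the hypothesis. The mild subtlety is that the hypothesis quantifies over $p$-families indexed by all of $\naturals$ with every slice nonzero and annihilated, whereas a priori only infinitely many slices of $f$ are bad. I would handle this either by (i) observing that ``for all sufficiently large $n$'' in \cref{def:hsg} lets me restrict attention to large $n$, and a single counterexample at arbitrarily large $n$ already breaks the ``sufficiently large'' guarantee, so it suffices to produce one bad $n$ per tail; or (ii) replacing the good slices of $f$ by a fixed nonzero constant (which is annihilated only if $\mathcal{G}$ has no constant annihilators — so instead replace them by the nonzero polynomial obtained from a neighboring bad slice via padding, using \cref{rem:padding}) to manufacture an honest $p$-family all of whose slices are nonzero annihilators, still in $\mathscr{C}$, contradicting the hypothesis. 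The main obstacle — really the only nontrivial point — is this bookkeeping with ``for all sufficiently large $n$'' and the passage between ``infinitely many bad slices'' and ``a genuine $p$-family of bad slices''; everything else is definition-chasing. I expect the cleanest writeup routes around it by phrasing the hitting property contrapositively as ``for every $p$-family $f \in \mathscr{C}$ and every large $n$, $f_n \neq 0 \implies f_n \notin \Ann{\mathcal{G}_n}$'' and matching this literally against the lemma's quantifier, using padding to align output lengths as in \cref{rem:padding} where needed.
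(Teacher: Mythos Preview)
The paper does not actually supply a proof of \cref{lem:hitting vs annihilators}: it is stated as a formalization of the preceding paragraph (``$\mathscr{C} \cap \Ann{\mathcal{G}} \subseteq \set{0}$'') and then the text moves on. So your proposal is being compared to an absent proof, and your definition-chasing argument is exactly what the paper intends the reader to carry out. Your forward direction is fine, and you have correctly located the only nontrivial point in the reverse direction, namely reconciling ``for all sufficiently large $n$'' in \cref{def:hsg} with the lemma's quantification over $p$-families whose \emph{every} slice is a nonzero annihilator; your suggested fix via padding (\cref{rem:padding}) to turn infinitely many bad indices into a full $p$-family of bad slices, and the observation that the lemma's indexing has $f_n$ genuinely $n$-variate so that $m(n)=n$, are both the right moves.
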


\subsection{The Ideal Proof System}

In this subsection, we recall the Ideal Proof System (IPS) of \textcite{GP18}.
This is a proof system that uses algebraic reasoning to prove that systems of polynomial equations do not admit a solution.
In propositional proof complexity, the boolean axioms $x_i^2 - x_i = 0$ are often incorporated into the definition of a proof system, as they naturally appear when refuting encodings of unsatisfiable boolean formulas.
We will consider the IPS more generally as a proof system for arbitrary systems of polynomial equations, and so we adopt a definition that does not include the boolean axioms by default.

\begin{definition}[Ideal Proof System] \label{def:ips}
    Let $f_1, \ldots, f_m \in \F[x_1, \ldots, x_n]$ be polynomials such that the system of equations $f_1 = \cdots = f_m = 0$ has no solution over $\overline{\F}$, the algebraic closure of $\F$.
    An \emph{Ideal Proof System (IPS) refutation} of $(f_1, \ldots, f_m)$ is a polynomial $r \in \F[x_1, \ldots, x_n, z_1, \ldots, z_m]$ such that
    \begin{enumerate}
        \item
            $r(x_1, \ldots, x_n, f_1(\vec{x}), \ldots, f_m(\vec{x})) = 1$, and
        \item
            $r(x_1, \ldots, x_n, 0, \ldots, 0) = 0$. \qedhere
    \end{enumerate}
\end{definition}

For our purposes, we will be interested in the Geometric Ideal Proof System \cite[Appendix B]{GP18}, a natural subsystem of the IPS.

\begin{definition}[Geometric Ideal Proof System] \label{def:geometric ips}
    Let $f_1, \ldots, f_m \in \F[x_1, \ldots, x_n]$ be polynomials such that the system of equations $f_1 = \cdots = f_m = 0$ has no solution over $\overline{\F}$, the algebraic closure of $\F$.
    A \emph{Geometric Ideal Proof System refutation} of $(f_1, \ldots, f_m)$ is a polynomial $r \in \F[z_1, \ldots, z_m]$ such that
    \begin{enumerate}
        \item
            $r(f_1(\vec{x}), \ldots, f_m(\vec{x})) = 0$, and
        \item
            $r(0, \ldots, 0) = 1$. \qedhere
    \end{enumerate}
\end{definition}

Geometric IPS refutations have a natural interpretation from the viewpoint of algebraic geometry.
Consider the polynomial map $\mathbf{f} : \F^n \to \F^m$ given by $(x_1, \ldots, x_n) \mapsto (f_1(\vec{x}), \ldots, f_m(\vec{x}))$.
Suppose $r \in \F[z_1,\ldots,z_m]$ is a Geometric IPS refutation of $\mathbf{f}$.
The condition $r(f_1(\vec{x}), \ldots, f_m(\vec{x})) = 0$ implies the image of the map $\mathbf{f}$ lies in the \emph{variety} defined by $r$, denoted $\mathbf{V}(r)$, which is the set of all points $\vec{\alpha} \in \F^m$ such that $r(\vec{\alpha}) = 0$.

On the other hand, the second condition $r(0,\ldots,0) = 1$ implies that $(0,\ldots,0)$ is not in the variety $\mathbf{V}(r)$ defined by $r$.
Thus, the hypersurface $\V(r)$ serves as a geometric certificate that the point $(0,\ldots,0)$ lies outside the image $\im(\mathbf{f})$ of the map $\mathbf{f}$.
One can go further, noting that $\V(r)$ is a certificate that $(0,\ldots,0)$ lies outside the closure of $\im(\mathbf{f})$ in the Zariski topology, but we will not dwell on this.

For a polynomial map $\mathbf{f}$, one can view a Geometric IPS refutation $r$ as an annihilator of $\mathbf{f}$ that is additionally required to have a nonzero constant term.
The definition of Geometric IPS specifies that the constant term is 1, but the precise constant is not important.
We further explore the connection between annihilators and Geometric IPS refutations in \cref{sec:geom ips}.

Our focus will be on restricted subsystems of IPS and Geometric IPS.
For a complexity class $\mathscr{C}$, we use $\mathscr{C}$-IPS to refer to IPS with the additional restriction that the refutations are computable in $\mathscr{C}$, and likewise for Geometric $\mathscr{C}$-IPS.
The formal definition appears below.

\begin{definition}[$\mathscr{C}$-IPS, Geometric $\mathscr{C}$-IPS]
    Let $\mathcal{F} = (\mathcal{F}_n)_{n \in \naturals}$ be a family of systems of polynomial equations.
    Let $\mathscr{C}$ be a complexity class.
    We say that \emph{$\mathcal{F}$ can be refuted by $\mathscr{C}$-IPS} if there is a $p$-family $(r_n)_{n \in \naturals} \in \mathscr{C}$ such that $r_n$ is an IPS refutation of $\mathcal{F}_n$.
    If $r_n$ is a Geometric IPS refutation of $\mathcal{F}_n$, we say that \emph{$\mathcal{F}$ can be refuted by Geometric $\mathscr{C}$-IPS}.
\end{definition}

As mentioned in the introduction, we will be particularly interested in proving lower bounds for $\mathscr{C}$-IPS where the hard system of polynomial equations can be computed in some weaker class $\mathscr{D} \subsetneq \mathscr{C}$.
We formalize this with the notion of $\mathscr{D}$-equations below.

\begin{definition}[$\mathscr{D}$-equations] \label{def:d-equation}
    Let $\mathcal{F} = (\mathcal{F}_n)_{n \in \naturals}$ be a family of systems of polynomial equations, where
    \[
        \mathcal{F}_n = \set{f_{n,1}(\vec{x}) = 0, \ldots, f_{n, m(n)}(\vec{x}) = 0}.
    \]
    Let $\mathscr{D}$ be a complexity class.
    We say that $\mathcal{F}$ is a \emph{family of $\mathscr{D}$-equations} if every $p$-family of the form $(f_{n, i_n})_{n \in \naturals}$ is in $\mathscr{D}$.
\end{definition}

\subsection{The Jacobian Criterion}

We will need to compute the transcendence degree of sets of polynomials, which we can do over fields of characteristic zero using the Jacobian Criterion.
To state the Jacobian Criterion, we first need to define the Jacobian of a set of polynomials.

\begin{definition}[Jacobian]
    Let $f_1, \ldots, f_m \in \F[x_1,\ldots,x_n]$.
    The \emph{Jacobian} of $f_1,\ldots,f_m$, denoted $\Jac{f_1,\ldots,f_m}$, is the $m \times n$ matrix whose $(i,j)$ entry is given by
    \[
        \Jac{f_1,\ldots,f_m}_{i,j} \coloneqq \pd{f_i}{x_j}. \qedhere
    \]
\end{definition}

We now state the Jacobian Criterion, which gives a precise characterization of transcendence degree over fields of characteristic zero.

\begin{theorem}[Jacobian Criterion \cite{Jacobi1841}, see \cite{ER93}] \label{thm:jacobian criterion}
    Let $\F$ be a field of characteristic zero.
    Let $f_1,\ldots,f_m \in \F[x_1,\ldots,x_n]$.
    Then the transcendence degree of $\set{f_1,\ldots,f_m}$ satisfies
    \[
        \trdeg_{\F}(f_1,\ldots,f_m) = \rank_{\F(\vec{x})} \Jac{f_1,\ldots,f_m}.
    \]
\end{theorem}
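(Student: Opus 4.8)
The plan is to establish the two inequalities $\rank_{\F(\vec x)}\Jac{f_1,\ldots,f_m} \le \trdeg_{\F}(f_1,\ldots,f_m)$ and $\trdeg_{\F}(f_1,\ldots,f_m) \le \rank_{\F(\vec x)}\Jac{f_1,\ldots,f_m}$ separately. Both directions use the characteristic-zero hypothesis in exactly one place: if $P$ is a nonzero polynomial that genuinely depends on a variable $y$, then over a field of characteristic zero the partial derivative $\pd{P}{y}$ is again a nonzero polynomial whose (total and $y$-)degree is strictly smaller than that of $P$. In positive characteristic this fails, and so does the theorem (e.g. $f = x_1^p$ has vanishing Jacobian but transcendence degree one).

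For the first inequality, write $r = \trdeg_{\F}(f_1,\ldots,f_m)$ and, after relabeling, assume $f_1,\ldots,f_r$ is a transcendence basis of $\set{f_1,\ldots,f_m}$ over $\F$. Since the $i$-th row of $\Jac{f_1,\ldots,f_m}$ is the gradient $\nabla f_i = (\pd{f_i}{x_1},\ldots,\pd{f_i}{x_n})$, it suffices to show every $\nabla f_i$ lies in the $\F(\vec x)$-span of $\nabla f_1,\ldots,\nabla f_r$, a space of dimension at most $r$. This is clear for $i \le r$, so fix $i > r$. As $f_i$ is algebraic over $\F(f_1,\ldots,f_r)$ while $f_1,\ldots,f_r$ are algebraically independent, there is a nonzero $P(y_1,\ldots,y_r,y) \in \F[y_1,\ldots,y_r,y]$ with $P(f_1,\ldots,f_r,f_i) = 0$; choose $P$ with $\deg_y P$ minimal (equivalently, take $P$ irreducible), which forces $\deg_y P \ge 1$. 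The crucial point is that $\pd{P}{y}(f_1,\ldots,f_r,f_i) \neq 0$: by the characteristic-zero observation $\pd{P}{y}$ is a nonzero polynomial of smaller $y$-degree, so if it vanished at $(f_1,\ldots,f_r,f_i)$ it would either violate the minimality of $\deg_y P$ or, when $\deg_y P = 1$, violate the algebraic independence of $f_1,\ldots,f_r$. Differentiating the identity $P(f_1,\ldots,f_r,f_i) = 0$ with respect to $x_j$ and applying the chain rule gives, for every $j \in [n]$,
\[
    \pd{P}{y}(f_1,\ldots,f_r,f_i)\,\pd{f_i}{x_j} \;=\; -\sum_{l=1}^{r} \pd{P}{y_l}(f_1,\ldots,f_r,f_i)\,\pd{f_l}{x_j},
\]
and dividing by the nonzero scalar $\pd{P}{y}(f_1,\ldots,f_r,f_i) \in \F(\vec x)$ exhibits $\nabla f_i$ as an $\F(\vec x)$-linear combination of $\nabla f_1,\ldots,\nabla f_r$.

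For the reverse inequality, let $\rho = \rank_{\F(\vec x)}\Jac{f_1,\ldots,f_m}$ and pick a maximal $\F(\vec x)$-linearly independent set of rows, say $\nabla f_1,\ldots,\nabla f_\rho$ after relabeling; it is enough to prove that $f_1,\ldots,f_\rho$ are algebraically independent over $\F$, since this gives $\trdeg_{\F}(f_1,\ldots,f_m) \ge \rho$. I would prove this by contradiction, choosing a counterexample with the smallest number $k$ of polynomials: suppose $g_1,\ldots,g_k \in \F[\vec x]$ have $\F(\vec x)$-linearly independent gradients yet are algebraically dependent. Minimality of $k$ forces $g_1,\ldots,g_{k-1}$ to be algebraically independent, so a nonzero $P(y_1,\ldots,y_k)$ of minimal total degree with $P(g_1,\ldots,g_k) = 0$ must genuinely involve $y_k$. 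Differentiating $P(g_1,\ldots,g_k) = 0$ in each $x_j$ and using the chain rule shows $\sum_{l=1}^{k} \pd{P}{y_l}(g_1,\ldots,g_k)\,\nabla g_l = 0$ in $\F(\vec x)^n$; linear independence of the $\nabla g_l$ forces every coefficient $\pd{P}{y_l}(g_1,\ldots,g_k)$ to vanish, in particular $\pd{P}{y_k}(g_1,\ldots,g_k) = 0$. But by the characteristic-zero observation $\pd{P}{y_k}$ is a nonzero polynomial of strictly smaller total degree that annihilates $g_1,\ldots,g_k$, contradicting the minimality of $\deg P$. This proves the claim, and hence the theorem.

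I expect the one genuinely delicate step to be the nonvanishing of the leading partial derivative $\pd{P}{y}$ at the point $(f_1,\ldots,f_r,f_i)$ in the first inequality. The cleanest way to see it is to take $P$ irreducible, so that $\abr{P}$ is the (height-one, hence principal) prime kernel of the substitution $y_l \mapsto f_l$, $y \mapsto f_i$, and then to note that $\pd{P}{y}$ has smaller degree than $P$ and is nonzero by characteristic zero, hence cannot lie in $\abr{P}$. Everything else is a routine application of the chain rule together with degree bookkeeping, and the characteristic-zero hypothesis is used only — but essentially — to guarantee that differentiation does not annihilate a polynomial that depends on the variable in question.
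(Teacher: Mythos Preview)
The paper does not prove this theorem; it is stated in the preliminaries with a citation to Jacobi and to the survey of Ehrenborg--Rota, and is used as a black box. So there is no ``paper's own proof'' to compare against.

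Your argument is the standard one and is correct. Two minor remarks. First, in the second inequality the minimal-$k$ scaffolding is not really needed: once you pick $P$ of minimal total degree annihilating $g_1,\ldots,g_\rho$, the chain rule plus linear independence of the gradients already forces every $\pd{P}{y_l}(g_1,\ldots,g_\rho)=0$, and since $P$ is nonzero and $\mathrm{char}\,\F=0$ at least one $\pd{P}{y_l}$ is a nonzero polynomial of strictly smaller degree, giving the contradiction directly. Second, your ``delicate step'' in the first inequality is handled cleanly by your own minimality-of-$\deg_y P$ argument; the alternative phrasing via the prime kernel $\abr{P}$ is correct but unnecessary. The only place characteristic zero enters, as you say, is to ensure that a partial derivative of a nonconstant polynomial in a given variable is nonzero, and your proof uses exactly that and nothing more.
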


As the following proposition shows, the rank of the Jacobian still provides a lower bound on the transcendence degree of a set of polynomials when working over a field of positive characteristic.

\begin{proposition}[{see \cite[Section 3]{DGW09}}] \label{prop:jacobian arbitrary field}
    Let $\F$ be an arbitrary field.
    Let $f_1, \ldots, f_m \in \F[x_1,\ldots,x_n]$.
    Then the transcendence degree of $\set{f_1,\ldots,f_m}$ satisfies
    \[
        \trdeg_{\F}(f_1,\ldots,f_m) \ge \rank_{\F(\vec{x})} \Jac{f_1,\ldots,f_m}.
    \]
\end{proposition}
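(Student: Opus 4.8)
The plan is to reduce the proposition to the ``easy direction'' of the Jacobian criterion and to prove that by the chain rule. Since the rank of a matrix over $\F(\vec{x})$ equals the maximum number of its linearly independent rows, and since the $k$\ts{th} row of $\Jac{f_{i_1},\ldots,f_{i_r}}$ is precisely row $i_k$ of $\Jac{f_1,\ldots,f_m}$, it suffices to prove the following lemma: if $g_1,\ldots,g_\ell \in \F[\vec{x}]$ are algebraically dependent over $\F$, then their gradients $\nabla g_1, \ldots, \nabla g_\ell$ --- that is, the rows of $\Jac{g_1,\ldots,g_\ell}$ --- are linearly dependent over $\F(\vec{x})$. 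Indeed, setting $r \coloneqq \rank_{\F(\vec{x})} \Jac{f_1,\ldots,f_m}$ and choosing $r$ linearly independent rows $i_1,\ldots,i_r$, the contrapositive of the lemma shows that $f_{i_1},\ldots,f_{i_r}$ are algebraically independent over $\F$, whence $\trdeg_\F(f_1,\ldots,f_m) \ge r$. Before proving the lemma I would note that neither side of the claimed inequality changes when $\F$ is replaced by its algebraic closure --- transcendence degree is insensitive to algebraic extensions, and matrix rank is insensitive to field extensions --- so I may assume $\F$ is algebraically closed, and in particular perfect.

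To prove the lemma, choose a nonzero $A \in \F[y_1,\ldots,y_\ell]$ of minimal total degree with $A(g_1,\ldots,g_\ell) = 0$; such an $A$ exists because the $g_i$ are dependent, and it is non-constant because a nonzero constant does not annihilate anything. Differentiating the identity $A(\vec{g}) = 0$ with respect to $x_j$ via the chain rule gives $\sum_{i=1}^{\ell} (\partial A / \partial y_i)(\vec{g}) \cdot (\partial g_i / \partial x_j) = 0$ for every $j$, so the vector $v \coloneqq \left( (\partial A / \partial y_i)(\vec{g}) \right)_{i=1}^{\ell} \in \F(\vec{x})^\ell$ lies in the left kernel of $\Jac{g_1,\ldots,g_\ell}$; if $v \neq 0$ the rows are linearly dependent, as wanted. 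To rule out $v = 0$: if $(\partial A / \partial y_i)(\vec{g}) = 0$ for all $i$, then since each $\partial A / \partial y_i$ has total degree strictly less than $\deg A$, minimality of $\deg A$ forces $\partial A / \partial y_i = 0$ in $\F[\vec{y}]$ for every $i$. In characteristic zero this makes $A$ a nonzero constant, contradicting $A(\vec{g}) = 0$. In characteristic $p$ it forces $A \in \F[y_1^p,\ldots,y_\ell^p]$, and since $\F$ is perfect we may extract $p$\ts{th} roots of the coefficients of $A$ to write $A = B^p$ for some $B \in \F[\vec{y}]$ with $\deg B = \deg A / p < \deg A$; then $B(\vec{g})^p = A(\vec{g}) = 0$ forces $B(\vec{g}) = 0$, again contradicting minimality. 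Hence $v \neq 0$, proving the lemma.

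The step that requires the most care is the positive-characteristic case of the lemma: a priori the minimal annihilator $A$ could be a polynomial in $y_1^p,\ldots,y_\ell^p$ with all partial derivatives identically zero, and it is precisely the reduction to an algebraically closed (hence perfect) base field that supplies the Frobenius factorization $A = B^p$ needed to rule this out. Everything else is a routine computation with the chain rule and basic linear algebra.
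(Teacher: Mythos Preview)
Your proof is correct. Note, however, that the paper does not actually prove this proposition: it is stated with a citation to \cite[Section 3]{DGW09} and used as a black box. Your argument is precisely the standard one found in that reference (and elsewhere): take a minimal-degree annihilator, differentiate via the chain rule to produce a left-kernel vector for the Jacobian, and use minimality to show that vector is nonzero, with the perfect-field Frobenius trick handling characteristic $p$. One small remark: for the reduction to algebraically closed $\F$ you only need the trivial direction---algebraic independence over $\overline{\F}$ implies algebraic independence over $\F$---so the claim that ``transcendence degree is insensitive to algebraic extensions,'' while true, is stronger than necessary.
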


\subsection{The Resultant}

We now define the resultant, a useful tool in polynomial factorization and elimination theory.

\begin{definition}[Resultant] \label{def:resultant}
    Let $\F$ be a field.
	Let $f(x) = \sum_{i=0}^n f_i x^i$ and $g(x) = \sum_{i=0}^m g_i x^i$ be univariate polynomials in $\F[x]$ of degrees $n$ and $m$, respectively.
    The \emph{resultant} of $f$ and $g$, denoted by $\res(f,g)$, is given by
	\[
		\res(f,g) \coloneqq \det 
        \begin{pmatrix}
			f_n & & & & g_m & & & \\
			f_{n-1} & f_n & & & g_{m-1} & g_m & & \\
			\vdots & \vdots & \ddots & & \vdots & \vdots & \ddots & \\
			\vdots & \vdots & & f_n & g_1 & \vdots & & g_m \\
			f_0 & \vdots & & f_{n-1} & g_0 & \vdots & & \vdots \\
			& f_0 & & \vdots & & g_0 & & \vdots \\
			& & \ddots & \vdots & & & \ddots & \vdots \\
			& & & f_0 & & & & g_0 
		\end{pmatrix},
	\]
    where the matrix above is an $(n+m) \times (n+m)$ matrix whose first $m$ columns are formed from the coefficients of $f$ and whose last $n$ columns are formed from the coefficients of $g$.
\end{definition}

As the following lemma shows, the resultant can be used to check if two polynomials share a common factor.

\begin{lemma}[{see, e.g., \cite[Corollary 6.17]{vzGG13}}] \label{lem:resultant common factor}
    Let $\F$ be a field and let $f, g \in \F[x]$.
    Then $\res(f,g) = 0$ if and only if $f$ and $g$ have a nontrivial common factor in $\F[x]$.
\end{lemma}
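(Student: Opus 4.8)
The plan is to deduce both directions of the equivalence from the standard identification of $\res(f,g)$ with (up to sign) the determinant of the \emph{Sylvester map}. Writing $\F[x]_{<k}$ for the $\F$-vector space of polynomials of degree less than $k$, I would work with the linear map
\[
    \Phi \colon \F[x]_{<m} \times \F[x]_{<n} \to \F[x]_{<n+m}, \qquad \Phi(u,v) = uf + vg .
\]
In the monomial bases $1, x, \ldots, x^{m-1}$ and $1, x, \ldots, x^{n-1}$ on the two factors of the domain and $1, x, \ldots, x^{n+m-1}$ on the codomain, the columns of the matrix of $\Phi$ are precisely the coordinate vectors of $x^i f$ for $0 \le i < m$ and of $x^j g$ for $0 \le j < n$; up to a reordering of basis elements (hence up to sign) this is exactly the matrix appearing in \cref{def:resultant}. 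Since the domain and codomain both have dimension $n+m$, it follows that $\res(f,g) = \pm \det \Phi$, so $\res(f,g) = 0$ if and only if $\ker \Phi \ne 0$. I would dispose of the degenerate cases up front: if one of $f, g$ is zero or of degree $0$, both sides of the claimed equivalence can be read off directly, so we may assume $f$ and $g$ are nonzero of positive degree.

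For the ``if'' direction, given a common factor $h$ of $f$ and $g$ with $\deg h \ge 1$, I would write $f = h f_1$ and $g = h g_1$ and observe that $\deg f_1 < n$ and $\deg g_1 < m$, so that $(g_1, -f_1)$ is a genuine element of the domain of $\Phi$; it is nonzero since $g \ne 0$ forces $g_1 \ne 0$, and $\Phi(g_1, -f_1) = g_1 h f_1 - f_1 h g_1 = 0$, whence $\ker \Phi \ne 0$ and $\res(f,g) = 0$.

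For the ``only if'' direction, I would start from a nonzero $(u,v) \in \ker \Phi$, so that $uf = -vg$ with $\deg u < m$ and $\deg v < n$, and note that $u, v \ne 0$ because $f, g \ne 0$. Assuming for contradiction that $\gcd(f,g) = 1$, the relation $g \mid uf$ together with coprimality and unique factorization in $\F[x]$ forces $g \mid u$, which is impossible for the nonzero polynomial $u$ of degree $< m = \deg g$. Hence $\gcd(f,g)$ is a non-unit, i.e., $f$ and $g$ share a nontrivial common factor.

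There is no real obstacle here: the only places requiring care are the bookkeeping that identifies the matrix in \cref{def:resultant} with the matrix of $\Phi$ up to sign and basis reordering, and the handling of the degenerate cases; the actual mathematical content is the unique-factorization argument of the last paragraph, which is entirely standard.
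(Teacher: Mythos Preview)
Your argument is correct. The paper does not prove this lemma at all; it merely states it with a citation to \cite[Corollary 6.17]{vzGG13} and moves on, so there is no ``paper's own proof'' to compare against. Your approach via the Sylvester map $\Phi(u,v)=uf+vg$ and unique factorization in $\F[x]$ is exactly the standard textbook proof that the citation points to.
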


We will also make use of resultants of multivariate polynomials.
If $f, g \in \F[x_1, \ldots, x_n, y]$ are multivariate polynomials, we may regard them instead as univariate polynomials in $y$ with coefficients that are themselves polynomials in $x_1,\ldots,x_n$.
We write $\res_y(f(\vec{x},y), g(\vec{x}, y))$ for the corresponding resultant, which is a polynomial in $x_1, \ldots, x_n$.
The resultant $\res_y(f, g)$ can be used to test if $f$ and $g$ share a common factor.
However, this resultant only detects common factors that depend on the variable $y$.

\begin{lemma}[{see, e.g., \cite[Corollary 6.20]{vzGG13}}] \label{lem:resultant common factor multivariate}
    Let $\F$ be a field and let $f, g \in \F[x_1, \ldots, x_n, y]$.
    Then $\res_y(f,g) = 0$ if and only if $f$ and $g$ have a nontrivial common factor $h \in \F[x_1, \ldots, x_n, y]$ that depends on the variable $y$.
\end{lemma}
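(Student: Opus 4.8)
The plan is to reduce the statement to the univariate \cref{lem:resultant common factor} by passing to the field of fractions $K \coloneqq \F(x_1,\ldots,x_n)$ and then translating common factors between $K[y]$ and $\F[\vec{x},y]$ via Gauss's lemma. Throughout I would assume that both $f$ and $g$ genuinely involve $y$; the degenerate cases (one of them zero, or lying in $\F[\vec{x}]$) are handled by inspection, using that if $f \in \F[\vec{x}] \setminus \set{0}$ then $\res_y(f,g) = f^{\deg_y g} \neq 0$ while no $y$-depending factor of $g$ can divide $f$.

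First I would note that the Sylvester matrix of \cref{def:resultant}, built from $f$ and $g$ regarded as polynomials in $y$ of their respective $y$-degrees, has all entries in $\F[\vec{x}] \subseteq K$, so its determinant $\res_y(f,g)$ is the same element of $\F[\vec{x}]$ whether computed over $\F[\vec{x}]$ or over $K$. Viewing $f, g \in K[y]$ and invoking \cref{lem:resultant common factor} over the field $K$, we obtain: $\res_y(f,g) = 0$ if and only if $f$ and $g$ have a nontrivial common factor in $K[y]$. It then remains to show that this is equivalent to $f$ and $g$ having a nontrivial common factor $h \in \F[\vec{x},y]$ with $\deg_y h \ge 1$. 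The direction $(\Leftarrow)$ is immediate: such an $h$ has positive $y$-degree, hence is a non-unit of $K[y]$, so it is a nontrivial common factor there. For $(\Rightarrow)$, take a nontrivial common factor $\tilde h \in K[y]$ of $f$ and $g$; since the units of $K[y]$ are exactly $K^\times$, necessarily $\deg_y \tilde h \ge 1$. As $\F$ is a field, $\F[x_1,\ldots,x_n]$ is a UFD, so clearing denominators and pulling out content we may write $\tilde h = c(\vec{x}) \, h(\vec{x},y)$ with $c \in K^\times$ and $h \in \F[\vec{x},y]$ primitive in $y$ over $\F[\vec{x}]$, and $\deg_y h = \deg_y \tilde h \ge 1$. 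Then $h \mid f$ and $h \mid g$ in $K[y]$, and since $h$ is primitive, Gauss's lemma upgrades these to $h \mid f$ and $h \mid g$ in $\F[\vec{x},y]$. Because $\deg_y h \ge 1$, the polynomial $h$ is a non-unit of $\F[\vec{x},y]$ and depends on $y$, as required.

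The only delicate point is the bookkeeping around Gauss's lemma in several variables together with the observation that any common factor living in $K[y]$ is automatically forced to have positive $y$-degree --- which is precisely what guarantees that the factor extracted in $\F[\vec{x},y]$ genuinely depends on $y$, matching the conclusion of the lemma. The rest is routine.
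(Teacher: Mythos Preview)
The paper does not prove this lemma; it is stated as a preliminary fact with a citation to a textbook (\cite[Corollary 6.20]{vzGG13}) and no argument is given. So there is no ``paper's proof'' to compare your proposal against.

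Your argument itself is correct and is exactly the standard one: pass to $K = \F(x_1,\ldots,x_n)$, note that the Sylvester matrix and hence $\res_y(f,g)$ are unchanged, apply the univariate criterion (\cref{lem:resultant common factor}) over $K$, and then use Gauss's lemma over the UFD $\F[x_1,\ldots,x_n]$ to move a common factor of positive $y$-degree from $K[y]$ back into $\F[\vec{x},y]$. Your handling of the degenerate case $f \in \F[\vec{x}]\setminus\{0\}$ via $\res_y(f,g)=f^{\deg_y g}\neq 0$ is also correct and consistent with \cref{def:resultant}. The observation that a nontrivial factor in $K[y]$ automatically has $\deg_y \ge 1$ (since units of $K[y]$ are $K^\times$) is the key point ensuring the resulting $h$ depends on $y$, and you identify it clearly. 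Nothing is missing.
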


Although it is immediate from the definition that $\res_y(f, g)$ does not depend on the variable $y$, the following proposition says that the resultant eliminates $y$ from $f$ and $g$ in a structured manner.

\begin{proposition}[{see, e.g., \cite[Corollary 6.21]{vzGG13}}] \label{lem:resultant elimination}
    Let $\F$ be a field and let $f, g \in \F[x_1, \ldots, x_n, y]$.
    Then the resultant $\res_y(f(\vec{x}, y), g(\vec{x}, y))$ is an element of the elimination ideal $\abr{f, g} \cap \F[x_1, \ldots, x_n]$.
\end{proposition}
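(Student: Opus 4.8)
The plan is to prove the apparently stronger claim that $\res_y(f,g)$ lies in the ideal $\abr{f,g} \subseteq \F[x_1,\ldots,x_n,y]$. This suffices: by \cref{def:resultant} the resultant is a polynomial in the coefficients $f_i,g_i \in \F[x_1,\ldots,x_n]$ and therefore does not involve $y$ at all, so membership in $\abr{f,g}$ automatically yields membership in the elimination ideal $\abr{f,g} \cap \F[x_1,\ldots,x_n]$.

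Write $n = \deg_y f$ and $m = \deg_y g$, so $f = \sum_{i=0}^{n} f_i(\vec{x})\,y^i$ and $g = \sum_{i=0}^{m} g_i(\vec{x})\,y^i$, and let $S$ be the $(n+m)\times(n+m)$ Sylvester matrix of \cref{def:resultant}, so that $\res_y(f,g) = \det S$. The structural fact I would use is the standard interpretation of $S$ by columns: with respect to the monomial basis $(y^{n+m-1},\ldots,y,1)$, column $j$ of $S$ is the coefficient vector of $y^{m-j}f$ for $1 \le j \le m$, and of $y^{m+n-j}g$ for $m+1 \le j \le m+n$. Equivalently, if $\mathbf{v} = (y^{n+m-1},\ldots,y,1)$ is the row vector of these monomials, then
\[
    \mathbf{v}\,S = \bigl(\,y^{m-1}f,\ y^{m-2}f,\ \ldots,\ f,\ y^{n-1}g,\ \ldots,\ g\,\bigr),
\]
an identity of row vectors all of whose entries lie in $\abr{f,g}$.

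Next I would add, for $k = 1,\ldots,n+m-1$, the polynomial $y^{n+m-k}$ times row $k$ of $S$ to the last row of $S$; this elementary operation leaves $\det S$ unchanged and, by the displayed identity, turns the last row into $\bigl(y^{m-1}f,\ldots,f,\ y^{n-1}g,\ldots,g\bigr)$. Call the new matrix $\tilde S$, so $\det \tilde S = \res_y(f,g)$, the first $n+m-1$ rows of $\tilde S$ still have entries in $\F[x_1,\ldots,x_n]$, and only the last row involves $y$. Expanding $\det \tilde S$ along its last row yields
\[
    \res_y(f,g) \;=\; \sum_{j=1}^{m} y^{m-j} f \cdot C_j \;+\; \sum_{j=1}^{n} y^{n-j} g \cdot C_{m+j} \;=\; a(\vec{x},y)\,f + b(\vec{x},y)\,g,
\]
where each cofactor $C_j$ is a signed $(n+m-1)\times(n+m-1)$ minor of $\tilde S$ taken from its first $n+m-1$ rows, hence lies in $\F[x_1,\ldots,x_n]$, and $a \coloneqq \sum_{j=1}^{m} y^{m-j}C_j$ and $b \coloneqq \sum_{j=1}^{n} y^{n-j}C_{m+j}$ are polynomials in $\F[x_1,\ldots,x_n,y]$ (of $y$-degree less than $m$ and $n$ respectively, which also recovers the classical resultant identity). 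This shows $\res_y(f,g) \in \abr{f,g}$, and hence $\res_y(f,g) \in \abr{f,g} \cap \F[x_1,\ldots,x_n]$.

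I do not expect a genuine obstacle in this argument: it is the classical resultant identity, morally equivalent to the adjugate identity $\mathrm{adj}(S)\,S = (\det S)\,I$ read off against the monomial vector $\mathbf{v}$. The only point requiring care is bookkeeping — matching the precise ordering of rows and columns, and the attendant signs, in the Sylvester matrix of \cref{def:resultant}, so that the column interpretation and the cofactor expansion are written with the correct powers of $y$. The degenerate cases in which $f$ or $g$ has $y$-degree zero can be checked directly.
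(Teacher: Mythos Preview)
Your argument is correct: it is the classical derivation of the resultant identity $\res_y(f,g) = a f + b g$ via row operations on the Sylvester matrix (equivalently, the adjugate identity), and the observation that $\res_y(f,g)$ is built from the $y$-coefficients of $f$ and $g$ and hence lies in $\F[x_1,\ldots,x_n]$ is exactly right. The bookkeeping you flag (column interpretation, signs, degenerate cases) is routine.

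There is nothing to compare against: the paper does not prove this proposition but states it as a standard preliminary fact with a citation to \cite[Corollary 6.21]{vzGG13}. Your proof is essentially the one found in that reference.
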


\section{A $\VNC^0$-Computable Generator} \label{sec:generator construction}

In this section, we describe and analyze a construction of a $\VNC^0$-computable hitting set generator $\mathcal{G}$.
The generator $\mathcal{G}$ is defined using an arithmetic circuit $\Phi$ that computes a polynomial $f(\vec{x})$.
We will obtain a complete description of the annihilator ideal $\Ann{\mathcal{G}}$: the ideal $\Ann{\mathcal{G}}$ is principal and is generated by a polynomial that is closely related to the polynomial $f(\vec{x})$ used to define the map $\mathcal{G}$.
This structure allows us to infer pseudorandom properties of $\mathcal{G}$ from hardness of multiples of $f(\vec{x})$.

\subsection{Local Encodings of Circuits}

We begin by describing the construction of our generator.

\begin{definition}[Local encoding] \label{def:local encoding} \label{cons:generator}
    Let $\Phi$ be an $n$-variate arithmetic circuit of fan-in two and size $s$.
    Let $\vec{\alpha} \in \F^n$ and let $\beta \in \F$.
    The \emph{local encoding of $\Phi(\vec{\alpha}) = \beta$} is the polynomial map $\mathcal{G} : \F^{n+s} \to \F^{n+s+1}$ defined by 
    \[
        \mathcal{G}(x_1,\ldots,x_n,y_1,\ldots,y_s) \coloneqq (\mathcal{G}_{\text{input}}(\vec{x}, \vec{y}), \mathcal{G}_{\text{internal}}(\vec{x},\vec{y}), \mathcal{G}_{\text{output}}(\vec{x}, \vec{y})),
    \]
    where each of the polynomial maps $\mathcal{G}_{\text{input}}$, $\mathcal{G}_{\text{internal}}$, and $\mathcal{G}_{\text{output}}$ are defined below.
    \begin{enumerate}
        \item 
            The polynomial map $\mathcal{G}_{\text{input}} : \F^{n+s} \to \F^n$ is given by
            \[
                \mathcal{G}_{\mathrm{input}}(\vec{x}, \vec{y}) \coloneqq (x_1 - \alpha_1, \ldots, x_n - \alpha_n).
            \]
        \item
            The map $\mathcal{G}_{\text{internal}} : \F^{n+s} \to \F^s$ is defined as follows.
            Let $V$ be the set of gates of $\Phi$.
            Let $v_1, \ldots, v_s$ be a topological ordering of the internal gates of $\Phi$, where $v_s$ is the output gate of $\Phi$.
            Define the function $L : V \to \F \cup \set{x_1, \ldots, x_n, y_1, \ldots, y_s}$ via
            \[
                L(v) \coloneqq
                \begin{cases}
                    \gamma & \text{if $v$ is an input gate labeled by a constant $\gamma \in \F$,} \\
                    x_i & \text{if $v$ is an input gate labeled by the variable $x_i$,} \\
                    y_i & \text{if $v$ is the $i$\ts{th} internal gate in topological order.}
                \end{cases}
            \]
            The $i$\ts{th} output of $\mathcal{G}_{\mathrm{internal}}$ is defined in terms of the $i$\ts{th} internal gate $v_i$ and its children $u$ and $w$.
            \begin{itemize}
                \item 
                    If $v_i$ is an addition gate, then the $i$\ts{th} output of $\mathcal{G}_{\mathrm{internal}}$ is the polynomial $L(v_i) - (L(u) + L(w))$.
                \item 
                    If $v_i$ is a product gate, then the $i$\ts{th} output of $\mathcal{G}_{\mathrm{internal}}$ is the polynomial $L(v_i) - L(u) L(w)$.
            \end{itemize}
        \item
            The polynomial map $\mathcal{G}_{\text{output}} : \F^{n+s} \to \F$ is given by
            \[
                \mathcal{G}_{\mathrm{output}}(\vec{x}, \vec{y}) \coloneqq y_s - \beta. \qedhere
            \]
    \end{enumerate}
\end{definition}

The name ``local encoding'' arises from the fact that the local encoding $\mathcal{G}$ of $\Phi(\vec{\alpha}) = \beta$ corresponds to a system of polynomial equations that is satisfiable if and only if the circuit $\Phi$ outputs $\beta$ when evaluated at $\vec{\alpha}$.
The corresponding system of polynomial equations is $\mathcal{G}(\vec{x}, \vec{y}) = (0, \ldots, 0)$.
The first $n$ equations $\mathcal{G}_{\text{input}}(\vec{x}, \vec{y}) = (0, \ldots, 0)$ ensure that the input to the circuit $\Phi$ is the point $\vec{\alpha}$.
The next $s$ equations $\mathcal{G}_{\text{internal}}(\vec{x}, \vec{y}) = (0, \ldots, 0)$ enforce that for each $i \in [s]$, the value of the variable $y_i$ equals the value of the $i$\ts{th} internal gate when $\Phi$ is evaluated at $\vec{\alpha}$.
Finally, the equation $\mathcal{G}_{\text{output}}(\vec{x}, \vec{y}) = 0$ expresses that the output gate of $\Phi$ evaluates to $\beta$.

The following lemma describes the parameters of a local encoding, which follow immediately from \cref{def:local encoding}.

\begin{lemma} \label{lem:generator parameters}
    Let $\Phi$ be an $n$-variate arithmetic circuit of size $s$ and fan-in two, and let $\vec{\alpha} \in \F^n$ and $\beta \in \F$.
    Let $\mathcal{G} : \F^{n+s} \to \F^{n+s+1}$ be the local encoding of $\Phi(\vec{\alpha}) = \beta$.
    Then the following hold.
    \begin{enumerate}
        \item 
            The seed length of $\mathcal{G}$ is $n+s$ and the stretch of $\mathcal{G}$ is $1$.
        \item 
            The degree of $\mathcal{G}$ is $2$.
        \item 
            Every output of $\mathcal{G}$ can be computed by an arithmetic formula of size $2$.
    \end{enumerate}
\end{lemma}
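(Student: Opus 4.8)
The plan is to read off all three assertions directly from \cref{def:local encoding}, handling the three blocks of outputs $\mathcal{G}_{\mathrm{input}}$, $\mathcal{G}_{\mathrm{internal}}$, and $\mathcal{G}_{\mathrm{output}}$ separately. For the seed length and stretch: the input variables of $\mathcal{G}$ are precisely $x_1, \dots, x_n, y_1, \dots, y_s$, so the seed length is $n+s$; the outputs consist of the $n$ polynomials of $\mathcal{G}_{\mathrm{input}}$, the $s$ polynomials of $\mathcal{G}_{\mathrm{internal}}$ (one per internal gate of $\Phi$), and the single polynomial of $\mathcal{G}_{\mathrm{output}}$, so the output length is $n+s+1$ and the stretch is $(n+s+1) - (n+s) = 1$.

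For the degree and formula size I would argue output by output. The $\mathcal{G}_{\mathrm{input}}$ outputs $x_i - \alpha_i$ and the $\mathcal{G}_{\mathrm{output}}$ output $y_s - \beta$ are affine, hence of degree $1$ and computable by a formula of size at most $2$. Each $\mathcal{G}_{\mathrm{internal}}$ output is either $L(v_i) - (L(u) + L(w))$ or $L(v_i) - L(u)L(w)$, and since every value of $L$ is a single variable or a field constant, the addition-gate case has degree at most $1$ while the product-gate case has degree at most $2$; moreover the product-gate case is computed by a formula with one product gate for $L(u)L(w)$ feeding a subtraction gate, and the addition-gate case by an analogous tree of two internal gates, so both have size at most $2$. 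Taking the maximum over all outputs yields degree $2$ (attained by any product gate neither of whose children is a constant, which one may assume after constant-folding $\Phi$) and formula size $2$, as claimed.

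There is no real obstacle here; the only thing to pin down is the bookkeeping of the circuit model, namely the convention under which subtracting a term or a constant is charged (e.g., allowing addition gates to compute $\F$-affine combinations of their inputs, so that $L(v_i) - (L(u) + L(w))$ and expressions of the form $a - b$ cost no extra gate). Once that convention is fixed consistently with \cref{def:local encoding}, the internal-gate outputs land at formula size exactly $2$, and all three parts of the lemma follow by direct inspection.
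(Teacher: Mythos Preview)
Your proposal is correct and matches the paper's approach exactly: the paper does not give a written proof at all, stating only that the claims ``follow immediately from \cref{def:local encoding},'' which is precisely the output-by-output inspection you carry out. Your remark about the gate-counting convention (treating subtraction/affine combinations as a single gate) is the only point requiring care, and once fixed as you describe the bound of $2$ internal gates per output is immediate.
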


In the subsections to come, we determine the annihilator ideals of local encodings and use this to infer pseudorandom properties of local encodings when the circuit $\Phi$ computes a sufficiently-hard polynomial.
Before moving on to these general results, we first work out a small, concrete example to build intuition for how annihilators of local encodings behave.

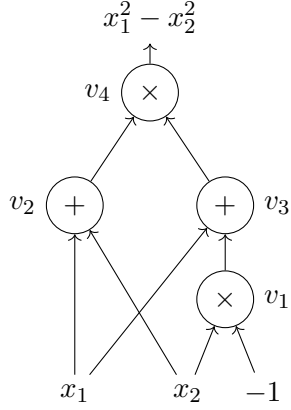
\begin{figure}
    \begin{center}
        \begin{tikzpicture}
            \tikzstyle{gate}=[circle,draw=black]

            \node (i1) at (-1, 0) {$x_1$};
            \node (i2) at (0.5, 0) {$x_2$};
            \node (i3) at (1.5, 0) {$-1$};

            \node[gate, label=right:$v_1$] (v1) at (1, 1.25) {$\times$};
            \node[gate, label=left:$v_2$] (v2) at (-1, 2.5) {$+$};
            \node[gate, label=right:$v_3$] (v3) at (1, 2.5) {$+$};
            \node[gate, label=left:$v_4$] (v4) at (0, 4) {$\times$};
            \node (output) at (0,5) {$x_1^2 - x_2^2$};
            
            \path[->] (i1) 
                edge (v2) 
                edge (v3);
            \path[->] (i2)
                edge (v1)
                edge (v2);
            \path[->] (i3) edge (v1);
            \path[->] (v2) edge (v4);
            \path[->] (v1) edge (v3);
            \path[->] (v3) edge (v4);
            \path[->] (v4) edge (output);
        \end{tikzpicture}
    \end{center}
    \caption{An arithmetic circuit computing the polynomial $x_1^2 - x_2^2$.}
    \label{fig:local encoding example}
\end{figure}

\begin{example} \label{ex:local encoding}
    Let $\Phi$ be the arithmetic circuit depicted in \Cref{fig:local encoding example} that computes the polynomial $x_1^2 - x_2^2$.
    The internal gates of $\Phi$ are labeled as $v_1$, $v_2$, $v_3$, and $v_4$, corresponding to a topological ordering of the internal gates.
    For $\alpha_1, \alpha_2, \beta \in \F$, the local encoding of $\Phi(\alpha_1, \alpha_2) = \beta$ is the polynomial map $\mathcal{G} : \F^6 \to \F^7$ given by
    \[
        \mathcal{G}(x_1,x_2,y_1,y_2,y_3,y_4) = (x_1 - \alpha_1, x_2 - \alpha_2, y_1 + x_2, y_2 - x_1 - x_2, y_3 - x_1 - y_1, y_4 - y_2 y_3, y_4 - \beta).
    \]

    What do the annihilators of $\mathcal{G}$ look like?
    One annihilator can be constructed by writing $y_4$ as a polynomial combination of the first $6$ outputs of $\mathcal{G}$ and then taking the difference of this polynomial and $z_7+\beta$.
    To do this, we iteratively find polynomials $h_i \in \F[z_1,\ldots,z_7]$ so that $(h_i \circ \mathcal{G})(\vec{x}, \vec{y}) = y_i$.
    The desired polynomials $h_1$, $h_2$, $h_3$, and $h_4$ are given by
    \begin{align*}
        h_1(\vec{z}) &= z_3 - (z_2 + \alpha_2) \\
        h_2(\vec{z}) &= z_4 + (z_1 + \alpha_1) + (z_2 + \alpha_2) \\
        h_3(\vec{z}) &= z_5 + (z_1 + \alpha_1) + h_1(\vec{z}) \\
        h_4(\vec{z}) &= z_6 + h_2(\vec{z}) h_3(\vec{z}).
    \end{align*}
    Because $(h_4 \circ \mathcal{G})(\vec{x}, \vec{y}) = y_4$, it follows that 
    \[
        h(\vec{z}) \coloneqq h_4(\vec{z}) - (z_7 + \beta)
    \]
    is a nonzero annihilator of $\mathcal{G}$.
    Because $\Ann{\mathcal{G}}$ is an ideal, every multiple of $h$ is also an annihilator of $\mathcal{G}$.
    Using computer software, such as Macaulay2, one can verify that these are the only annihilators of $\mathcal{G}$: the ideal $\Ann{\mathcal{G}}$ is precisely the principal ideal generated by $h$.

    To prove that local encodings are pseudorandom, we need to understand the complexity of their annihilators.
    As a first step towards this, let's understand how the complexity of $h$ relates to the complexity of $x_1^2 - x_2^2$, the polynomial computed by $\Phi$.
    Expanding out $h(\vec{z})$ as
    \begin{multline*}
        h(\vec{z}) = z_1^2 - z_2^2 + z_1 z_3 + z_2 z_3 + z_1 z_4 - z_2 z_4 + z_3 z_4 + z_1 z_5 + z_2 z_5 + z_4 z_5 + 2 \alpha_1 z_1 - 2 \alpha_2 z_2 \\ + (\alpha_1 + \alpha_2)z_3 + (\alpha_1 - \alpha_2) z_4 + (\alpha_1 + \alpha_2) z_5 + z_6 - z_7 + \alpha_1^2 - \alpha_2^2 - \beta,
    \end{multline*}
    we rewrite $h(\vec{z})$ as
    \[
        h(\vec{z}) = \del{(z_1+\alpha_1)^2 - (z_2 + \alpha_2)^2} + g(\vec{z}) - z_7 - \beta,
    \]
    where
    \begin{multline*}
        g(\vec{z}) \coloneqq z_1 z_3 + z_2 z_3 + z_1 z_4 - z_2 z_4 + z_3 z_4 + z_1 z_5 + z_2 z_5 + z_4 z_5 \\ + (\alpha_1 + \alpha_2)z_3 + (\alpha_1 - \alpha_2) z_4 + (\alpha_1 + \alpha_2) z_5 + z_6.
    \end{multline*}
    Importantly, the polynomial $g$ is an element of the ideal $\abr{z_3,z_4,z_5,z_6} \subseteq \F[\vec{z}]$.
    In other words, every monomial of $g$ is divisible by one of the variables $z_3$, $z_4$, $z_5$, or $z_6$.
    By setting $z_3 = \cdots = z_7 = 0$, we see that $h$ projects to $\del{(z_1 + \alpha)^2 - (z_2 + \alpha_2)^2} - \beta$.
    Applying the change of variables $(z_1, z_2) \mapsto (z_1 - \alpha_1, z_2 - \alpha_2)$, this projection of $h$ becomes $\del{z_1^2 - z_2^2} - \beta$.
    Adding $\beta$ yields $z_1^2 - z_2^2$, the polynomial computed by the circuit $\Phi$.

    By zeroing out some variables, shifting other variables by a constant, and adding an appropriate constant, the generator $h$ is transformed to $z_1^2 - z_2^2$, the polynomial that the circuit $\Phi$ computes.
    This transformation of $h$ has low complexity, so a small circuit that computes $h$ can be used to compute $z_1^2 - z_2^2$ with similar complexity.
    In the contrapositive, a lower bound on the complexity of $z_1^2 - z_2^2$ implies a comparable lower bound on $h$.
    This reduction is not particularly useful for the specific example at hand, as the polynomial $z_1^2-z_2^2$ is easy to compute.

    This example illustrates some features of the general case.
    The ideal $\Ann{\mathcal{G}}$ is always principal (\cref{lem:ann ideal principal}) and is generated by a polynomial with the same structure as $h$ above (\cref{prop:ann ideal generator}).
    Using this structure, we can prove lower bounds on the complexity of $h$ and its multiples by appealing to lower bounds on the complexity of the polynomial computed by $\Phi$ and its multiples (\cref{lem:annihilator hardness}).
\end{example}

\subsection{Annihilators of Local Encodings} \label{subsec:annihilator ideal}

Having defined our candidate generator $\mathcal{G}$ as the local encoding of $\Phi(\vec{\alpha}) = \beta$ for an arithmetic circuit $\Phi$, we now proceed to analyze its annihilator ideal $\Ann{\mathcal{G}}$.
Our ultimate goal is to prove lower bounds on the complexity of every nonzero polynomial in $\Ann{\mathcal{G}}$, as this equates (via \cref{lem:hitting vs annihilators}) to proving that $\mathcal{G}$ hits some circuit class.
To do this, we first need to understand what the polynomials in $\Ann{\mathcal{G}}$ look like.

We begin by showing that $\Ann{\mathcal{G}}$ is a nonzero principal ideal.
This means that there is a single polynomial $h(\vec{z})$ such that every element of $\Ann{\mathcal{G}}$ is a multiple of $h$.
This makes the task of proving lower bounds for annihilators much easier, as we only have to reason about multiples of a single polynomial.

The following lemma is essentially due to \textcite[Lemma 7]{Kayal09}.
We cannot directly apply \cite[Lemma 7]{Kayal09}, as the polynomials in a local encoding do not necessarily fit the hypothesis of the lemma as it appears in Kayal's work.
However, inspecting the proof, it is clear that Kayal's argument applies to our setting.
We include a (nearly identical) proof for the sake of completeness.

\begin{lemma} \label{lem:ann ideal principal}
    Let $\Phi$ be an $n$-variate arithmetic circuit of size $s$ and let $\mathcal{G}$ be the local encoding of $\Phi(\vec{\alpha}) = \beta$.
    Then the ideal $\Ann{\mathcal{G}}$ is a nonzero principal ideal.
\end{lemma}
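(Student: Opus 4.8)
The plan is to exploit the triangular structure built into the local encoding. Write $\psi \coloneqq (\mathcal{G}_{\mathrm{input}}, \mathcal{G}_{\mathrm{internal}}) : \F^{n+s} \to \F^{n+s}$ for the map formed by the first $n+s$ coordinates of $\mathcal{G}$, so that $\mathcal{G}(\vec{x}, \vec{y})$ is $\psi(\vec{x}, \vec{y})$ followed by the single extra coordinate $y_s - \beta$. The first step is to observe that $\psi$ is a polynomial automorphism of $\F^{n+s}$. Its first $n$ coordinates are $x_i - \alpha_i$, and for each internal gate $v_i$ the corresponding coordinate has the form $y_i - g_i$, where $g_i$ lies in $\F[\vec{x}, y_1, \ldots, y_{i-1}]$: indeed the two children $u, w$ of $v_i$ are either input gates (contributing $x_j$'s and constants via $L$) or earlier internal gates $v_j$ with $j < i$ (contributing $y_j$'s), since $v_1, \ldots, v_s$ is a topological order. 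Hence $\psi$ can be inverted variable by variable — recover each $x_i$, then $y_1$, then $y_2$, and so on — and the inverse is again a polynomial map.

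With this in hand I would exhibit an explicit generator. Let $H(w_1, \ldots, w_{n+s})$ be the last coordinate of $\psi^{-1}$, so that $H(\psi(\vec{x}, \vec{y})) = y_s$, and set
\[
    h(\vec{z}) \coloneqq H(z_1, \ldots, z_{n+s}) - z_{n+s+1} - \beta .
\]
Composing with $\mathcal{G}$ gives $h \circ \mathcal{G} = H(\psi(\vec{x}, \vec{y})) - (y_s - \beta) - \beta = 0$, so $h \in \Ann{\mathcal{G}}$; and $h$ is nonzero because, viewed as a polynomial in $z_{n+s+1}$, it contains the monomial $-z_{n+s+1}$ while $H$ involves none of $z_{n+s+1}$.

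The final step shows that $h$ generates the whole ideal. Given any $f \in \Ann{\mathcal{G}}$, divide $f$ by $h$ as polynomials in the single variable $z_{n+s+1}$ over the ring $\F[z_1, \ldots, z_{n+s}]$; since the leading coefficient of $h$ in $z_{n+s+1}$ is the unit $-1$, ordinary polynomial division produces $f = q\, h + r$ with $r \in \F[z_1, \ldots, z_{n+s}]$. Composing with $\mathcal{G}$ and using $f \circ \mathcal{G} = h \circ \mathcal{G} = 0$ yields $r \circ \mathcal{G} = 0$; but $r$ depends only on the first $n+s$ coordinates, so $r \circ \mathcal{G} = r \circ \psi$, and since $\psi$ is an automorphism the substitution map $r \mapsto r \circ \psi$ is injective, forcing $r = 0$. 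Thus $f = q h \in \abr{h}$, so $\Ann{\mathcal{G}} = \abr{h}$ is a nonzero principal ideal.

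I expect the only point that needs genuine care — though it is routine — is the first step: formalizing why choosing a topological ordering of the internal gates makes $\psi$ triangular, hence a polynomial automorphism. Everything afterwards is bookkeeping. (An alternative, more abstract route avoids exhibiting $h$: $\Ann{\mathcal{G}}$ is prime, being the kernel of the evaluation homomorphism $\F[\vec{z}] \to \F[\vec{x}, \vec{y}]$ into a domain; the triangular-recovery argument shows this homomorphism is surjective, so the quotient has Krull dimension $n + s$, whence $\Ann{\mathcal{G}}$ is a height-one prime in the UFD $\F[\vec{z}]$ and therefore principal and nonzero. The division argument is cleaner for our purposes since it also pins down an explicit generator, which is refined in the next proposition.)
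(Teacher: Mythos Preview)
Your proof is correct but proceeds along a genuinely different route from the paper's. The paper argues abstractly: it picks a nonzero annihilator $a$ of minimal degree, shows $a$ is irreducible, and then for any other annihilator $b$ uses the resultant $\res_{z_{n+s+1}}(a,b)$ to produce an element of $\Ann{\mathcal{G}} \cap \F[z_1,\ldots,z_{n+s}]$; the Jacobian criterion (the Jacobian of the first $n+s$ outputs is triangular with ones on the diagonal) forces this resultant to vanish, whence $a \mid b$. Your argument instead upgrades the same triangularity observation to the stronger statement that $\psi$ is a polynomial automorphism, exhibits the generator $h$ explicitly, and replaces the resultant step by monic division in $z_{n+s+1}$. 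Your approach is more elementary (no resultant machinery) and constructive --- in effect it folds the paper's Lemma~\ref{lem:ann ideal principal} and the explicit-generator Proposition that follows it into a single step --- while the paper's route has the minor advantage of treating irreducibility and principality first, cleanly separated from the task of identifying the generator.
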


\begin{proof}
    The polynomial map $\mathcal{G}$ consists of $n+s+1$ polynomials in $n+s$ variables.
    Because there are more polynomials than there are variables, the outputs of $\mathcal{G}$ are algebraically dependent, so the annihilator ideal $\Ann{\mathcal{G}}$ is nonempty.

    Let $a(z_1,\ldots,z_{n+s+1}) \in \Ann{\mathcal{G}}$ be a nonzero element of minimal degree.
    We claim that $a(\vec{z})$ generates the ideal $\Ann{\mathcal{G}}$.
    To prove this, it suffices to show that for any nonzero $b(\vec{z}) \in \Ann{\mathcal{G}}$, the polynomial $a(\vec{z})$ divides $b(\vec{z})$.

    We first show that $a(\vec{z})$ is irreducible.
    Suppose $a(\vec{z})$ factors as $a(\vec{z}) = a_1(\vec{z}) a_2(\vec{z})$.
    Then we have
    \[
        0 = a(\mathcal{G}(\vec{x}, \vec{y})) = a_1(\mathcal{G}(\vec{x}, \vec{y})) \cdot a_2(\mathcal{G}(\vec{x},\vec{y})),
    \]
    so either $a_1(\mathcal{G}(\vec{x},\vec{y})) = 0$ or $a_2(\mathcal{G}(\vec{x},\vec{y})) = 0$.
    Suppose, without loss of generality, that $a_1(\mathcal{G}(\vec{x},\vec{y})) = 0$.
    Then by definition we have $a_1(\vec{z}) \in \Ann{\mathcal{G}}$.
    Because $a$ was chosen to be a nonzero element of $\Ann{\mathcal{G}}$ of minimal degree, we have $\deg(a) \le \deg(a_1)$.
    On the other hand, because $a_1$ divides $a$, we have $\deg(a_1) \le \deg(a)$.
    Together, these inequalities imply $\deg(a) = \deg(a_1)$.
    It follows that $\deg(a_2) = 0$, so $a_2(\vec{z})$ is a nonzero constant polynomial.
    Hence $a(\vec{z})$ is irreducible as claimed.

    Now let $b(\vec{z}) \in \Ann{\mathcal{G}}$ be nonzero.
    To see that $a(\vec{z})$ divides $b(\vec{z})$, consider their resultant with respect to $z_{n+s+1}$,
    \[
        r(\vec{z}) \coloneqq \res_{z_{n+s+1}}(a(\vec{z}), b(\vec{z})).
    \]
    By \cref{lem:resultant elimination}, we have
    \[
        r(\vec{z}) \in \abr{a(\vec{z}), b(\vec{z})} \cap \F[z_1, \ldots, z_{n+s}] \subseteq \Ann{\mathcal{G}} \cap \F[z_1,\ldots,z_{n+s}].
    \]
    This implies $r(\vec{z})$ vanishes on the first $n+s$ outputs of $\mathcal{G}$.
    If $r(\vec{z})$ were nonzero, then this vanishing would imply that the first $n+s$ outputs of $\mathcal{G}$ are algebraically dependent.
    However, this is impossible: the Jacobian of the first $n+s$ outputs of $\mathcal{G}$ is a triangular matrix with ones along the diagonal, so \cref{prop:jacobian arbitrary field} implies they are algebraically independent.
    Thus, it follows that $r(\vec{z}) = 0$.
    \cref{lem:resultant common factor multivariate} implies that $a(\vec{z})$ and $b(\vec{z})$ have a common factor.
    Because $a(\vec{z})$ is irreducible, it follows that $a(\vec{z})$ divides $b(\vec{z})$ as desired.
\end{proof}

Our next goal is to describe a polynomial that generates the ideal $\Ann{\mathcal{G}}$.
Just as in \cref{ex:local encoding}, we can find a dependency between the outputs of $\mathcal{G}$ by iteratively expressing each of the $y$ variables as a polynomial combination of the first $n+s$ outputs of $\mathcal{G}$.
Doing this results in a polynomial $h_s(\vec{z})$ that satisfies $h_s(\mathcal{G}(\vec{x}, \vec{y})) = y_s$.
Subtracting $z_{n+s+1}+\beta$ from $h_s(\vec{z})$ results in an annihilator of $\mathcal{G}$.
It turns out that the annihilator obtained in this way is irreducible, and hence it generates the ideal $\Ann{\mathcal{G}}$.

We start with the following lemma, which shows that each of the variables $y_1, \ldots, y_s$ can be obtained as a polynomial combination of the first $n+s$ outputs of $\mathcal{G}$.
These polynomials are obtained by simulating the underlying circuit $\Phi$ in a gate-by-gate manner.
As a byproduct of this construction, we also obtain a bound on the circuit complexity of these polynomials.

\begin{lemma} \label{lem:annihilator inductive step}
    Let $\Phi$ be an $n$-variate arithmetic circuit of size $s$ and let $\mathcal{G}$ be the local encoding of $\Phi(\vec{\alpha}) = \beta$.
    For every $k \in [s]$, there is a multi-output arithmetic circuit $\Psi_k$ such that the following hold.
    \begin{itemize}
        \item 
            The size of $\Psi_k$ is bounded by $O(k)$.
        \item
            The circuit $\Psi_k$ outputs $k$ polynomials $h_1, \ldots, h_k \in \F[z_1,\ldots,z_{n+s+1}]$ that depend only on the variables $z_1, \ldots, z_{n+k}$.
            For each $i \in [k]$, the polynomial $h_i$ satisfies the identity $h_i(\mathcal{G}(\vec{x}, \vec{y})) = y_i$.
    \end{itemize}
\end{lemma}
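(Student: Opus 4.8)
The plan is to construct the circuit $\Psi_k$ by induction on $k$, simulating the circuit $\Phi$ gate by gate. The key observation is that the $i$\ts{th} output of $\mathcal{G}_{\mathrm{internal}}$ is the polynomial $L(v_i) - (L(u) + L(w))$ or $L(v_i) - L(u)L(w)$, depending on whether $v_i$ is an addition or product gate, where $u$ and $w$ are the children of $v_i$. Setting $z_{n+i}$ equal to this $i$\ts{th} output (which equals $y_i - (\text{value of } u + \text{value of } w)$ after composition), we can solve for $y_i$ in terms of $z_{n+i}$ and the values at $u$ and $w$. The values at $u$ and $w$ are either known constants $\gamma$, input expressions $z_j + \alpha_j$ (recovered from the $j$\ts{th} output of $\mathcal{G}_{\mathrm{input}}$, since that output is $x_j - \alpha_j$, so $z_j = x_j - \alpha_j$ and thus $x_j = z_j + \alpha_j$), or earlier internal-gate values $y_\ell$ with $\ell < i$, which by the inductive hypothesis are computed by $h_\ell$.

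Concretely, for each $k \in [s]$, I would define an auxiliary map $M : V \to \F[z_1,\ldots,z_{n+s+1}]$ that assigns to each gate of $\Phi$ the polynomial expressing its value as a function of the $z$ variables: $M(v) = \gamma$ if $v$ is a constant input gate, $M(v) = z_j + \alpha_j$ if $v$ is the input gate labeled $x_j$, and $M(v_i) = z_{n+i} + (M(u) + M(w))$ or $M(v_i) = z_{n+i} + M(u)M(w)$ for the $i$\ts{th} internal gate with children $u,w$, according to its label. Then set $h_i \coloneqq M(v_i)$ for $i \in [k]$. One checks by induction on $i$ that $h_i$ depends only on $z_1,\ldots,z_{n+i}$ (since $M(u), M(w)$ involve only input variables and $y_\ell$-expressions with $\ell < i$, hence only $z_1,\ldots,z_{n+i-1}$, and we add $z_{n+i}$), and that $h_i(\mathcal{G}(\vec x,\vec y)) = y_i$: after composition, $M(v)$ evaluates to the value computed at gate $v$ when $\Phi$ runs on input $\vec\alpha$ if $v$ is not among $v_1,\ldots,v_s$, and the $i$\ts{th} output of $\mathcal{G}_{\mathrm{internal}}$ composed with $\mathcal{G}$ is $y_i - (\text{children values})$, so the recurrence for $M(v_i)$ composes to exactly $y_i$. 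For the size bound, I would build $\Psi_k$ incrementally: the circuit computing $h_1,\ldots,h_{k-1}$ is extended by $O(1)$ new gates (one $\odot$ gate combining the subcircuits or constants for $M(u), M(w)$, plus one addition gate to add $z_{n+k}$, plus possibly a constant-addition gate), using the shared subcircuits for $M(u)$ and $M(w)$ when $u$ or $w$ is an internal gate $v_\ell$ with $\ell < k$; this gives $\size(\Psi_k) = \size(\Psi_{k-1}) + O(1) = O(k)$.

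The main obstacle I anticipate is bookkeeping rather than mathematical depth: one must carefully handle the case distinctions in $L(v)$ (constant input gate, variable input gate, internal gate) for the children $u, w$ of $v_i$, and verify that the gate-sharing in $\Psi_k$ is legitimate — i.e. that $M(u)$ for a child $u = v_\ell$ is literally the output gate for $h_\ell$ in the already-constructed $\Psi_{k-1}$, so that extending by $O(1)$ gates genuinely suffices and there is no blow-up. Another small point requiring care is the base case and the precise meaning of the claim ``$h_i$ depends only on $z_1,\ldots,z_{n+k}$'': even though $h_i$ for $i < k$ depends only on $z_1,\ldots,z_{n+i}$, the statement as given (dependence on $z_1,\ldots,z_{n+k}$) is weaker and follows immediately. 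The identity $h_i(\mathcal{G}(\vec x,\vec y)) = y_i$ should be proved by a clean induction, invoking \cref{def:local encoding} directly to unwind the definition of the relevant outputs of $\mathcal{G}$.
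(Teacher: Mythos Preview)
Your proposal is correct and takes essentially the same approach as the paper: your map $M$ is exactly the paper's auxiliary function $\hat{L}$, your recursive definition $h_k = z_{n+k} + M(u) \odot M(w)$ matches the paper's construction verbatim, and the $O(1)$-gates-per-step size argument is identical. The only minor slip is in your explanatory prose---for a variable input gate, $M(v)\circ\mathcal{G}$ evaluates to the variable $x_j$, not to the value $\alpha_j$ (in the paper's notation, $\hat{L}(v)\circ\mathcal{G}=L(v)$)---but this does not affect the construction or the key identity $h_i\circ\mathcal{G}=y_i$.
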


\begin{proof}
    We proceed by induction on $k$.
    Let $\Psi_{k-1}$ be the circuit given by induction.
    (If $k = 1$, then $\Psi_{0}$ is the empty circuit.)
    To construct $\Psi_k$, it suffices to add a constant number of new gates to $\Psi_{k-1}$ in order to compute the desired polynomial $h_k(\vec{z})$.

    To avoid a multitude of sub-cases, we need some additional notation.
    Let $V$ be the set of gates of $\Phi$.
    Recall the function $L : V \to \F \cup \set{x_1,\ldots,x_n, y_1,\ldots,y_s}$ used in the definition of $\mathcal{G}$, which was given by
    \[
        L(v) \coloneqq
        \begin{cases}
            \gamma & \text{if $v$ is an input gate labeled by a constant $\gamma \in \F$,} \\
            x_i & \text{if $v$ is an input gate labeled by the variable $x_i$,} \\
            y_i & \text{if $v$ is the $i$\ts{th} internal gate in topological order.}
        \end{cases}
    \]
    Let $V_{\le {k-1}}$ be the set consisting of the input gates and first $k-1$ internal gates of $\Phi$.
    Define the function $\hat{L} : V_{\le k-1} \to \F[\vec{z}]$ via
    \[
        \hat{L}(v) \coloneqq \begin{cases}
            \gamma & \text{if $v$ is an input gate labeled by a constant $\gamma \in \F$,} \\
            z_i + \alpha_i & \text{if $v$ is an input gate labeled by the variable $x_i$,} \\
            h_i(\vec{z}) & \text{if $v$ is the $i$\ts{th} internal gate in topological order.}
        \end{cases}
    \]
    Note that every output of $\hat{L}$ is either already computed by the circuit $\Psi_{k-1}$ or can be computed using a constant number of gates.
    An immediate consequence of the inductive hypothesis is that the polynomial $\hat{L}(v) \circ \mathcal{G}(\vec{x},\vec{y})$ is given by
    \[
        \hat{L}(v) \circ \mathcal{G}(\vec{x},\vec{y}) = \begin{cases}
            \gamma & \text{if $v$ is an input gate labeled by a constant $\gamma \in \F$,} \\
            x_i & \text{if $v$ is an input gate labeled by the variable $x_i$,} \\
            y_i & \text{if $v$ is the $i$\ts{th} internal gate in topological order.}
        \end{cases}
    \]
    Put more succinctly, we have the identity
    \[
        \hat{L}(v) \circ \mathcal{G}(\vec{x}, \vec{y}) = L(v)
    \]
    for all gates $v \in V_{\le k-1}$.

    We now describe how to compute the desired polynomial $h_k(\vec{z})$.
    Let $v$ be the $k$\ts{th} internal gate of $\Phi$, and let $u$ and $w$ be its children.
    There are two cases to consider, depending on whether $v$ is an addition or multiplication gate.
    
    \begin{itemize}
        \item
            Suppose $v = u + w$.
            Define 
            \[
                h_k(\vec{z}) \coloneqq z_{n+k} + \hat{L}(u) + \hat{L}(w).
            \]
            It is clear that $h_k(\vec{z})$ can be computed by adding $O(1)$ gates to $\Psi_{k-1}$, so the size of $\Psi_k$ is bounded by $O(k)$ as claimed.

            It remains to verify that $h_k$ satisfies the claimed identity.
            Composing with $\mathcal{G}(\vec{x}, \vec{y})$, we have
            \begin{align*}
                h_k(\mathcal{G}(\vec{x}, \vec{y})) &= y_k - L(u) - L(w) + \hat{L}(u) \circ \mathcal{G}(\vec{x}, \vec{y}) + \hat{L}(w) \circ \mathcal{G}(\vec{x}, \vec{y}) \\
                &= y_k - L(u) - L(w) + L(u) + L(w) \\
                &= y_k
            \end{align*}
            as desired.
        \item
            Suppose $v = u \times w$.
            Define
            \[
                h_k(\vec{z}) \coloneqq z_{n+k} + \hat{L}(u) \hat{L}(w).
            \]
            As in the previous case, it is clear that $h_k(\vec{z})$ can be computed by adding $O(1)$ gates to $\Psi_{k-1}$, so the size of $\Psi_k$ is bounded by $O(k)$.

            We now verify the claimed identity on $h_k$.
            We compose with $\mathcal{G}(\vec{x}, \vec{y})$ to obtain
            \begin{align*}
                h_k(\mathcal{G}(\vec{x}, \vec{y})) &= y_k - L(u) L(w) + (\hat{L}(u) \circ \mathcal{G}(\vec{x}, \vec{y})) \cdot (\hat{L}(w) \circ \mathcal{G}(\vec{x}, \vec{y})) \\
                &= y_{k} - L(u) L(w) + L(u) L(w) \\
                &= y_k 
            \end{align*}
            as desired.
    \end{itemize}
    In both cases, we can construct $\Psi_k$ by adding $O(1)$ gates to $\Psi_{k-1}$.
\end{proof}

Using the preceding lemma, we now describe the generator of the ideal $\Ann{\mathcal{G}}$.

\begin{proposition} \label{prop:ann ideal generator}
    Let $\Phi$ be an $n$-variate arithmetic circuit of size $s$ that computes a polynomial $f(x_1,\ldots,x_n)$.
    Let $\mathcal{G}$ be the local encoding of $\Phi(\vec{\alpha}) = \beta$.
    Then there is a polynomial $h \in \F[z_1,\ldots,z_{n+s+1}]$ such that the following hold.
    \begin{enumerate}
        \item 
            The ideal $\Ann{\mathcal{G}}$ is generated by $h(\vec{z})$.
        \item
            There exists a polynomial $g \in \F[z_1,\ldots,z_{n+s}]$, not depending on the variable $z_{n+s+1}$, such that $g \in \abr{z_{n+1}, \ldots, z_{n+s}}$ and 
            \[
                h(\vec{z}) = z_{n+s+1} - f(z_1 + \alpha_1, \ldots, z_n + \alpha_n) + g(\vec{z}) + \beta.
            \]
        \item
            There is an arithmetic circuit of size $O(s)$ that computes $h(\vec{z})$.
    \end{enumerate}
\end{proposition}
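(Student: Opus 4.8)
The plan is to build $h$ directly from the polynomials produced by \cref{lem:annihilator inductive step}. Applying that lemma with $k = s$ yields polynomials $h_1, \ldots, h_s$, all computed by a single arithmetic circuit of size $O(s)$, with $h_i \in \F[z_1,\ldots,z_{n+i}]$ and $h_i(\mathcal{G}(\vec x, \vec y)) = y_i$; in particular $h_s$ does not involve $z_{n+s+1}$. I set
\[
    h(\vec z) \coloneqq z_{n+s+1} - h_s(z_1,\ldots,z_{n+s}) + \beta.
\]
Since the last output of $\mathcal{G}$ is $\mathcal{G}_{\mathrm{output}}(\vec x, \vec y) = y_s - \beta$, composing $h$ with $\mathcal{G}$ gives $(y_s - \beta) - y_s + \beta = 0$, so $h \in \Ann{\mathcal{G}}$; and since $h$ is obtained from $h_s$ by adding $O(1)$ gates, it has a circuit of size $O(s)$, which is claim (3).

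For claim (1) I would prove $h$ irreducible and combine this with \cref{lem:ann ideal principal}. Viewing $h$ as a polynomial in $z_{n+s+1}$ over $R = \F[z_1,\ldots,z_{n+s}]$, it is monic of degree one (here we use that $h_s$ does not involve $z_{n+s+1}$), so in any factorization $h = q_1 q_2$ one factor, say $q_2$, lies in $R$; comparing leading coefficients in $z_{n+s+1}$ forces $q_2$ to divide $1$ in $R$, hence $q_2 \in \F^\times$. Thus $h$ is irreducible. By \cref{lem:ann ideal principal}, $\Ann{\mathcal{G}} = \abr{a}$ for some $a$; then $a$ divides $h$, and since $\Ann{\mathcal{G}}$ is a proper ideal (it does not contain $1$, as $1 \circ \mathcal{G} = 1 \neq 0$) the element $a$ is not a unit, so irreducibility of $h$ gives $\abr{a} = \abr{h}$. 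Hence $h$ generates $\Ann{\mathcal{G}}$.

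For claim (2) I would compute $h_s$ modulo $\abr{z_{n+1},\ldots,z_{n+s}}$, i.e.\ set $z_{n+1} = \cdots = z_{n+s} = 0$. From the proof of \cref{lem:annihilator inductive step}, each $h_k$ equals $z_{n+k} + \hat L(u) + \hat L(w)$ or $z_{n+k} + \hat L(u)\,\hat L(w)$, where $\hat L$ sends an input gate labeled $x_j$ to $z_j + \alpha_j$, a constant gate to its scalar label, and the $i$th internal gate to $h_i$. Under this substitution the terms $z_{n+k}$ vanish, so a routine induction on $k$ shows $h_k$ specializes to $\Phi_k(z_1 + \alpha_1, \ldots, z_n + \alpha_n)$, where $\Phi_k$ is the polynomial computed at the $k$th internal gate of $\Phi$. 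Taking $k = s$ and using $\Phi_s = f$, the difference $g(\vec z) \coloneqq f(z_1 + \alpha_1, \ldots, z_n + \alpha_n) - h_s(\vec z)$ vanishes after the substitution, so $g \in \abr{z_{n+1},\ldots,z_{n+s}}$; it does not involve $z_{n+s+1}$, since neither $f(z_1 + \alpha_1, \ldots, z_n + \alpha_n)$ nor $h_s$ does, and rearranging the definition of $h$ gives exactly $h(\vec z) = z_{n+s+1} - f(z_1 + \alpha_1, \ldots, z_n + \alpha_n) + g(\vec z) + \beta$.

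I do not expect a genuinely difficult step: the substantive content — constructing the $h_i$ and bounding their complexity — already lives in \cref{lem:annihilator inductive step}, and \cref{lem:ann ideal principal} supplies principality. The only thing needing care is the bookkeeping: keeping signs and the constant $\beta$ consistent across the three parts, and making the induction in claim (2) precise enough that ``$h_k$ specializes to the value computed at the $k$th gate of $\Phi$'' is actually derived from the definition of $\hat L$ rather than merely asserted.
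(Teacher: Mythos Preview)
Your proposal is correct and matches the paper's proof for claims (1) and (3): the same definition $h = z_{n+s+1} - h_s + \beta$, the same irreducibility argument via monicity in $z_{n+s+1}$, and the same appeal to \cref{lem:ann ideal principal}.

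The one genuine difference is in claim (2). You reach into the \emph{proof} of \cref{lem:annihilator inductive step}, use the explicit form $h_k = z_{n+k} + \hat L(u) \odot \hat L(w)$, and induct on $k$ to show that setting $z_{n+1} = \cdots = z_{n+s} = 0$ specializes $h_k$ to the $k$th gate polynomial evaluated at $z_j + \alpha_j$. The paper instead works purely from the \emph{statement} of the lemma: since $h(\mathcal{G}(\vec x,\vec y)) = 0$, substituting $y_i \mapsto f_i(\vec x)$ zeroes out $\mathcal{G}_{\mathrm{internal}}$ and turns $\mathcal{G}_{\mathrm{output}}$ into $f(\vec x) - \beta$, yielding $h(x_1 - \alpha_1,\ldots,x_n - \alpha_n, 0,\ldots,0, f(\vec x) - \beta) = 0$; a shift and the degree-one monicity in $z_{n+s+1}$ then force the claimed form. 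Your route is more hands-on and makes the structure of $g$ completely explicit, at the cost of depending on the internal construction of the $h_k$; the paper's route is cleaner in that it treats the lemma as a black box, but is a little less transparent about where $g$ comes from. Either argument is fine here.
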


\begin{proof}
    Let $\Psi_s$ be the circuit of size $O(s)$ obtained by applying \cref{lem:annihilator inductive step} to $\Phi$ and $\mathcal{G}$.
    Let $h_s(\vec{z})$ be the output of $\Psi_s$ that satisfies $h_s(\mathcal{G}(\vec{x}, \vec{y})) = y_s $.
    Define
    \[
        h(\vec{z}) \coloneqq z_{n+s+1} - h_s(\vec{z}) + \beta.
    \]
    It is clear from this definition that $h(\vec{z})$ can be computed by an arithmetic circuit of size $O(s)$.
    It remains to verify that $h$ generates $\Ann{\mathcal{G}}$ and that $h$ has the claimed form.

    We first establish that $h$ generates the ideal $\Ann{\mathcal{G}}$.
    To see that $h \in \Ann{\mathcal{G}}$, we compute
    \begin{align*}
        h(\mathcal{G}(\vec{x}, \vec{y})) &= y_s - \beta - h_s(\mathcal{G}(\vec{x}, \vec{y})) + \beta \\
        &= y_s - \beta - y_s + \beta \\
        &= 0.
    \end{align*}
    Thus $h \in \Ann{\mathcal{G}}$.
    To see that $h$ generates $\Ann{\mathcal{G}}$, note that because $h_s(\vec{z})$ does not depend on the variable $z_{n+s+1}$, the polynomial $h$ is monic and has degree $1$ in $z_{n+s+1}$.
    This implies that $h$ is irreducible.
    Because $h$ is irreducible, the ideal $\Ann{\mathcal{G}}$ is principal, and $h \in \Ann{\mathcal{G}}$, it follows that $h$ generates $\Ann{\mathcal{G}}$.

    It remains to show that $h$ has the claimed form.
    Recall that $\mathcal{G}$ is the concatenation of the three polynomial maps $\mathcal{G}_{\textrm{input}}$, $\mathcal{G}_{\textrm{internal}}$, and $\mathcal{G}_{\textrm{output}}$.
    Because $h \in \Ann{\mathcal{G}}$, we have the identity
    \[
        h(\mathcal{G}_{\textrm{input}}(\vec{x}, \vec{y}), \mathcal{G}_{\textrm{internal}}(\vec{x}, \vec{y}), \mathcal{G}_{\textrm{output}}(\vec{x}, \vec{y})) = 0.
    \]
    Let $f_i(\vec{x})$ be the polynomial computed by the $i$\ts{th} internal gate of $\Phi$ and consider the substitution given by $y_i \mapsto f_i(\vec{x})$.
    We claim that under this substitution, the outputs of $\mathcal{G}_{\text{internal}}$ are mapped to zero.
    Indeed, if the $i$\ts{th} gate of $\Phi$ is an addition gate whose children are the $j$\ts{th} and $k$\ts{th} internal gates, then under this substitution the $i$\ts{th} output of $\mathcal{G}_{\textrm{internal}}$ becomes
    \[
        f_i(\vec{x}) - f_j(\vec{x}) - f_k(\vec{x}) = f_i(\vec{x}) - f_i(\vec{x}) = 0.
    \]
    Likewise, if the $i$\ts{th} internal gate of $\Phi$ is a multiplication gate whose children are the $j$\ts{th} and $k$\ts{th} internal gates, then the $i$\ts{th} output of $\mathcal{G}_{\text{internal}}$ is mapped to
    \[
        f_i(\vec{x}) - f_j(\vec{x}) f_k(\vec{x}) = f_i(\vec{x}) - f_i(\vec{x}) = 0.
    \]
    Similar cancellations occur when the $i$\ts{th} internal gate has one or more input gates as children; we omit the straightforward calculations.

    Applying the substitution $y_i \mapsto f_i(\vec{x})$ to the equation $h(\mathcal{G}(\vec{x}, \vec{y})) = 0$, we obtain the identity
    \[
        h(x_1 - \alpha_1, \ldots, x_n - \alpha_n, 0, \ldots, 0, f(\vec{x}) - \beta) = 0.
    \]
    Next, we apply the shift $x_i \mapsto x_i + \alpha_i$ to obtain
    \[
        h(x_1, \ldots, x_n, 0, \ldots, 0, f(\vec{x} + \vec{\alpha}) - \beta) = 0.
    \]
    This implies that $z_{n+s+1} - f(\vec{x} + \vec{\alpha}) + \beta$ is a factor of $h(x_1, \ldots, x_n, 0, \ldots, 0, z_{n+s+1})$.
    Because the polynomial $h(x_1, \ldots, x_n, 0, \ldots, 0, z_{n+s+1})$ is monic and degree $1$ in $z_{n+s+1}$, we obtain the equality
    \[
        h(x_1, \ldots, x_n, 0, \ldots, 0, z_{n+s+1}) = z_{n+s+1} - f(\vec{x} + \vec{\alpha}) + \beta.
    \]
    This implies
    \[
        h(\vec{z}) = z_{n+s+1} - f(\vec{z} + \vec{\alpha}) + g(\vec{z}) + \beta,
    \]
    where $g \in \F[\vec{z}]$ is a polynomial that vanishes on the substitution $z_{n+1} = \cdots = z_{n+s} = 0$, which is equivalent to $g$ being an element of the ideal $\abr{z_{n+1}, \ldots, z_{n+s}}$.
    Thus $h(\vec{z})$ has the claimed form.
\end{proof}

\subsection{Pseudorandomness from Hardness of Multiples}

In this subsection, we investigate the pseudorandom properties of local encodings.
Suppose $\mathcal{G}$ is the local encoding of $\Phi(\vec{\alpha}) = \beta$, where $\Phi$ is an arithmetic circuit that computes a polynomial $f(\vec{x})$.
As we saw in \cref{prop:ann ideal generator}, the ideal $\Ann{\mathcal{G}}$ is generated by a polynomial $h(\vec{z})$ that is similar to $f(\vec{x})$, in the sense that any circuit computing $h$ can be modified in a simple way to produce a circuit that computes $f$.
This reduction shows that lower bounds on the complexity of $f$ imply lower bounds on the complexity of $h$.

To show that $\mathcal{G}$ hits a circuit class $\mathscr{C}$, we need to show that every nonzero multiple of $h$ is hard for $\mathscr{C}$, not just that $h$ itself is hard for $\mathscr{C}$.
Because the polynomials $h$ and $f$ are closely related, it seems reasonable to expect that lower bounds on the complexity of multiples of $f$ imply comparable lower bounds for multiples of $h$.

This intuition is correct: suppose $p \in \Ann{\mathcal{G}}$ is a polynomial of the form $h(\vec{z}) \cdot r(\vec{z})$.
If we work in the ring $\F[z_1,\ldots,z_n][z_{n+1},\ldots,z_{n+s+1}]$, then the bottom-degree homogeneous component of $h(\vec{z})$ is precisely
\[
    h_{\text{bot}}(\vec{z}) = -f(z_1 + \alpha_1, \ldots, z_n + \alpha_n) + \beta.
\]
Bottom-degree homogeneous components are homomorphic with respect to multiplication, so the bottom-degree component of $h(\vec{z}) \cdot r(\vec{z})$ is a multiple of $h_{\text{bot}}(\vec{z})$.
From a circuit computing $h(\vec{z}) \cdot r(\vec{z})$, we can obtain circuit for this multiple of $h_{\text{bot}}(\vec{z})$ by applying polynomial interpolation in the variables $z_{n+1},\ldots,z_{n+s+1}$.
Applying the shift $z_i \mapsto z_i-\alpha_i$ for $i \in [n]$ results in a circuit that computes a multiple of $f(z_1,\ldots,z_n) - \beta$.
Thus, if multiples of $f - \beta$ require large circuits, then it must have been the case that the circuit computing $h(\vec{z}) \cdot r(\vec{z})$ was large to begin with, so we obtain a lower bound for all nonzero polynomials in $\Ann{\mathcal{G}}$.

We now make the preceding sketch precise, showing that lower bounds on $\Ann{\mathcal{G}}$ follow from lower bounds for multiples of $f(\vec{x})$.
Although the interpolation argument described above works in general, we state and prove this lemma for three concrete measures of complexity (bounded-depth circuits, formulas, and branching programs) that correspond to the applications we give in \cref{sec:instantiate}.

\begin{lemma} \label{lem:annihilator hardness} 
    Let $\F$ be an infinite field.
    Let $\Phi$ be an $n$-variate circuit arithmetic circuit that computes a polynomial $f \in \F[\vec{x}]$.
    Let $\mathcal{G}$ be the local encoding of $\Phi(\vec{\alpha}) = \beta$.
    \begin{enumerate}
        \item 
            If every nonzero multiple of $f(\vec{x}) - \beta$ requires depth-$\Delta$ circuits of size at least $s$, then every nonzero polynomial $p \in \Ann{\mathcal{G}}$ requires depth-$(\Delta - 2)$ circuits of size at least $\Omega(s^{\frac{1}{\Delta-1}})$.
        \item
            If every nonzero multiple of $f(\vec{x}) - \beta$ requires formulas of size at least $s$, then every nonzero polynomial $p \in \Ann{\mathcal{G}}$ requires formulas of size at least $\Omega(\sqrt{s})$.
        \item
            If every nonzero multiple of $f(\vec{x}) - \beta$ requires branching programs of size at least $s$, then every nonzero polynomial $p \in \Ann{\mathcal{G}}$ requires branching programs of size at least $\Omega(\sqrt{s})$.
    \end{enumerate}
\end{lemma}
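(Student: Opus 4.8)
The plan is to combine the structural description of the annihilator ideal from \cref{prop:ann ideal generator} with the extraction of a lowest-degree homogeneous component. Write $N$ for the number of outputs of $\mathcal{G}$ and grade $\F[z_1,\dots,z_N]$ by total degree in the variables $z_{n+1},\dots,z_N$. By \cref{prop:ann ideal generator}, $\Ann{\mathcal{G}} = \abr{h}$ where $h = z_N - f(z_1+\alpha_1,\dots,z_n+\alpha_n) + g(\vec z) + \beta$ with $g \in \abr{z_{n+1},\dots,z_{N-1}}$. Since $z_N$ and every monomial of $g$ have degree at least $1$ in $z_{n+1},\dots,z_N$, whereas $\beta - f(z_1+\alpha_1,\dots,z_n+\alpha_n)$ has degree $0$, the lowest homogeneous component of $h$ in this grading is exactly $h_{\mathrm{bot}} = \beta - f(z_1+\alpha_1,\dots,z_n+\alpha_n)$. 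Given any nonzero $p \in \Ann{\mathcal{G}}$, write $p = hr$ with $r \neq 0$; since $\F[\vec z]$ is an integral domain, the lowest homogeneous component of a product is the product of the lowest homogeneous components, so $p_{\mathrm{bot}} = h_{\mathrm{bot}} \cdot r_{\mathrm{bot}}$ is a nonzero multiple of $\beta - f(z_1+\alpha_1,\dots,z_n+\alpha_n)$. The affine shift $z_i \mapsto z_i - \alpha_i$ for $i \in [n]$ then converts $p_{\mathrm{bot}}$ into a nonzero multiple of $f(\vec x) - \beta$; this multiple a priori lives in a polynomial ring with more variables than $\F[\vec x]$, but as $\F$ is infinite we may substitute generic field constants for the extra variables --- without destroying the multiple or increasing its complexity --- to land inside $\F[\vec x]$.

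It therefore suffices to bound, in each of the three models, the overhead of three operations performed on a circuit (respectively formula, branching program) for $p$: extracting the lowest homogeneous component in $z_{n+1},\dots,z_N$, applying an affine shift in $z_1,\dots,z_n$, and substituting field constants for some variables. The shift and the constant substitution are essentially free --- they leave formula and branching-program size unchanged and add only $O(1)$ to the depth. The component extraction is the substantive step: introducing a fresh variable $t$ and replacing each $z_j$ (for $j > n$) by $t z_j$, the coefficient of $t^{j_0}$ in the result --- where $j_0$ is the order of $p$ in the grading --- equals $p_{\mathrm{bot}}$, and it is recovered by interpolating at $\deg(p)+1$ values of $t$. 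For branching programs it is cleaner to skip the substitution $z_j \mapsto t z_j$, which would destroy affineness of the edge labels, and instead duplicate the program into layers indexed by the degree in $z_{n+1},\dots,z_N$ accumulated so far; this again costs a factor of $\deg(p)$ in size.

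The three claims then follow from the degree bounds native to each model. A formula or branching program of size $S$ computes a polynomial of degree $O(S)$, so the component extraction together with the free shift and substitution yields a formula (respectively branching program) of size $O(S^2)$ for a nonzero multiple of $f(\vec x)-\beta$; this gives the $\Omega(\sqrt{s})$ bounds. A depth-$(\Delta-2)$ circuit of size $S$ computes a polynomial of degree at most $S^{\Delta-2}$, so the interpolation uses $O(S^{\Delta-2})$ evaluation points and inflates the size to $O(S^{\Delta-1})$, while the scaling $z_j \mapsto t z_j$ and the final interpolation sum each add one layer, bringing the depth to $\Delta$; the shift and constant substitution are again absorbed for free. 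This produces a depth-$\Delta$ circuit of size $O(S^{\Delta-1})$ for a nonzero multiple of $f(\vec x)-\beta$, and the contrapositive is the first claim with the stated $\Omega(s^{1/(\Delta-1)})$ bound.

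The main obstacle I anticipate is the depth and size bookkeeping in the bounded-depth case: one must check that scaling the inputs $z_{n+1},\dots,z_N$ by $t$, running the interpolation, and applying the shift $z_i \mapsto z_i-\alpha_i$ together cost at most $+2$ in depth --- so that the output circuit has depth $\Delta$ and not more --- and a multiplicative $\deg(p) = S^{O(1)}$ in size, and one must pin down the exact polynomial-degree bound for constant-depth circuits so that the blow-up is precisely $S^{\Delta-1}$; the analogous but easier point for branching programs is that the layered homogenization costs only a factor of $\deg(p)$ rather than its square. The algebraic core --- identifying $h_{\mathrm{bot}}$ and invoking multiplicativity of lowest homogeneous components --- is immediate once \cref{prop:ann ideal generator} is available.
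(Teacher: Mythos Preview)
Your proposal is correct and takes essentially the same approach as the paper: invoke \cref{prop:ann ideal generator}, extract the bottom homogeneous component in the grading by $z_{n+1},\dots,z_N$ via scaling those variables by a fresh parameter and interpolating, then bound the interpolation cost using the degree bound native to each model. One minor simplification available to you: for branching programs the paper reuses the interpolation argument verbatim, since once the scaling parameter is fixed to a field constant the edge labels remain affine, so your layered-homogenization detour is unnecessary (though not incorrect).
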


\begin{proof}
    Let $p \in \Ann{\mathcal{G}}$ be nonzero.
    By \cref{prop:ann ideal generator}, we know that $p(\vec{z})$ can be written as
    \[
        p(\vec{z}) = \del{z_{n+s+1} - f(z_1 + \alpha_1, \ldots, z_n + \alpha_n) + g(\vec{z}) + \beta} \cdot r(\vec{z}),
    \]
    where $g \in \abr{z_{n+1}, \ldots, z_{n+s}}$ and $r(\vec{z})$ is a nonzero polynomial.
    Consider the substitution
    \[
        z_i \mapsto 
        \begin{cases}
            z_i - \alpha_i & \text{if $1 \le i \le n$,} \\
            w \cdot z_i & \text{if $n+1 \le i \le n+s+1$,}
        \end{cases}
    \]
    where $w$ is a fresh variable.
    Under this substitution, the first factor of $p$ is mapped to
    \[
        - f(z_1, \ldots, z_n) + \beta + O(w),
    \]
    where $O(w)$ indicates a term divisible by $w$.
    Let
    \[
        \hat{r}(\vec{z}, w) = \sum_{i=a}^b r_i(\vec{z}) w^i
    \]
    be the image of $r(\vec{z})$ under this substitution, where $a \le b$ and $r_a(\vec{z}) \neq 0$.
    Then the image of $p(\vec{z})$ under this substitution is
    \[
        \hat{p}(\vec{z}, w) \coloneqq (-f(\vec{z}) + \beta) \cdot r_a(\vec{z}) w^a + O(w^{a+1}),
    \]
    where $O(w^{a+1})$ indicates a term divisible by $w^{a+1}$.

    We now obtain a circuit that computes $(-f(\vec{z}) + \beta) \cdot r_a(\vec{z})$ using polynomial interpolation with respect to $w$.
    Note that $\hat{p}(\vec{z}, w)$ has degree at most $d \coloneqq \deg(p)$ with respect to $w$.
    Let $\gamma_0, \ldots, \gamma_{d} \in \F$ be distinct field elements.
    Then there are constants $\zeta_0, \ldots, \zeta_{d} \in \F$ such that
    \[
        (-f(\vec{z}) + \beta) \cdot r_a(\vec{z}) = \sum_{i=0}^d \zeta_i \cdot \hat{p}(\vec{z}, \gamma_i).
    \]
    This expresses a nonzero multiple of $f(\vec{z}) - \beta$ as a sum of $\deg(p)+1$ copies of $\hat{p}(\vec{z}, w)$, each of which can be obtained in a simple manner from the circuit computing $p(\vec{z})$.
    In particular, if nonzero multiples of $f(\vec{z}) - \beta$ have high complexity, then a similar lower bound holds for the annihilator $p \in \Ann{\mathcal{G}}$ that we started with.
    The precise details of this bound depends on the complexity measure of interest.
    \begin{enumerate}
        \item 
            Suppose $p$ was computed by a circuit of depth $\Delta-2$ and size $t$.
            Then the resulting expression for $(-f(\vec{z}) + \beta) \cdot r_a(\vec{z})$ can be implemented by a circuit of depth $\Delta$ and size $O(t \deg(p))$.
            Because $p$ was computed by a circuit of depth $\Delta - 2$ and size $t$, we have the bound $\deg(p) \le t^{\Delta - 2}$.
            Thus, we have a circuit of depth $\Delta$ and size $O(t^{\Delta-1})$ that computes $(-f(\vec{z}) + \beta) \cdot r_a(\vec{z})$.
            If every nonzero multiple of $f(\vec{z}) - \beta$ requires depth-$\Delta$ circuits of size $s$, we conclude $t \ge \Omega(s^{\frac{1}{\Delta-1}})$ as desired.
        \item
            Suppose $p$ was computed by a formula of size $t$.
            As in the previous case, the expression for $(-f(\vec{z}) + \beta) \cdot r_a(\vec{z})$ can be implemented by a formula of size $O(t \deg(p))$.
            Because $p$ was computed by a formula of size $t$, we have the bound $\deg(p) \le t$.
            Thus, we obtain a formula of size $O(t^2)$ that computes a nonzero multiple of $f(\vec{z}) - \beta$.
            If such polynomials require formulas of size $s$, we conclude the desired lower bound of $t \ge \Omega(\sqrt{s})$.
        \item
            Suppose $p$ was computed by a branching program of size $t$.
            The argument in this case is identical to the preceding argument for formulas, so we again conclude the lower bound $t \ge \Omega(\sqrt{s})$.
            \qedhere
    \end{enumerate}
\end{proof}

As an immediate corollary of the preceding lemma, we see that $\mathcal{G}$ hits circuit classes that require large size to compute multiples of $f(\vec{x}) - \beta$.

\begin{corollary} \label{cor:hardness to randomness}
    Let $\F$ be an infinite field.
    Let $\Phi$ be an $n$-variate arithmetic circuit that computes a polynomial $f \in \F[\vec{x}]$.
    Let $\mathcal{G}$ be the local encoding of $\Phi(\vec{\alpha}) = \beta$.
    \begin{enumerate}
        \item 
            Suppose every nonzero multiple of $f(\vec{x}) - \beta$ requires depth-$\Delta$ circuits of size at least $s$.
            Then $\mathcal{G}$ hits circuits of depth $\Delta - 2$ and size $\eps s^{\frac{1}{\Delta-1}}$ for a sufficiently small constant $\eps > 0$.
        \item
            Suppose every nonzero multiple of $f(\vec{x}) - \beta$ requires formulas of size at least $s$.
            Then $\mathcal{G}$ hits formulas of size $\eps \sqrt{s}$ for a sufficiently small constant $\eps > 0$.
        \item
            Suppose every nonzero multiple of $f(\vec{x}) - \beta$ requires branching programs of size at least $s$.
            Then $\mathcal{G}$ hits branching programs of size $\eps \sqrt{s}$ for a sufficiently small constant $\eps > 0$.
    \end{enumerate}
\end{corollary}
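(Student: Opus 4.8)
The plan is to read off this corollary from \cref{lem:annihilator hardness} together with the annihilator characterization of hitting. Concretely, for part~(1), suppose toward a contradiction that $C$ is a nonzero polynomial computed by a circuit of depth $\Delta-2$ and size $\eps s^{\frac{1}{\Delta-1}}$ with $C \circ \mathcal{G} = 0$. Then $C$ is a nonzero element of $\Ann{\mathcal{G}}$, so part~(1) of \cref{lem:annihilator hardness} implies that $C$ requires depth-$(\Delta-2)$ circuits of size $\Omega(s^{\frac{1}{\Delta-1}})$. Choosing the constant $\eps$ smaller than the implied constant in this $\Omega(\cdot)$ yields a contradiction, so no nonzero $C$ of this complexity annihilates $\mathcal{G}$; together with the trivial implication ($C = 0 \Rightarrow C \circ \mathcal{G} = 0$), this is exactly the statement that $\mathcal{G}$ hits circuits of depth $\Delta-2$ and size $\eps s^{\frac{1}{\Delta-1}}$. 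Alternatively, one can phrase this appeal through \cref{lem:hitting vs annihilators}, which says hitting $\mathscr{C}$ is equivalent to every nonzero polynomial in $\Ann{\mathcal{G}}$ lying outside $\mathscr{C}$.

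Parts~(2) and~(3) are proved in the same way, substituting parts~(2) and~(3) of \cref{lem:annihilator hardness}: a nonzero $C \in \Ann{\mathcal{G}}$ computed by a formula (respectively, a branching program) of size $\eps\sqrt{s}$ would require formulas (respectively, branching programs) of size $\Omega(\sqrt{s})$, which is impossible once $\eps$ is chosen small enough relative to the hidden constant.

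The only bookkeeping point is that \cref{lem:hitting vs annihilators} and the notion of ``hitting a class'' are stated for families of polynomial maps, whereas here $\mathcal{G}$ is the single local encoding $\mathcal{G}:\F^{n+s}\to\F^{n+s+1}$; following the convention fixed earlier for treating a single map as a representative of an underlying family, the stated conclusion is precisely the per-$n$ assertion that no nonzero polynomial of the indicated complexity annihilates $\mathcal{G}$, which is what the contradiction above establishes. There is no genuine obstacle in this proof: \cref{lem:annihilator hardness} already carries the full weight of the argument, and the corollary is just its contrapositive reformulation combined with the choice of a sufficiently small constant $\eps$.
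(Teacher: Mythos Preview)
Your proposal is correct and matches the paper's approach: the paper states this as an immediate corollary of \cref{lem:annihilator hardness} and provides no further proof, and your argument is exactly the contrapositive reformulation via \cref{lem:hitting vs annihilators} together with choosing $\eps$ below the implied $\Omega(\cdot)$ constant.
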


\section{Instantiating the Generator} \label{sec:instantiate}

Having analyzed local encodings and their annihilators, we now move on to constructing hitting set generators using local encodings.
As \cref{cor:hardness to randomness} shows, to construct a hitting set generator for a class $\mathscr{C}$ using a local encoding, it suffices to take an encoding of a circuit that computes a polynomial whose multiples are hard for the target class $\mathscr{C}$.
In this section, we give three constructions of $\VNC^0$-computable generators using local encodings.
One generator hits $\VAC^0$ unconditionally, while the others hit $\VF$ and $\VBP$ under strong but reasonable hardness assumptions.
(In fact, the generators for $\VAC^0$ and $\VF$ are local encodings of the same circuit family.)

Although our work up to this point has proceeded over an arbitrary field $\F$, this section will require $\F$ to have characteristic zero or sufficiently large characteristic.
We omit the large characteristic case for simplicity.
The restriction on the field characteristic stems from the fact that the hitting property for local encodings is inferred from the hardness of multiples of an explicit polynomial $f(\vec{x})$.
Over fields of small positive characteristic, it is an open problem to show that if a polynomial $f(\vec{x})$ is hard to compute, then all multiples of $f$ are similarly hard to compute.
In particular, over a field of characteristic $p > 0$, it is not known if a circuit lower bound for $f(\vec{x})$ implies a comparable lower bound for $f(\vec{x})^p$.
For some limited results in this direction, see \textcite{Andrews20}.

\subsection{Low-Depth Circuits}

In our first application of \cref{cor:hardness to randomness}, we construct a $\VNC^0$-computable generator that hits $\VAC^0$.
To do this, we use the fact that multiples of the determinant are hard for low-depth circuits.
In particular, we make use of the following lower bound.

\begin{theorem}[\cite{LST21, AF22}] \label{thm:vac0 det lb}
    Let $\F$ be a field of characteristic zero.
    Any depth-$\Delta$ circuit that computes a multiple of the $n \times n$ determinant $\det_n(X)$ must be of size at least $n^{(\log n)^{\exp(-O(\Delta))}}$.
\end{theorem}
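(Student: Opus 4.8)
The statement combines two results. The first is the unconditional lower bound of \textcite{LST21} against constant-depth circuits: for $d = \Theta(\log n/\log\log n)$ and $w = \lfloor n/d \rfloor$, the iterated matrix multiplication polynomial $\mathrm{IMM}_{w,d}$, which is a projection of $\det_n$, requires depth-$\Delta$ circuits of size $n^{(\log n)^{\exp(-O(\Delta))}}$. Since restricting a depth-$\Delta$, size-$s$ circuit for $\det_n$ along this projection yields a depth-$\Delta$ circuit of size at most $s$ for $\mathrm{IMM}_{w,d}$, this already proves the theorem for $\det_n$ itself. The real content is the strengthening, due to \textcite{AF22}, from $\det_n$ to every nonzero \emph{multiple} of $\det_n$; this is also where the characteristic-zero hypothesis enters, since passing from hardness of a polynomial to hardness of its multiples is known only in characteristic zero (or sufficiently large characteristic).

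To obtain the statement for multiples, I would argue at the level of the complexity measure underlying LST21 rather than with circuits directly, since constant-depth circuits are not known to be closed under factorization. Recall that LST21 certify their bound through a measure $\mu$ --- the relative rank of a set-multilinear partial-derivative matrix, computed after a suitable set-multilinear restriction --- with the features that $\mu(p) \le s\cdot\nu$ whenever $p$ has a depth-$\Delta$, size-$s$ circuit, where $\nu = n^{-(\log n)^{\exp(-O(\Delta))}}$ (this ``set-multilinearization'' bound is structural and costs only a $d^{O(d)} = n^{O(1)}$ factor, as $d = O(\log n/\log\log n)$), while $\mu$ of the set-multilinear version of $\mathrm{IMM}_{w,d}$ is within a constant factor of its maximum possible value. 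Given a nonzero multiple $h = \det_n\cdot g$ computed by a depth-$\Delta$, size-$s$ circuit, I would first pass to a tamer multiple: extracting one homogeneous component of $h$ via $x_{ij}\mapsto t\,x_{ij}$ and interpolating in $t$ at $\deg(h)+1 \le s^{O(\Delta)}$ points yields a depth-$(\Delta+O(1))$, size-$s^{O(\Delta)}$ circuit computing $\det_n\cdot g'$ for a homogeneous $g'$. The crux is then to show that the LST21 restriction can be chosen so that $\det_n\cdot g'$ --- equivalently, a suitable restriction of it that is a nonzero multiple of $\mathrm{IMM}_{w,d}$ --- still has $\mu$ bounded below by a constant, i.e.\ that the spurious factor $g'$ does not annihilate the set-multilinear part or collapse the relevant rank. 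Granting this robustness, the facts above give $s^{O(\Delta)} \ge n^{(\log n)^{\exp(-O(\Delta))}}$, and taking an $O(\Delta)$-th root (which, together with the constant-factor depth increase, is absorbed into the $\exp(-O(\Delta))$ appearing in the exponent) yields $s \ge n^{(\log n)^{\exp(-O(\Delta))}}$.

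The main obstacle is exactly this robustness claim for $\mu$. One needs a single set-multilinear restriction that simultaneously witnesses the high rank of the determinant part and prevents the contribution of an \emph{arbitrary} nonzero multiplier $g'$ (for instance, one supported on all variables, where the naive reduction to a constant term fails) from destroying it; arranging this uniformly over all $g'$ seems to require opening up the LST21 restriction and the realization of $\mathrm{IMM}_{w,d}$ as a projection of $\det_n$, rather than treating them as black boxes. The remaining ingredients --- the projection to $\mathrm{IMM}_{w,d}$, homogenization, interpolation, and bookkeeping the polynomial parameter losses --- are routine and appear in prior work.
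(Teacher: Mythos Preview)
The paper does not prove this theorem; it is quoted as a black-box citation to \cite{LST21,AF22} and then used as an input to \cref{lem:local enc hits vac0}. So there is no in-paper argument to compare against.

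Your proposal correctly identifies the two ingredients and the division of labor between them, but it contains a genuine gap that you yourself flag: the ``robustness claim'' that the LST21 relative-rank measure $\mu$ remains near-maximal on an \emph{arbitrary} nonzero multiple of $\det_n$ (or of $\mathrm{IMM}_{w,d}$) after the set-multilinear restriction. You do not prove this, and there is no reason to expect it to hold --- multiplying by a polynomial $g'$ supported on all variables can scramble the set-multilinear structure, and rank-based partial-derivative measures are not known to be robust to multiplication in the direction you need. Leaving this as ``seems to require opening up the LST21 restriction'' is leaving the entire content of the multiples statement unproved.

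More to the point, this is not how \textcite{AF22} argue. They do not attempt to push the LST measure through the multiplication. Instead they exploit algebraic structure specific to determinantal ideals (via the straightening law and the multiplicativity $\det(XY)=\det(X)\det(Y)$) to show that from a small depth-$\Delta$ circuit computing any nonzero multiple of $\det_n$, one can extract a small \emph{border} depth-$\Delta$ circuit for $\det_n$ itself. Since the LST21 lower bound is rank-based and hence lower-semicontinuous, it already holds against border computation, and the conclusion follows. In other words, the reduction runs from hardness-of-multiples back to hardness of the single polynomial $\det_n$, rather than strengthening the complexity measure to tolerate an arbitrary multiplier. Your outline attacks the harder (and possibly false) direction.
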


To construct a $\VNC^0$-computable generator for $\VAC^0$, we first combine \cref{thm:vac0 det lb} with \cref{cor:hardness to randomness} to show that low-depth circuits are hit by local encodings of arithmetic circuits that compute the determinant.

\begin{lemma} \label{lem:local enc hits vac0}
    Let $\F$ be a field of characteristic zero.
    Let $(\Phi_n)_{n \in \naturals}$ be a sequence of arithmetic circuits of size $s(n) \le n^{O(1)}$ such that $\Phi_n$ computes the $n \times n$ determinant $\det_n(X)$.
    Let $(A_n)_{n \in \naturals}$ be a sequence of $n \times n$ matrices.
    Let $\mathcal{G}_n : \F^{n^2 + s(n)} \to \F^{n^2+s(n)+1}$ be the local encoding of $\Phi_n(A_n) = 0$.
    Then $\mathcal{G} = (\mathcal{G}_n)_{n \in \naturals}$ is a $\VNC^0$-computable hitting set generator for $\VAC^0$.
\end{lemma}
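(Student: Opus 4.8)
The plan is to verify the two requirements of \cref{def:hsg} for $\mathcal{G} = (\mathcal{G}_n)_{n \in \naturals}$ with $\mathscr{D} = \VNC^0$ and $\mathscr{C} = \VAC^0$; the $\VNC^0$-computability is immediate and essentially all the content lies in the hitting property. For $\VNC^0$-computability, \cref{lem:generator parameters} tells us that every coordinate of a local encoding is a degree-$2$ polynomial computed by an arithmetic formula of size $2$, so every $p$-family of the form $(g_{n,i_n})_{n \in \naturals}$ lies in $\VNC^0$; the fresh self-mapped variables introduced by the padding of \cref{rem:padding} are trivially in $\VNC^0$ as well.

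For the hitting property, I would instantiate \cref{cor:hardness to randomness} with $f(\vec{x}) = \det_n(X)$ and $\beta = 0$, so that $f(\vec{x}) - \beta = \det_n(X)$. By \cref{thm:vac0 det lb}, for every $\Delta$ any depth-$\Delta$ circuit computing a nonzero multiple of $\det_n(X)$ has size at least $n^{(\log n)^{\exp(-O(\Delta))}}$. Applying part (1) of \cref{cor:hardness to randomness} with depth parameter $D+2$ then shows that $\mathcal{G}_n$ hits circuits of depth $D$ and size $\eps \cdot n^{(\log n)^{\exp(-O(D))}/(D+1)}$. The point to extract is that for \emph{fixed} $D$ the exponent $(\log n)^{\exp(-O(D))}/(D+1)$ tends to infinity with $n$, so $\mathcal{G}_n$ hits depth-$D$ circuits of size $n^{\omega(1)}$.

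It then remains to package this inside \cref{def:hsg}. Fix a $p$-family $f = (f_n)_{n \in \naturals} \in \VAC^0$; say $f_n$ has $m(n)$ variables and is computed by a circuit of depth $D$ and size $m(n)^c$ for constants $D, c$. For each $n$ I would take the map of the family $\mathcal{G}$ whose output length is $m(n)$; after the padding of \cref{rem:padding}, this is the padding of the local encoding $\mathcal{G}_\nu$ where $\nu = \nu(n)$ is the largest index with $\nu^2 + s(\nu) + 1 \le m(n)$, so that $\nu(n) = m(n)^{\Theta(1)}$ since $s$ is polynomially bounded and increasing. Padding preserves the hitting property at the same size and depth (a random restriction of the fresh variables turns any nonzero annihilator of the padded map of complexity $t$ into one of $\mathcal{G}_\nu$ of complexity at most $t$), so this map hits depth-$D$ circuits of size $\eps \cdot \nu(n)^{(\log \nu(n))^{\exp(-O(D))}/(D+1)} = m(n)^{\omega(1)}$. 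For all sufficiently large $n$ this bound exceeds $m(n)^c$, and hence $f_n \circ \mathcal{G}_{m(n)} = 0$ if and only if $f_n = 0$, which is condition (2) of \cref{def:hsg}.

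The main obstacle is precisely this final bookkeeping: one must check that for every fixed constant depth $D$ and polynomial size bound $m(n)^c$, the super-polynomial hitting size coming from \cref{thm:vac0 det lb} and \cref{cor:hardness to randomness} eventually dominates $m(n)^c$, and that the polynomial (but not linear) relationship between $m(n)$ and the local-encoding index $\nu(n)$ does not spoil this --- both of which hold because all the quantities in play depend only on the fixed constant $D$. A minor point worth recording is that the affine shift $X \mapsto X + A_n$ appearing implicitly in \cref{prop:ann ideal generator} does not affect the determinant lower bound, as it is a change of variables that adds only $O(n^2)$ gates and increases depth by one; in any case \cref{cor:hardness to randomness} already folds this shift into its statement.
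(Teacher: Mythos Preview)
Your proposal is correct and follows essentially the same route as the paper: $\VNC^0$-computability from \cref{lem:generator parameters}, hitting from \cref{thm:vac0 det lb} combined with \cref{cor:hardness to randomness}, and then the same padding/index-matching argument with $\nu$ (the paper's $q$) chosen as the largest index with $\nu^2+s(\nu)+1\le m(n)$. You are in fact slightly more careful than the paper about the depth offset in \cref{cor:hardness to randomness} and about articulating why the affine shift by $A_n$ is harmless.
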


\begin{proof}
    The fact that $\mathcal{G}$ is $\VNC^0$-computable follows immediately from \cref{lem:generator parameters}.
    To show that $\mathcal{G}$ hits $\VAC^0$, we will apply \cref{cor:hardness to randomness} and \cref{thm:vac0 det lb}.
    \cref{thm:vac0 det lb} implies that any circuit of depth $\Delta$ computing a multiple of $\det_{n}(X)$ must have size at least
    \[
        n^{(\log n)^{\exp(-O(\Delta))}}.
    \]
    It follows from \cref{cor:hardness to randomness} that $\mathcal{G}_n$ hits circuits of depth $\Delta$ and size 
    \[
        n^{\frac{(\log n)^{\exp(-O(\Delta))}}{\Delta+1}}.
    \]
    This implies that $\mathcal{G}$ hits $\VAC^0$, as we now show.

    Let $(f_n)_{n \in \naturals} \in \VAC^0$.
    By definition, there is a fixed constant $\Delta$ such that $f_n$ can be computed by circuits of depth $\Delta$ and size $n^{O(1)}$.
    Let $m \le n^{O(1)}$ be the number of variables in $f_n$ and let $q$ be the largest integer such that $q^2 + s(q) + 1 \le m$.
    Because $s(q) \le q^{O(1)}$, we have
    \[
        q \ge m^{\Omega(1)} \ge n^{\Omega(1)}.
    \]
    The map $\mathcal{G}_{q}$, padded appropriately as in \cref{rem:padding}, hits $m$-variate circuits of depth $\Delta$ and size
    \[
        q^{\frac{(\log q)^{\exp(-O(\Delta))}}{\Delta+1}} \gg n^{O(1)}.
    \]
    Thus, for sufficiently large $n$, the map $\mathcal{G}_{q}$ hits $f_n$, so $\mathcal{G}$ hits $\VAC^0$ as claimed.
\end{proof}

\cref{lem:local enc hits vac0} gives a $\VNC^0$-computable generator with one bit of stretch that hits $\VAC^0$.
We can improve the stretch of this generator by applying independent copies of the generator in parallel.
This improves the stretch of the generator from $1$ to $n^{1 - \eps}$ for any constant $\eps > 0$.

\begin{theorem} \label{thm:vac0 generator}
    Let $\F$ be a field of characteristic zero and let $\eps > 0$ be a fixed constant.
    There is an explicit sequence of polynomial maps $\mathcal{G} = (\mathcal{G} : \F^{n - n^{1 - \eps}} \to \F^n)_{n \in \naturals}$ such that $\mathcal{G}$ is a $\VNC^0$-computable hitting set generator for $\VAC^0$.
\end{theorem}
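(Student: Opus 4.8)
The plan is to build $\mathcal{G}$ by running many independent copies of the one-bit-stretch generator of \cref{lem:local enc hits vac0} in parallel on disjoint blocks of seed variables, and then pad using \cref{rem:padding} to hit the exact output lengths demanded by \cref{def:hsg}. Concretely, fix an explicit family $(\Phi_k)_{k \in \naturals}$ of polynomial-size (say, size $s(k) \le k^{O(1)}$) arithmetic circuits for $\det_k$ — e.g.\ Ryser's formula or a division-free determinant algorithm — and let $A_k$ be the all-zeros matrix, so $\Phi_k(A_k) = 0$; write $\mathcal{H}_k$ for the local encoding of $\Phi_k(A_k) = 0$, a map $\F^{\ell_k} \to \F^{\ell_k + 1}$ with $\ell_k = k^2 + s(k)$. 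By \cref{lem:generator parameters} each coordinate of $\mathcal{H}_k$ is a size-$2$ formula, and the proof of \cref{lem:local enc hits vac0} in fact shows that for every constant $\Delta$, the single map $\mathcal{H}_k$ hits every depth-$\Delta$ circuit of size at most $S_\Delta(k) := k^{(\log k)^{\exp(-O(\Delta))}/(\Delta+1)}$ on $\ell_k + 1$ variables.

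Given a target output length $N$, set $t = t(N) := \lceil N^{1-\eps}\rceil$ and take $k = k(N)$ maximal with $t(\ell_k + 1) \le N$; since $\ell_k \le k^{O(1)}$, maximality forces $k \ge N^{\Omega(\eps)}$. Let $\mathcal{G}'_N$ be $t$ copies of $\mathcal{H}_k$ acting on disjoint blocks of seed variables, followed by padding the seed and output with fresh variables as in \cref{rem:padding} to reach output length exactly $N$; this map has seed length $N - t$, stretch $t \ge N^{1-\eps}$, is explicit, and is $\VNC^0$-computable because every output coordinate is either a padding variable or a size-$2$ formula. (Padding the seed back up if one insists on exactly $n - n^{1-\eps}$ only helps.)

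For correctness, fix a family $(f_n)_{n \in \naturals} \in \VAC^0$ of depth $\Delta$ and size at most $n^{c}$, with $m(n)$ variables; as usual we may assume $m(n) \ge n^{\Omega(1)}$. Then $k(m(n)) \ge m(n)^{\Omega(\eps)} \ge n^{\Omega(\eps)}$, and since $\eps$ is a fixed constant we get $\log k = \Omega(\log n)$, so $S_\Delta(k(m(n))) = n^{(\log n)^{\Omega_\Delta(1)}} = n^{\omega(1)} \gg n^{c}$; hence each of the $t$ blocks of $\mathcal{G}'_{m(n)}$ hits every depth-$\Delta$, size-$\le n^{c}$ polynomial on that block's variable set (the padding coordinates are identity maps, which hit everything trivially). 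It remains to show that a parallel product of maps, each hitting the restriction-closed class of $m$-variate depth-$\Delta$ size-$\le n^c$ circuits, again hits that class. I prove this by induction on the number of blocks: write $\mathcal{G}'_{m(n)} = \mathcal{P} \times \mathcal{H}_k$ with $\mathcal{P}$ the product of the other blocks, and $g := f_n \circ (\mathcal{P} \times \mathrm{id})$. If $g = 0$, then fixing the last output block to any constant $a$ gives $f_n(\cdot, a) \circ \mathcal{P} = 0$; since $f_n(\cdot,a)$ is a restriction of $f_n$, hence again of depth $\Delta$ and size $\le n^c$, the inductive hypothesis forces $f_n(\cdot,a) = 0$, and letting $a$ vary over the infinite field $\F$ gives $f_n = 0$. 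If instead $g \neq 0$, choose (using that $\F$ is infinite) field constants $b$ for the seed variables of $\mathcal{P}$ with $g(b, \cdot) \neq 0$; then $g(b, \cdot) = f_n(\mathcal{P}(b), \cdot)$ is a restriction of $f_n$ to the last block's variables, so it is hit by $\mathcal{H}_k$, giving $g(b, \mathcal{H}_k(\vec{y}'')) \neq 0$ and therefore $f_n \circ \mathcal{G}'_{m(n)} = g(\vec{y}', \mathcal{H}_k(\vec{y}'')) \neq 0$. Thus $\mathcal{G}'$ hits $\VAC^0$, completing the proof.

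The only genuinely delicate step is this parallel-composition argument. The naive attempt — peel off one block by reading off the coefficients of $f_n$ as a polynomial in that block's variables and recursing on each coefficient — fails, because those coefficients need not lie in $\VAC^0$ (extraction of monomial coefficients is not known to be depth-preserving). The fix is to peel blocks off by substituting field constants instead, which keeps us inside the restriction-closed class, and to use the infinitude of $\F$ (guaranteed by characteristic zero) precisely to supply such constants without killing the polynomial. The secondary bookkeeping — that shrinking each copy's parameter from roughly $N$ down to roughly $N^{\eps}$ still leaves a super-polynomial-in-$N$ hardness bound, so each block individually already hits $\VAC^0$ — is immediate once $\eps$ is a fixed constant, since then $\log k = \Omega(\log N)$.
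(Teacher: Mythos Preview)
Your proposal is correct and takes essentially the same approach as the paper: run $n^{1-\eps}$ independent copies of the one-bit-stretch local-encoding generator from \cref{lem:local enc hits vac0} on disjoint variable blocks, then verify the hitting property via a hybrid argument exploiting that $\VAC^0$ is closed under projections. Your inductive peeling of one block at a time is just a rephrasing of the paper's hybrid argument (which locates the critical index $i$ where $f_{n,i-1}\neq 0$ but $f_{n,i}=0$ and substitutes constants to produce a single-block distinguisher); the two arguments are interchangeable.
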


\begin{proof}
    Let $\hat{\mathcal{G}} = (\hat{\mathcal{G}}_n : \F^{n-1} \to \F^n)$ be the generator of \cref{lem:local enc hits vac0}, padded appropriately as in \cref{rem:padding}.
    We take $\mathcal{G}_n : \F^{n - n^{1 - \eps}} \to \F^n$ to be the generator obtained by concatenating $n^{1-\eps}$ independent copies of $\hat{\mathcal{G}}_{n^{\eps}}$.
    That is, we set $\mathcal{G}_n$ to be the generator
    \[
        \mathcal{G}_n(\vec{x}, \vec{y}) \coloneqq (\hat{\mathcal{G}}_{n^\eps}(\vec{x}^{(1)}, \vec{y}^{(1)}), \ldots, \hat{\mathcal{G}}_{n^\eps}(\vec{x}^{(n^{1-\eps})},\vec{y}^{(n^{1-\eps})})),
    \]
    where $\vec{x} = \vec{x}^{(1)} \sqcup \cdots \sqcup \vec{x}^{(n^{1 - \eps})}$ is a partition of the $x$ variables and likewise $\vec{y} = \vec{y}^{(1)} \sqcup \cdots \sqcup \vec{y}^{(n^{1-\eps})}$ is a partition of the $y$ variables.
    The $\VNC^0$-computability of $\hat{\mathcal{G}}$ implies that $\mathcal{G}$ is $\VNC^0$-computable.

    The fact that $\mathcal{G}$ hits $\VAC^0$ follows from a straightforward hybrid argument, which we now describe.
    Suppose there is a $p$-family $f = (f_n)_{n \in \naturals} \in \VAC^0$ such that $\mathcal{G}$ fails to hit $f$.
    We will use this to construct a $p$-family $g = (g_n)_{n \in \naturals} \in \VAC^0$ such that $\hat{\mathcal{G}}$ fails to hit $g$, which contradicts \cref{lem:local enc hits vac0}.

    For the sake of notational simplicity, assume without of generality that $f_n$ is a polynomial on $n$ variables $z_1, \ldots, z_n$.
    Let $\vec{z} = \vec{z}^{(1)} \sqcup \cdots \sqcup \vec{z}^{(n^{1 - \eps})}$ be a partition of the $z$ variables.
    Consider the hybrid polynomials
    \begin{align*}
        f_{n,0}(\vec{x}, \vec{y}, \vec{z}) &\coloneqq f_n(\vec{z}^{(1)}, \vec{z}^{(2)}, \ldots, \vec{z}^{(n^{1-\eps})}) \\
        f_{n,1}(\vec{x}, \vec{y}, \vec{z}) &\coloneqq f_n(\hat{\mathcal{G}}_{n^\eps}(\vec{x}^{(1)},\vec{y}^{(1)}), \vec{z}^{(2)}, \ldots, \vec{z}^{(n^{1 - \eps})}) \\
        f_{n,2}(\vec{x}, \vec{y}, \vec{z}) &\coloneqq f_n(\hat{\mathcal{G}}_{n^\eps}(\vec{x}^{(1)},\vec{y}^{(1)}), \hat{\mathcal{G}}_{n^\eps}(\vec{x}^{(2)},\vec{y}^{(2)}), \ldots, \vec{z}^{(n^{1 - \eps})}) \\
        &\vdotswithin{\coloneqq} \\
        f_{n,n^{1 - \eps}}(\vec{x}, \vec{y}, \vec{z}) &\coloneqq f_n(\hat{\mathcal{G}}_{n^\eps}(\vec{x}^{(1)}, \vec{y}^{(1)}), \hat{\mathcal{G}}_{n^\eps}(\vec{x}^{(2)},\vec{y}^{(2)}), \ldots, \hat{\mathcal{G}}_{n^\eps}(\vec{x}^{(n^{1-\eps})},\vec{y}^{(n^{1-\eps})})).
    \end{align*}
    The fact that $\mathcal{G}$ fails to hit $f$ implies that $f_{n,0}(\vec{x}, \vec{y}, \vec{z}) \neq 0$ and $f_{n,n^{1 - \eps}}(\vec{x}, \vec{y}, \vec{z}) = 0$ when $n$ is sufficiently large.
    Thus, there is some $i \in [n^{1-\eps}]$ such that $f_{n,i-1}(\vec{x}, \vec{y}, \vec{z}) \neq 0$ but $f_{n,i}(\vec{x}, \vec{y}, \vec{z}) = 0$.
    Because $\F$ is infinite, there are inputs $\vec{\alpha}^{(1)}, \ldots, \vec{\alpha}^{(i-1)}, \vec{\beta}^{(1)}, \ldots, \vec{\beta}^{(i-1)}$, and $\vec{\gamma}^{(i+1)},\ldots,\vec{\gamma}^{(n^{1-\eps})}$ such that
    \[
        g_n(\vec{z}^{(i)}) \coloneqq f_{n,i-1}(\hat{\mathcal{G}}(\vec{\alpha}^{(1)}, \vec{\beta}^{(1)}), \ldots, \hat{\mathcal{G}}(\vec{\alpha}^{(i-1)}, \vec{\beta}^{(i-1)}), \vec{z}^{(i)}, \vec{\gamma}^{(i+1)}, \ldots, \vec{\gamma}^{(n^{1-\eps})}) \neq 0.
    \]
    Note that $g$ is a projection of $f$, so $g = (g_n)_{n \in \naturals}$ is a $p$-family in $\VAC^0$.
    Furthermore, we have
    \[
        g_n(\hat{\mathcal{G}}_{n^\eps}(\vec{x}^{(i)}, \vec{y}^{(i)})) = f_{i-1}(\hat{\mathcal{G}}(\vec{\alpha}^{(1)}, \vec{\beta}^{(1)}), \ldots, \hat{\mathcal{G}}(\vec{\alpha}^{(i-1)}, \vec{\beta}^{(i-1)}), \hat{\mathcal{G}}(\vec{x}^{(i)}, \vec{y}^{(i)}), \vec{\gamma}^{(i+1)}, \ldots, \vec{\gamma}^{(n^{1-\eps})}) = 0,
    \]
    so $\hat{\mathcal{G}}$ fails to hit $g$, contradicting \cref{lem:local enc hits vac0}.
\end{proof}

\subsection{Formulas}

Next, we construct a $\VNC^0$-computable generator that hits $\VF$.
To do this, we need a polynomial whose multiples all require large formulas.
Formulas are not known to be closed under factorization, so it is not clear if formula lower bounds for an explicit family of polynomials imply lower bounds for multiples of the same family.
For the special case of the determinant, however, lower bounds on border formula complexity imply comparable lower bounds on the (border) formula complexity of multiples, as the following theorem shows.

\begin{theorem}[\cite{AF22}] \label{thm:vf det lb}
    Let $\F$ be a field of characteristic zero.
    Suppose that any border formula which computes the $n \times n$ determinant $\det_n(X)$ must have size at least $n^{\omega(1)}$.
    Then any border formula that computes a multiple of $\det_n(X)$ must have size $n^{\omega(1)}$.
\end{theorem}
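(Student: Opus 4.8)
The plan is to prove the contrapositive. Suppose some nonzero multiple $g(X) = \det_n(X)\cdot h(X)$ is computed by a border formula of size $s$; I will produce a border formula of size $\mathrm{poly}(s)$ for $\det_n(X)$ itself, so that the assumed $n^{\omega(1)}$ lower bound for $\det_n$ forces the same lower bound for every multiple.

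The first ingredient is a closure fact we will use twice: if $p$ has degree at most $D$ and a border formula of size $t$, then every homogeneous component of $p$ has a border formula of size $O(tD)$, obtained by evaluating $p$ at the $D+1$ scalings $\vec x \mapsto \gamma_i\vec x$ for distinct $\gamma_i \in \F$ and interpolating --- exactly the substitution-and-interpolation maneuver already carried out in the proof of \cref{lem:annihilator hardness}. Applying this to $g$, and using that $\det_n$ is homogeneous of degree $n$, the degree-$(n+j)$ homogeneous component of $g$ is precisely $\det_n\cdot h_j$, where $h_j$ is the degree-$j$ part of $h$. In particular the lowest nonvanishing component, $\det_n\cdot h_e$ where $e$ is the order of $h$ (the least degree of a monomial in $h$), has a border formula of size $O(s^2)$, since $D \le \deg g \le s$. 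If $e = 0$, that is $h(\vec 0)\ne 0$, this component is the scalar multiple $h(\vec 0)\cdot\det_n$ and we are done. So the task reduces to the case of a \emph{homogeneous} multiple $\det_n\cdot q$ with $q$ homogeneous of positive degree.

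The hard part is exactly this high-order case. The naive idea of ``dividing out $q$'' breaks down: for a homogeneous multiple the lowest and highest homogeneous components are the multiple itself, so the scaling trick yields nothing new, whereas shifting a variable so as to give $q$ a nonzero constant term destroys the homogeneity of $\det_n$ that made the extraction work, re-introducing uncontrolled cross terms. Here one must exploit structure particular to the determinant rather than argue about an arbitrary multiple; this is the route taken in \cite{AF22}. One passes to a convenient base point $A$ of rank $n-1$ --- where the hypersurface $\{\det_n = 0\}$ is smooth and $q(A)\ne 0$ --- and uses the adjugate identity $X\cdot\mathrm{adj}(X) = \det_n(X)\cdot I_n$, which expresses the determinant through its $(n-1)\times(n-1)$ minors, to strip the cofactor $q$ off one homogeneity degree at a time, each step costing only a polynomial factor in size; the fact that the border-formula complexity of $\det_n(X+A)$ is within a constant factor of that of $\det_n(X)$ is what lets us move freely to the point $A$. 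Composing all of these reductions converts the size-$s$ border formula for $\det_n\cdot q$ into a size-$\mathrm{poly}(s)$ border formula for $\det_n$, contradicting the hypothesis. I expect the genuinely delicate point to be this last step --- controlling the size blow-up across the $O(\deg q)$ stripping operations and making the choice of base point $A$ effective over $\F$, so that the final object is a bona fide border formula --- with everything else reducing to the routine interpolation above.
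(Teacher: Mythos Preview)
The paper does not prove this theorem: it is quoted verbatim from \cite{AF22} and used as a black box, with no argument supplied in the present paper. There is therefore no ``paper's own proof'' to compare your proposal against.

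As for the proposal itself: your reduction to the homogeneous case via interpolation is correct and routine (and indeed parallels the homogeneous-component extraction in \cref{lem:annihilator hardness}). The substantive step --- handling a homogeneous multiple $\det_n\cdot q$ with $q$ homogeneous of positive degree --- is where all the content lies, and your description of it is more a list of ingredients than an argument. You invoke the adjugate identity and a smooth base point of rank $n-1$, and speak of ``stripping the cofactor $q$ off one homogeneity degree at a time,'' but you do not say what operation performs a single stripping step, why it reduces $\deg q$ by one, or why its cost is polynomial in the border-formula size. You also flag the iteration count $O(\deg q)$ as a concern without addressing it: since $\deg q$ can be as large as $s$, a merely polynomial cost \emph{per step} would give $s^{O(s)}$ overall, which is useless. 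So as written the proposal has a genuine gap at exactly the point you identify as delicate. To close it you would need to exhibit the actual mechanism from \cite{AF22} (or an alternative) and give the size accounting.
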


Assuming lower bounds on the border formula complexity of the determinant, we can construct a $\VNC^0$-computable generator with one bit of stretch that hits $\VF$.
The proof is essentially the same as in \cref{lem:local enc hits vac0}; the only difference is that the unconditional lower bound of \cref{thm:vac0 det lb} is replaced by the conditional lower bound from \cref{thm:vf det lb}.

\begin{lemma} \label{lem:local enc hits vf}
    Let $\F$ be a field of characteristic zero.
    Assume that any border formula which computes the $n \times n$ determinant $\det_n(X)$ has size $n^{\omega(1)}$.
    Let $(\Phi_n)_{n \in \naturals}$ be a sequence of arithmetic circuits of size $s(n) \le n^{O(1)}$ such that $\Phi_n$ computes the $n \times n$ determinant $\det_n(X)$.
    Let $(A_n)_{n \in \naturals}$ be a sequence of $n \times n$ matrices.
    Let $\mathcal{G}_n : \F^{n^2 + s(n)} \to \F^{n^2+s(n)+1}$ be the local encoding of $\Phi_n(A_n) = 0$.
    Then $\mathcal{G} = (\mathcal{G}_n)_{n \in \naturals}$ is a $\VNC^0$-computable hitting set generator for $\VF$.
\end{lemma}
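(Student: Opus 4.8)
The plan is to transcribe the proof of \cref{lem:local enc hits vac0} almost verbatim, replacing the unconditional low-depth bound of \cref{thm:vac0 det lb} with the conditional formula bound of \cref{thm:vf det lb}. The $\VNC^0$-computability of $\mathcal{G}$ is immediate from \cref{lem:generator parameters}, which guarantees that every output of a local encoding is computed by a formula of size $2$; so the work is entirely in verifying the hitting property, and for that I would invoke \cref{cor:hardness to randomness}(2) with $f = \det_n$ and $\beta = 0$.

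First I would establish that every nonzero multiple of $\det_n(X)$ requires ordinary formulas of size $n^{\omega(1)}$. Under the hypothesis that $\det_n$ has border formula complexity $n^{\omega(1)}$, \cref{thm:vf det lb} gives that every multiple of $\det_n(X)$ has border formula complexity $n^{\omega(1)}$, and since the border formula complexity of a polynomial never exceeds its ordinary formula complexity, the same lower bound holds for ordinary formulas. Feeding this into \cref{cor:hardness to randomness}(2) yields that $\mathcal{G}_n$ hits formulas of size $\eps \sqrt{n^{\omega(1)}} = n^{\omega(1)}$.

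Finally I would run the same padding/hybrid-free argument as in \cref{lem:local enc hits vac0}: given a $p$-family $f = (f_n)_{n \in \naturals} \in \VF$, each $f_n$ is an $m$-variate polynomial with $m \le n^{O(1)}$ computed by a formula of size $n^{O(1)}$; choosing $q$ to be the largest integer with $q^2 + s(q) + 1 \le m$ and using $s(q) \le q^{O(1)}$ gives $q \ge m^{\Omega(1)} \ge n^{\Omega(1)}$, so the map $\mathcal{G}_q$, padded as in \cref{rem:padding}, hits $m$-variate formulas of size $q^{\omega(1)}$, which for all large $n$ dominates $n^{O(1)}$. Hence $\mathcal{G}$ hits $f$, and since $f \in \VF$ was arbitrary, $\mathcal{G}$ hits $\VF$.

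The only step needing any care — the ``hard part,'' though it is mild — is the passage from a border formula lower bound to an ordinary one, together with checking that the $n^{\omega(1)}$ quantifier genuinely beats the fixed polynomial bound $n^{O(1)}$ attached to an arbitrary $\VF$ family; both are handled by the standard fact that border complexity lower-bounds complexity and by taking $q$ polynomially large in $m$. Everything else is a direct copy of the $\VAC^0$ argument, which is exactly why the statement follows with essentially no new ideas.
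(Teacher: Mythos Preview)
Your proposal is correct and follows essentially the same route as the paper: $\VNC^0$-computability via \cref{lem:generator parameters}, hardness of multiples of $\det_n$ from \cref{thm:vf det lb} (with the border-to-ordinary passage you note), then \cref{cor:hardness to randomness}(2) to conclude $\mathcal{G}_n$ hits formulas of size $n^{\omega(1)}$, and finally the same $q \ge n^{\Omega(1)}$ padding argument. The paper's proof is terser about the final step but otherwise identical.
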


\begin{proof}
    The fact that $\mathcal{G}_n$ is $\VNC^0$-computable is an immediate consequence of \cref{lem:generator parameters}.
    To show that $\mathcal{G}$ hits $\VF$, we will invoke \cref{cor:hardness to randomness}, \cref{thm:vf det lb}, and the assumed lower bound on the border formula complexity of the determinant.
    Because any border formula that computes the $n \times n$ determinant must have size $n^{\omega(1)}$, \cref{thm:vf det lb} implies that any formula computing a multiple of $\det_{n}(X)$ must have size $t \ge n^{\omega(1)}$.
    It follows from \cref{cor:hardness to randomness} that the map $\mathcal{G}_n$ hits formulas of size $\eps \sqrt{t} \ge n^{\omega(1)}$ for some constant $\eps > 0$.
    In particular, for a $p$-family $(f_n)_{n \in \naturals} \in \VF$, the map $\mathcal{G}_{q(n)}$ (where $q(n) \ge n^{\Omega(1)}$ is the index of the $n$-output generator in $\mathcal{G}$) hits $f_n$ for sufficiently large $n$, since $f_n$ can be computed by a formula of size $n^{O(1)}$.
    Hence $\mathcal{G}$ hits $\VF$ as claimed.
\end{proof}

Once again, we can improve the stretch of the generator of \cref{lem:local enc hits vf} from $1$ to $n^{1 - \eps}$ for any constant $\eps > 0$ by applying $n^{1-\eps}$ copies of the generator in parallel.

\begin{theorem} \label{thm:vf generator}
    Let $\F$ be a field of characteristic zero and let $\eps > 0$ be a fixed constant.
    Assume that any border formula which computes the $n \times n$ determinant $\det_n(X)$ has size $n^{\omega(1)}$. 
    Then there is an explicit sequence of polynomial maps $\mathcal{G} = (\mathcal{G}_n : \F^{n - n^{1 - \eps}} \to \F^n)_{n \in \naturals}$ such that $\mathcal{G}$ is a $\VNC^0$-computable hitting set generator for $\VF$.
\end{theorem}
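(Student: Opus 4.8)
The plan is to mirror the proof of \cref{thm:vac0 generator} verbatim, replacing the appeal to \cref{lem:local enc hits vac0} by \cref{lem:local enc hits vf}. First I would take $\hat{\mathcal{G}} = (\hat{\mathcal{G}}_n : \F^{n-1} \to \F^n)_{n \in \naturals}$ to be the one-bit-stretch generator from \cref{lem:local enc hits vf}, padded as in \cref{rem:padding}, and define $\mathcal{G}_n : \F^{n - n^{1-\eps}} \to \F^n$ by concatenating $n^{1-\eps}$ independent copies of $\hat{\mathcal{G}}_{n^\eps}$ on disjoint blocks of fresh variables $\vec{x}^{(j)}, \vec{y}^{(j)}$, exactly as in the $\VAC^0$ case. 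Since each output of $\hat{\mathcal{G}}$ is computable by a constant-size arithmetic formula, the same holds for every output of $\mathcal{G}$, so $\mathcal{G}$ is $\VNC^0$-computable.

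Second, to prove that $\mathcal{G}$ hits $\VF$ I would run the same hybrid argument. Suppose some $f = (f_n)_{n \in \naturals} \in \VF$ is not hit by $\mathcal{G}$; without loss of generality $f_n$ has variables $z_1, \ldots, z_n$, which I partition into $n^{1-\eps}$ blocks $\vec{z}^{(j)}$. Define the hybrid polynomials $f_{n,0}, \ldots, f_{n,n^{1-\eps}}$ where $f_{n,t}$ substitutes $\hat{\mathcal{G}}_{n^\eps}(\vec{x}^{(j)}, \vec{y}^{(j)})$ into the first $t$ blocks and leaves the remaining blocks as $\vec{z}^{(j)}$. The failure of $\mathcal{G}$ to hit $f$ forces $f_{n,0} \neq 0$ and $f_{n,n^{1-\eps}} = 0$ for all large $n$, so there is an index $i \in [n^{1-\eps}]$ with $f_{n,i-1} \neq 0$ and $f_{n,i} = 0$. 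Because $\F$ is infinite, I can fix field values for all variables $\vec{x}^{(j)}, \vec{y}^{(j)}$ with $j < i$ and all $\vec{z}^{(j)}$ with $j > i$ so that the resulting polynomial $g_n(\vec{z}^{(i)})$ is nonzero; yet $g_n \circ \hat{\mathcal{G}}_{n^\eps} = 0$ since $f_{n,i} = 0$.

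Third — the only spot where anything class-specific is used — I would observe that $g = (g_n)_{n \in \naturals}$ lies in $\VF$: the polynomial $g_n$ is obtained from $f_n \in \VF$ by substituting field constants and the constant-size formula outputs of $\hat{\mathcal{G}}_{n^\eps}$ for some of the variables of $f_n$, and a polynomial-size formula stays polynomial-size under such substitutions. Hence $g$ is a $\VF$-family that $\hat{\mathcal{G}}$ fails to hit, contradicting \cref{lem:local enc hits vf}, and the theorem follows.

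The main obstacle, such as it is, is merely to confirm that $\VF$ is closed under the projections used in the hybrid argument (variable-to-constant substitution and composition with constant-size formulas); this is immediate. No new idea beyond the $\VAC^0$ proof is needed, since the argument goes through for any class with these closure properties — which is also why the identical template will apply to $\VBP$ in \cref{thm:vbp generator}.
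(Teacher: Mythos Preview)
Your proposal is correct and matches the paper's approach exactly: the paper also takes $\hat{\mathcal{G}}$ from \cref{lem:local enc hits vf}, concatenates $n^{1-\eps}$ independent copies of $\hat{\mathcal{G}}_{n^\eps}$, and defers the hitting property to the hybrid argument of \cref{thm:vac0 generator}. Your explicit check that $\VF$ is closed under the relevant projections is the only detail the paper leaves implicit.
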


\begin{proof}
    Let $\hat{\mathcal{G}}$ be the generator of \cref{lem:local enc hits vf}.
    We take $\mathcal{G}_n : \F^{n - n^{1-\eps}} \to \F^n$ to be the concatenation of $n^{1 - \eps}$ independent copies of $\hat{\mathcal{G}}_{n^{\eps}}$.
    The $\VNC^0$-computability of $\hat{\mathcal{G}}$ implies that $\mathcal{G}$ is $\VNC^0$-computable.
    The fact that $\mathcal{G}$ hits $\VF$ follows from the fact that $\hat{\mathcal{G}}$ hits $\VF$, using a hybrid argument in exactly the same manner as in the proof of \cref{thm:vac0 generator}.
\end{proof}

\subsection{Branching Programs}

In our final application, we construct a $\VNC^0$-computable generator that hits $\VBP$.
To do this, we make use of the fact that $\VBP$ is closed under factoring.

\begin{theorem}[\cite{ST21c}] \label{thm:vbp closed under factoring}
    Let $\F$ be a field of characteristic zero.
    Suppose that $f(\vec{x})$ can be computed by an algebraic branching program of size $s$.
    Then every irreducible factor of $f$ can be computed by an algebraic branching program of size $s^{O(1)}$.
\end{theorem}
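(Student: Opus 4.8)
The plan is to adapt Kaltofen's classical strategy for factoring polynomials given by arithmetic circuits, paying close attention to keeping every intermediate object representable by a branching program of polynomially bounded size. Since $\F$ has characteristic zero, all the analytic ingredients (squarefree parts, discriminants, Bertini/Hilbert irreducibility) behave well. The first thing I would do is assemble an ABP ``toolkit'': branching programs are closed under $+$ and $\times$ with additive size blowup, under affine substitutions of the variables, under exact division by a polynomial that is monic in a distinguished variable $x_1$, and --- most importantly --- under extracting a fixed power of $x_1$ (equivalently, a bounded-degree homogeneous truncation with respect to a distinguished variable, obtained by interpolation), the latter two costing only a $\mathrm{poly}(d)$ factor where $d=\deg f \le s$. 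Granting this toolkit, the proof reduces to the claim that if $f$ is monic and squarefree in $x_1$ and has a size-$s$ ABP, then one can write down ABPs of size $\mathrm{poly}(s,d)$ for all of its irreducible factors simultaneously. The general case reduces to this by replacing $f$ with its squarefree part $f/\gcd(f,\partial f/\partial x_1)$ (same irreducible factors), and by a generic invertible linear change of variables making $f$ monic in $x_1$; the gcd and the ensuing exact division are subresultant manipulations which, being essentially determinantal, can be kept inside $\VBP$.

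For the squarefree monic case I would argue as follows. The discriminant of $f$ with respect to $x_1$ is a nonzero polynomial in $x_2,\dots,x_n$, so after a generic shift $x_i \mapsto x_i + a_i$ the univariate polynomial $f(x_1,0,\dots,0)$ is squarefree; moreover, by effective Hilbert irreducibility (a Bertini-type statement), for a generic shift every irreducible factor of $f$ remains irreducible under this specialization, and distinct irreducible factors specialize to coprime polynomials. Factor the univariate polynomial $f(x_1,0,\dots,0)=\prod_i p_i(x_1)$ into distinct irreducibles (this need not be an ``efficient'' step --- we only need existence of the final ABPs), and use Hensel's lemma to lift this coprime factorization uniquely to $f \equiv \prod_i P_i \pmod{(x_2,\dots,x_n)^{d+1}}$. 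By uniqueness of the Hensel lift together with the Hilbert-irreducibility property, the lifted factor $P_i$ reducing to $p_i$ equals the genuine irreducible factor of $f$ reduced modulo $(x_2,\dots,x_n)^{d+1}$; and since that factor has total degree at most $d$, it already equals its own such reduction. Thus ABPs for the $P_i$ are ABPs for the irreducible factors of $f$ themselves. I would carry out the lift by Newton iteration, which doubles the working precision per round and hence finishes in $O(\log d)$ rounds; introducing a homogenizing variable $t$ via $x_i \mapsto t x_i$ for $i\ge 2$ converts ``truncate to precision $k$'' into ``discard $t$-degrees above $k$'', and I would perform the single truncation (to $t$-degree $d$, by interpolation on $t$) only at the very end.

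The step I expect to be the real obstacle is controlling the ABP size through the Newton iteration. A round needs a bounded number of ABP products together with a power-series inversion modulo the current precision, and a lazy implementation that re-truncates after each of the $O(\log d)$ rounds pays a $\mathrm{poly}(d)$ size factor $\Theta(\log d)$ times, giving a quasipolynomial $d^{\Theta(\log d)}$ blowup rather than $\mathrm{poly}(d)$. The resolution, which is the technical heart of \textcite{ST21c}, is to run the whole iteration without intermediate truncation: because ABP multiplication is additive in size, an $O(\log d)$-deep tree of multiplications yields an ABP of size only $2^{O(\log d)}\cdot s=\mathrm{poly}(d)\cdot s$, and the $t$-degrees of the Newton iterates form a geometric progression cut off after $O(\log d)$ rounds, so the iterates remain honest polynomials of degree $\mathrm{poly}(d)$; one then truncates exactly once. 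The power-series inverse used inside each round (an inverse modulo $(x_2,\dots,x_n)^{k}$ of a series with nonzero constant term) is handled by its own logarithmically-many-round Newton scheme, analyzed the same way. Making all of this precise --- the subresultant-based squarefree reduction and exact division, the uniqueness and genuineness of the Hensel lift, the coefficient-extraction lemma for ABPs, and the amortized size accounting across the nested Newton iterations --- is exactly the content of \textcite{ST21c}.
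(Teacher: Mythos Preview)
The paper does not prove this theorem at all: it is quoted as a black-box result from \textcite{ST21c} and used only via its contrapositive in \cref{lem:local enc hits vbp}. There is therefore no ``paper's own proof'' to compare your proposal against.

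That said, your sketch is a faithful outline of the Kaltofen-style argument as adapted to branching programs in the cited work, and you have correctly identified the crux: ABP multiplication has \emph{additive} size cost, so an $O(\log d)$-depth product tree costs only $\mathrm{poly}(d)\cdot s$ rather than the quasipolynomial blowup one might fear. Your plan for the squarefree reduction, the generic shift plus effective Hilbert irreducibility, and the Hensel/Newton lift with a single final truncation is the right shape. One small caution: the ``power-series inverse'' inside each Newton round is not quite a free-standing inversion but rather a resultant/cofactor computation to recover the B\'ezout coefficients of the current approximate factorization, and in the ABP setting this is handled by expressing those coefficients as minors (hence determinants, hence ABPs) of a Sylvester-type matrix whose entries are themselves ABP-computable. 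Your description of running a nested Newton scheme for the inverse is workable but slightly heavier than what is actually done; the determinantal route is cleaner and stays manifestly in $\VBP$. Apart from that, your proposal matches the intended argument.
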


The contrapositive of \cref{thm:vbp closed under factoring} says that if a polynomial $f$ requires branching programs of super-polynomial size, then the same is true for all multiples of $f$.
This allows us to instantiate \cref{cor:hardness to randomness} under the assumption that there is a $p$-family $(f_n)_{n \in \naturals} \in \VP$ such that $f_n$ requires branching programs of size $n^{\omega(1)}$.

\begin{lemma} \label{lem:local enc hits vbp}
    Let $\F$ be a field of characteristic zero.
    Assume there is a $p$-family $(f_n)_{n \in \naturals} \in \VP$ such that $f_n$ requires branching programs of size $n^{\omega(1)}$.
    Let $(\Phi_n)_{n \in \naturals}$ be a sequence of arithmetic circuits of size $s(n) \le n^{O(1)}$ such that $\Phi_n$ computes $f_n$.
    Let $(\vec{\alpha}^{(n)})_{n \in \naturals}$ be a sequence of points in $\F^n$ and let $(\beta_n)_{n \in \naturals}$ be a sequence of field elements.
    Let $\mathcal{G}_n : \F^{n + s(n)} \to \F^{n+s(n)+1}$ be the local encoding of $\Phi_n(\vec{\alpha}^{(n)}) = \beta_n$.
    Then $\mathcal{G} = (\mathcal{G}_n)_{n \in \naturals}$ is a $\VNC^0$-computable hitting set generator for $\VBP$.
\end{lemma}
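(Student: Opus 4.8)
The plan is to mirror the proofs of \cref{lem:local enc hits vac0,lem:local enc hits vf}, with \cref{thm:vbp closed under factoring} supplying the closure property that the determinant-specific results supplied there. The $\VNC^0$-computability of $\mathcal{G}$ is immediate from \cref{lem:generator parameters}, since every output of a local encoding is a size-$2$ formula, so the content is entirely in the hitting property. I would derive it from part~(3) of \cref{cor:hardness to randomness}, and the hypothesis I need to verify is that every nonzero multiple of $f_n(\vec{x}) - \beta_n$ requires branching programs of size $n^{\omega(1)}$.

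To verify this, first note that $f_n - \beta_n$ itself requires branching programs of size $n^{\omega(1)}$: a branching program of size $t$ for $f_n - \beta_n$ yields one of size $t + O(1)$ for $f_n$ by adding the constant $\beta_n$ along an edge, so the assumed lower bound on $f_n$ transfers. Now let $p = (f_n - \beta_n)\, r$ be a nonzero multiple computed by a branching program of size $t$, and factor $f_n - \beta_n = c \prod_i g_i^{e_i}$ into irreducibles. Each $g_i$ is an irreducible factor of $p$, so by \cref{thm:vbp closed under factoring} it has branching program complexity $t^{O(1)}$ with a fixed exponent independent of $i$; composing these branching programs in series, with multiplicities $e_i$, computes $f_n - \beta_n$ with a branching program of size $\del{\sum_i e_i}\, t^{O(1)} \le \deg(f_n)\, t^{O(1)} = n^{O(1)}\, t^{O(1)}$. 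Since $f_n - \beta_n$ requires branching programs of size $n^{\omega(1)}$, this forces $t \ge n^{\omega(1)}$. Hence every nonzero multiple of $f_n - \beta_n$ requires branching programs of size $n^{\omega(1)}$ --- this is exactly the contrapositive-of-\cref{thm:vbp closed under factoring} reasoning recorded in the text preceding the lemma --- and part~(3) of \cref{cor:hardness to randomness} then gives that $\mathcal{G}_n$ hits branching programs of size $\eps \sqrt{t} \ge n^{\omega(1)}$.

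Finally, I would conclude that $\mathcal{G}$ hits $\VBP$ by the same bookkeeping as at the end of the proof of \cref{lem:local enc hits vac0}: given a $p$-family $(h_n)_{n \in \naturals} \in \VBP$ on $m = m(n) \le n^{O(1)}$ variables, let $q$ be the largest integer with $q + s(q) + 1 \le m$; since $s(q) \le q^{O(1)}$ we get $q \ge m^{\Omega(1)} \ge n^{\Omega(1)}$, and $\mathcal{G}_q$, padded as in \cref{rem:padding}, hits branching programs of size super-polynomial in $q$, which exceeds the $n^{O(1)}$-size branching program computing $h_n$ once $n$ is large; thus $\mathcal{G}$ hits $h$.

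I do not expect a genuine obstacle here. The only step requiring a little care is the passage from hardness of $f_n$ to hardness of all nonzero multiples of $f_n - \beta_n$, which --- unlike the $\VAC^0$ and $\VF$ cases --- goes through closure of $\VBP$ under factorization rather than a dedicated hardness-of-multiples theorem; one must check that reassembling $f_n - \beta_n$ from branching programs for its irreducible factors incurs only a polynomial overhead, which holds because $\deg(f_n) \le n^{O(1)}$ bounds the number of factors counted with multiplicity.
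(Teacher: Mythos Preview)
Your proposal is correct and follows essentially the same route as the paper: $\VNC^0$-computability from \cref{lem:generator parameters}, hardness of multiples of $f_n-\beta_n$ via \cref{thm:vbp closed under factoring}, then part~(3) of \cref{cor:hardness to randomness}, and the same padding bookkeeping to finish. The paper's proof simply invokes the contrapositive of \cref{thm:vbp closed under factoring} as stated in the paragraph preceding the lemma, whereas you spell out the reassembly-from-irreducible-factors argument explicitly (which is the right thing to do, since $f_n-\beta_n$ need not itself be irreducible); this is additional detail rather than a different approach.
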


\begin{proof}
    The fact that $\mathcal{G}_n$ is $\VNC^0$-computable follows immediately from \cref{lem:generator parameters}.
    To show that $\mathcal{G}_n$ hits $\VBP$, we use \cref{cor:hardness to randomness}, \cref{thm:vbp closed under factoring}, and the assumed lower bound on the $p$-family $(f_n)_{n \in \naturals}$.
    By assumption, any branching program that computes $f_n - \beta_n$ must have size $n^{\omega(1)}$.
    \cref{thm:vbp closed under factoring} implies that any multiple of $f_n - \beta_n$ likewise requires branching programs of size $t \ge n^{\omega(1)}$.
    It follows from \cref{cor:hardness to randomness} that $\mathcal{G}_n$ hits branching programs of size $\eps \sqrt{t} \ge n^{\omega(1)}$ for some constant $\eps > 0$.
    In particular, for a $p$-family $(h_n)_{n \in \naturals} \in \VBP$, the map $\mathcal{G}_q$ (where $q(n) \ge n^{\Omega(1)}$ is the index of the $n$-output generator in $\mathcal{G}$) hits $h_n$ for sufficiently large $n$, since $h_n$ can be computed by a branching program of size $n^{O(1)}$.
    Hence $\mathcal{G}$ hits $\VBP$.
\end{proof}

Once again, we can improve the stretch of the generator of \cref{lem:local enc hits vbp} from $1$ to $n^{1 - \eps}$ by applying $n^{1-\eps}$ copies of the generator in parallel.
The proof is identical to the proofs of \cref{thm:vac0 generator} and \cref{thm:vf generator}, so we omit the details.

\begin{theorem} \label{thm:vbp generator}
    Let $\F$ be a field of characteristic zero and let $\eps > 0$ be a fixed constant.
    Assume there is a $p$-family $(f_n)_{n \in \naturals} \in \VP$ such that $f_n$ requires branching programs of size $n^{\omega(1)}$.
    Then there is an explicit sequence of polynomial maps $\mathcal{G} = (\mathcal{G}_n : \F^{n - n^{1 - \eps}} \to \F^n)_{n \in \naturals}$ such that $\mathcal{G}$ is a $\VNC^0$-computable hitting set generator for $\VBP$.
\end{theorem}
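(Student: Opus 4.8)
The plan is to lift the one-bit-of-stretch generator of \cref{lem:local enc hits vbp} to one with stretch $n^{1-\eps}$ by running many independent copies in parallel, exactly as in the proofs of \cref{thm:vac0 generator} and \cref{thm:vf generator}. Let $\hat{\mathcal{G}} = (\hat{\mathcal{G}}_n : \F^{n-1} \to \F^n)_{n \in \naturals}$ be the generator produced by \cref{lem:local enc hits vbp}, padded as in \cref{rem:padding} so that the $n$\ts{th} map has output length $n$ and one bit of stretch. Define $\mathcal{G}_n : \F^{n-n^{1-\eps}} \to \F^n$ by partitioning the input variables into $n^{1-\eps}$ blocks $\vec{x}^{(1)} \sqcup \cdots \sqcup \vec{x}^{(n^{1-\eps})}$, each of size $n^\eps - 1$, and setting
\[
    \mathcal{G}_n(\vec{x}) \coloneqq \del{\hat{\mathcal{G}}_{n^\eps}(\vec{x}^{(1)}), \ldots, \hat{\mathcal{G}}_{n^\eps}(\vec{x}^{(n^{1-\eps})})}.
\]
Since every coordinate of $\mathcal{G}_n$ is a coordinate of some copy of $\hat{\mathcal{G}}_{n^\eps}$, and $\hat{\mathcal{G}}$ is $\VNC^0$-computable, the family $\mathcal{G} = (\mathcal{G}_n)_{n \in \naturals}$ is $\VNC^0$-computable.

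It remains to show that $\mathcal{G}$ hits $\VBP$, which I would prove by the standard hybrid argument. Suppose not: there is a $p$-family $f = (f_n)_{n \in \naturals} \in \VBP$ that $\mathcal{G}$ fails to hit, so for infinitely many $n$ there is a nonzero polynomial $f_n$, which we may assume (after the usual padding) to be $n$-variate, with $f_n \circ \mathcal{G}_n = 0$. Partition the variables $\vec{z}$ of $f_n$ into $n^{1-\eps}$ blocks $\vec{z}^{(1)} \sqcup \cdots \sqcup \vec{z}^{(n^{1-\eps})}$ matching the output blocks of $\mathcal{G}_n$, and for $0 \le j \le n^{1-\eps}$ define the hybrid
\[
    f_{n,j} \coloneqq f_n\del{\hat{\mathcal{G}}_{n^\eps}(\vec{x}^{(1)}), \ldots, \hat{\mathcal{G}}_{n^\eps}(\vec{x}^{(j)}), \vec{z}^{(j+1)}, \ldots, \vec{z}^{(n^{1-\eps})}}.
\]
Then $f_{n,0} = f_n \neq 0$ while $f_{n,n^{1-\eps}} = f_n \circ \mathcal{G}_n = 0$, so there is an index $i \in [n^{1-\eps}]$ with $f_{n,i-1} \neq 0$ and $f_{n,i} = 0$. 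Because $\F$ is infinite, we may specialize every variable outside block $i$ to a field constant (including evaluating the copies $\hat{\mathcal{G}}_{n^\eps}(\vec{x}^{(k)})$ for $k < i$ at such constants) so that the resulting restriction $g_n$ of $f_n$ is still nonzero but satisfies $g_n \circ \hat{\mathcal{G}}_{n^\eps} = 0$. Since $g_n$ is a restriction of $f_n$, the family $g = (g_n)_{n \in \naturals}$ is a $p$-family in $\VBP$, and $\hat{\mathcal{G}}$ fails to hit $g$, contradicting \cref{lem:local enc hits vbp}.

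The only points requiring attention are routine bookkeeping ones, and I do not expect a genuine obstacle. First, one checks that a restriction of a polynomial computed by a branching program of size $s$ is again computed by a branching program of size at most $s$ — substitute the fixed coordinates into each edge label — and that the degree does not increase, so $g$ is indeed a $p$-family in $\VBP$. Second, the block sizes must line up: $\hat{\mathcal{G}}_{n^\eps}$ has output length $n^\eps$, so each $\vec{z}^{(k)}$ has $n^\eps$ variables and each hybrid $f_{n,j}$ is well-formed; if $g_n$ happens not to use all $n^\eps$ variables of block $i$, one pads it as in \cref{rem:padding}, just as in the proof of \cref{thm:vac0 generator}. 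Third, $\eps$ is a fixed constant, so $q \coloneqq n^\eps \ge n^{\Omega(1)}$ and the conclusion of \cref{lem:local enc hits vbp} applies at index $q$ for all sufficiently large $n$. Assembling these observations yields the theorem; the argument is word-for-word that of the proofs of \cref{thm:vac0 generator} and \cref{thm:vf generator}, with $\VAC^0$ (resp.\ $\VF$) replaced by $\VBP$ and \cref{lem:local enc hits vac0} (resp.\ \cref{lem:local enc hits vf}) replaced by \cref{lem:local enc hits vbp}.
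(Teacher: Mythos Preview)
Your proposal is correct and follows exactly the approach the paper intends: the paper's own proof simply states that the argument is identical to those of \cref{thm:vac0 generator} and \cref{thm:vf generator} and omits the details. Your writeup of the parallel-copies-plus-hybrid argument (with \cref{lem:local enc hits vbp} in place of \cref{lem:local enc hits vac0}/\cref{lem:local enc hits vf}) is precisely what is being omitted.
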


\section{The Ideal Proof System and Cryptographic Generators} \label{sec:geom ips}

We now turn our attention to algebraic proof complexity.
An easy observation, already due to \textcite[Section 4.2]{GKSS17}, shows that a hitting set generator $\mathcal{G}$ for a circuit class $\mathscr{C}$ implies lower bounds against the Geometric $\mathscr{C}$-Ideal Proof System.
If the generator is $\mathscr{D}$-computable, then the hard instance for Geometric $\mathscr{C}$-IPS consists of polynomials that are likewise computable in $\mathscr{D}$, which we term $\mathscr{D}$-equations following \cref{def:d-equation}.
Together with our construction of $\VNC^0$-computable hitting set generators from \cref{sec:instantiate}, this lets us conclude lower bounds against Geometric $\mathscr{C}$-IPS for $\mathscr{C} \in \set{\VAC^0, \VF, \VBP}$ for systems of $\VNC^0$-equations.

\subsection{Geometric IPS Lower Bounds from Hitting Set Generators}

We begin with the relationship between hitting set generators and Geometric IPS, which was already observed by \textcite[Section 4.2]{GKSS17}.
Suppose $\mathcal{G} = (g_1(\vec{y}), \ldots, g_n(\vec{y}))$ is a generator that hits a class of polynomials $\mathscr{C}$.
This means that for every nonzero $f \in \F[\vec{x}]$ satisfying $(f \circ \mathcal{G})(\vec{y}) = 0$, we have the non-membership $f \notin \mathscr{C}$.
Consider the system of polynomial equations $\mathcal{F} = \set{g_1(\vec{y}) = 0, \ldots, g_n(\vec{y}) = 0}$, and suppose that the system $\mathcal{F}$ is unsatisfiable and can be refuted by Geometric IPS.
Let $r(\vec{x})$ be a Geometric IPS refutation of $\mathcal{F}$.
By definition, the polynomial $r$ satisfies the equations 
\begin{align*}
    r(0,\ldots,0) &= 1 \\
    r(g_1(\vec{y}), \ldots, g_n(\vec{y})) &= 0.
\end{align*}
The polynomials that satisfy these equations are precisely the annihilators of $\mathcal{G}$ whose constant term is $1$.
Because the refutation $r$ is a nonzero annihilator of $\mathcal{G}$ and $\mathcal{G}$ hits $\mathscr{C}$, we conclude that $r \notin \mathscr{C}$, which is a lower bound on the complexity of $r$.

Thus, Geometric IPS refutations are simply annihilators with a nonzero constant term.
It seems reasonable to conjecture that if there is an easily-computable annihilator, and an annihilator with a nonzero constant term exists, then there is also an easily-computable annihilator with a nonzero constant term.
Put more informally, it is not outlandish to think that if we can prove Geometric IPS lower bounds, then we can upgrade the hard system of equations into a hitting set generator.
We are unable to prove such a statement.
Despite this, we believe that studying Geometric IPS is a meaningful stepping stone towards obtaining hitting set generators.
This very paper is evidence for our belief: we first proved lower bounds against Geometric IPS, and only later realized the same construction yields hitting set generators with a minor variation on the proof.

We formalize the preceding argument in the following lemma.

\begin{lemma}[\cite{GKSS17}] \label{lem:geom ips from hsg}
    Let $\F$ be a field and let $\mathscr{C}$ and $\mathscr{D}$ be complexity classes.
    Let $\mathcal{G} = (\mathcal{G}_n : \F^{\ell(n)} \to \F^{n})_{n \in \naturals}$ be a $\mathscr{D}$-computable hitting set generator for $\mathscr{C}$.
    Then the family of equations $\mathcal{F} = (\mathcal{F}_n)_{n \in \naturals}$ given by
    \[
        \mathcal{F}_n = \set{g_{n,1}(\vec{y}) = 0, \ldots, g_{n,n}(\vec{y}) = 0},
    \]
    where $\mathcal{G}_n = (g_{n,1}, \ldots, g_{n,n})$, is a family of $\mathscr{D}$-computable equations that is hard for Geometric $\mathscr{C}$-IPS.
\end{lemma}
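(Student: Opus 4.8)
The plan is to verify the two halves of the statement in turn: that $\mathcal{F}$ is a family of $\mathscr{D}$-equations, and that it admits no Geometric $\mathscr{C}$-IPS refutation. The first half is immediate from unwinding definitions. By \cref{def:hsg}, every $p$-family of the form $(g_{n,i_n})_{n \in \naturals}$ lies in $\mathscr{D}$; since $\mathcal{F}_n = \set{g_{n,1}(\vec{y}) = 0, \ldots, g_{n,n}(\vec{y}) = 0}$, this is precisely the requirement of \cref{def:d-equation}, so $\mathcal{F}$ is a family of $\mathscr{D}$-equations.

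For the hardness half, I would argue by contradiction using \cref{lem:hitting vs annihilators}. Suppose $\mathcal{F}$ can be refuted by Geometric $\mathscr{C}$-IPS; then there is a $p$-family $r = (r_n)_{n \in \naturals} \in \mathscr{C}$, with $r_n \in \F[z_1,\ldots,z_n]$, such that each $r_n$ is a Geometric IPS refutation of $\mathcal{F}_n$. By \cref{def:geometric ips}, this means $r_n(g_{n,1}(\vec{y}), \ldots, g_{n,n}(\vec{y})) = 0$ and $r_n(0,\ldots,0) = 1$. The first identity says exactly that $r_n$ lies in the annihilator ideal $\Ann{\mathcal{G}_n}$, and the second forces $r_n$ to have a nonzero constant term, hence $r_n \neq 0$. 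So $(r_n)_{n \in \naturals}$ is a $p$-family (being an element of $\mathscr{C}$) with $r_n \in \Ann{\mathcal{G}_n} \setminus \set{0}$ for every $n$.

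Now \cref{lem:hitting vs annihilators} says that since $\mathcal{G}$ is a hitting set generator for $\mathscr{C}$, no $p$-family $(f_n)_{n \in \naturals}$ with $f_n \in \Ann{\mathcal{G}_n} \setminus \set{0}$ can lie in $\mathscr{C}$. Applying this to $(r_n)_{n \in \naturals}$ contradicts $r \in \mathscr{C}$. Hence there is no Geometric $\mathscr{C}$-IPS refutation of $\mathcal{F}$, i.e., $\mathcal{F}$ is hard for Geometric $\mathscr{C}$-IPS.

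I do not anticipate any genuine obstacle: the heart of the argument is just the observation that a Geometric IPS refutation of $\mathcal{F}_n$ is nothing but a nonzero annihilator of $\mathcal{G}_n$ with constant term $1$, after which \cref{lem:hitting vs annihilators} does all the work. The only minor points to keep straight are bookkeeping ones: the refutation polynomial $r_n$ of the $n$-equation system $\mathcal{F}_n$ lives in $n$ variables, which is exactly the number of outputs of $\mathcal{G}_n$, so the containment $r_n \in \Ann{\mathcal{G}_n}$ is meaningful; and if one reads \cref{def:geometric ips} strictly, so that a refutation of $\mathcal{F}_n$ is only defined when $\mathcal{F}_n$ is unsatisfiable over $\overline{\F}$, then indices $n$ for which $\mathcal{F}_n$ happens to be satisfiable contribute nothing and only make the hardness conclusion easier to reach.
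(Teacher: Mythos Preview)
Your proposal is correct and follows essentially the same argument as the paper: observe that a Geometric IPS refutation of $\mathcal{F}_n$ is precisely a nonzero element of $\Ann{\mathcal{G}_n}$ (with constant term $1$), and then use the hitting property of $\mathcal{G}$ to conclude that no such refutation can lie in $\mathscr{C}$. The only cosmetic differences are that you phrase it as a proof by contradiction and explicitly invoke \cref{lem:hitting vs annihilators}, whereas the paper appeals directly to the hitting property; you also spell out the $\mathscr{D}$-computability of the equations, which the paper leaves implicit.
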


\begin{proof}
    Let $(r_n)_{n \in \naturals}$ be a $p$-family such that $r_n(\vec{x})$ is a Geometric IPS refutation of $\mathcal{F}_n$.
    The definition of a Geometric IPS refutation implies
    \begin{align*}
        r_n(0,\ldots,0) &= 1 \\
        r_n(g_{n,1}(\vec{y}), \ldots, g_{n,n}(\vec{y})) &= 0.
    \end{align*}
    The first equality implies $r_n(\vec{x}) \neq 0$, while the second implies $r_n \in \Ann{\mathcal{G}_n}$.
    Because $\mathcal{G}$ hits $\mathscr{C}$, it follows that $r \notin \mathscr{C}$, hence there is no Geometric $\mathscr{C}$-IPS refutation of $\mathcal{F}$.
\end{proof}

We note that in the setting of \cref{lem:geom ips from hsg}, neither the explicitness nor the stretch of the hitting set generator are of primary concern.
The resulting lower bounds are interesting even when the generator is non-explicit and has stretch $1$.

\subsection{Lower Bounds for Geometric IPS}

We now apply \cref{lem:geom ips from hsg} to the hitting set generators of \cref{sec:instantiate}, obtaining systems of $\VNC^0$-equations that are hard for subsystems of Geometric IPS.
The lower bounds we obtain here directly correspond to the generator constructions of \cref{sec:instantiate}: we obtain an unconditional lower bound against Geometric $\VAC^0$-IPS and conditional lower bounds against Geometric $\VF$-IPS and Geometric $\VBP$-IPS.

We stress that because Geometric IPS is an incomplete proof system, it is not enough for our hard instances to be unsatisfiable; we must ensure that Geometric IPS is powerful enough to refute these systems of equations in the first place.
For the systems we consider, this property is easy to verify, as the polynomial that generates the annihilator ideal $\Ann{\mathcal{G}}$ is a valid Geometric IPS refutation.

We start with a system of $\VNC^0$-computable equations that are unconditionally hard to refute in Geometric $\VAC^0$-IPS.

\begin{theorem} \label{thm:vac0 geom ips lb}
    Let $\F$ be a field of characteristic zero.
    Let $(\Phi_n)_{n \in \naturals}$ be a sequence of arithmetic circuits of size $s(n) \le n^{O(1)}$ such that $\Phi_n$ computes the $n \times n$ determinant $\det_n(X)$.
    Let $(A_n)_{n \in \naturals}$ be a sequence of $n \times n$ matrices such that $\det(A_n) \neq 0$ for all $n \in \naturals$.
    Let $\mathcal{G}_n : \F^{n^2 + s(n)} \to \F^{n^2+s(n)+1}$ be the local encoding of $\Phi_n(A_n) = 0$.
    Let $\mathcal{F}_n$ be the system of equations given by
    \[
        \mathcal{F}_n = \set{g_{n,1}(\vec{y}) = 0, \ldots, g_{n, n^2 + s(n) + 1}(\vec{y}) = 0},
    \]
    where $\mathcal{G}_n = (g_{n,1}, \ldots, g_{n, n^2 + s(n) + 1})$.
    Then $\mathcal{F} = (\mathcal{F}_n)_{n \in \naturals}$ is a sequence of unsatisfiable $\VNC^0$-equations that can be refuted by Geometric $\VP$-IPS, but cannot be refuted by Geometric $\VAC^0$-IPS.
\end{theorem}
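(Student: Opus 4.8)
We must establish four things about $\mathcal{F} = (\mathcal{F}_n)_{n\in\naturals}$: that each equation is a $\VNC^0$-equation, that $\mathcal{F}_n$ is unsatisfiable over $\overline{\F}$, that $\mathcal{F}_n$ admits a Geometric $\VP$-IPS refutation, and that $\mathcal{F}_n$ admits no Geometric $\VAC^0$-IPS refutation. Throughout I write $\mathcal{G}_n$ for the local encoding of $\Phi_n(A_n) = 0$, so that in the language of \cref{def:local encoding} the point $\vec{\alpha}$ is the flattened matrix $A_n$ and the target value is $\beta = 0$. The first claim is immediate from \cref{lem:generator parameters}: every coordinate $g_{n,i}$ of $\mathcal{G}_n$ is computable by an arithmetic formula of size $2$, so every $p$-family of the form $(g_{n,i_n})_{n}$ lies in $\VNC^0$, which is exactly the statement that $\mathcal{F}$ is a family of $\VNC^0$-equations.

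For unsatisfiability I would argue directly: the equations $\mathcal{G}_{\mathrm{input}} = 0$ force the $x$-variables to equal the entries of $A_n$; given this, the equations $\mathcal{G}_{\mathrm{internal}} = 0$ force, by induction along the topological order, each $y_i$ to equal the value computed by the $i$-th gate of $\Phi_n$ on input $A_n$, and in particular force $y_{s(n)} = \det(A_n)$; but $\mathcal{G}_{\mathrm{output}} = 0$ asserts $y_{s(n)} = \beta = 0$, so a common zero over $\overline{\F}$ would give $\det(A_n) = 0$, contradicting the hypothesis. Hence $\mathcal{F}_n$ is unsatisfiable, and it makes sense to ask whether Geometric IPS can refute it. By \cref{lem:ann ideal principal} and \cref{prop:ann ideal generator}, the ideal $\Ann{\mathcal{G}_n}$ is generated by a polynomial of the form
\[
    h_n(\vec{z}) = z_{n^2+s(n)+1} - \det_n(\vec{z} + \vec{\alpha}) + g_n(\vec{z}),
\]
where $g_n \in \abr{z_{n^2+1}, \ldots, z_{n^2+s(n)}}$ and, by part (3) of \cref{prop:ann ideal generator}, $h_n$ is computed by a circuit of size $O(s(n)) = n^{O(1)}$; taking $\Phi_n$ to have syntactic degree $n^{O(1)}$ (as any reasonable circuit for $\det_n$ does) ensures $\deg h_n = n^{O(1)}$, so that $h_n \in \VP$. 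A scalar multiple of $h_n$ is a Geometric IPS refutation of $\mathcal{F}_n$: first, $h_n \circ \mathcal{G}_n = 0$ because $h_n \in \Ann{\mathcal{G}_n}$; second, since $g_n$ lies in $\abr{z_{n^2+1},\ldots,z_{n^2+s(n)}}$ it vanishes at the origin, so $h_n(0,\ldots,0) = -\det_n(\vec{\alpha}) = -\det(A_n) \neq 0$ by hypothesis. Replacing $h_n$ by $-h_n/\det(A_n)$ yields a Geometric $\VP$-IPS refutation of $\mathcal{F}_n$, establishing the third claim.

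For the fourth claim I invoke the two lemmas already proved. The hypotheses of \cref{lem:local enc hits vac0} impose no restriction on the matrices $A_n$, so they hold here, and that lemma shows $\mathcal{G} = (\mathcal{G}_n)_n$ is a $\VNC^0$-computable hitting set generator for $\VAC^0$. Applying \cref{lem:geom ips from hsg} with $\mathscr{D} = \VNC^0$ and $\mathscr{C} = \VAC^0$, the family $\mathcal{F} = (\mathcal{F}_n)_n$ built from the coordinates of $\mathcal{G}$ — which is precisely the family appearing in the statement — is a family of $\VNC^0$-equations admitting no Geometric $\VAC^0$-IPS refutation. Together with the first three claims, this proves the theorem.

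Because the statement is essentially an assembly of \cref{prop:ann ideal generator}, \cref{lem:local enc hits vac0}, and \cref{lem:geom ips from hsg}, I do not anticipate a deep obstacle. The two places that need care are: (i) checking that the constant term $h_n(0,\ldots,0) = -\det(A_n)$ is nonzero — this is the sole point at which the hypothesis $\det(A_n) \ne 0$ enters, and it is exactly what promotes the generic generator constructions (where $A_n$ was arbitrary) to an honest refutation; and (ii) the mild bookkeeping ensuring $h_n$ is a $p$-family, i.e.\ that the syntactic degree of $\Phi_n$ is polynomially bounded so that $h_n \in \VP$. The bulk of the write-up would consist of pinning down these two points.
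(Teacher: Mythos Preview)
Your proposal is correct and follows essentially the same route as the paper: the $\VNC^0$-computability and the lower bound come from \cref{lem:local enc hits vac0} together with \cref{lem:geom ips from hsg}, and the Geometric $\VP$-IPS refutation is the (scaled) generator $h_n$ of $\Ann{\mathcal{G}_n}$ furnished by \cref{prop:ann ideal generator}, whose constant term is $-\det(A_n)\neq 0$. The only cosmetic differences are that the paper deduces unsatisfiability from soundness of Geometric IPS rather than arguing directly as you do, and that you explicitly flag the degree bound needed for $(h_n)\in\VP$, which the paper leaves implicit.
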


\begin{proof}
    The fact that $\mathcal{F}_n$ is a system of $\VNC^0$-equations that cannot be refuted by Geometric $\VAC^0$-IPS follows from \cref{lem:local enc hits vac0,lem:geom ips from hsg}.
    To show that $\mathcal{F}_n$ can be refuted by Geometric IPS, let $h_n(\vec{z}) \in \Ann{\mathcal{G}_n}$ be the generator of the ideal $\Ann{\mathcal{G}_n}$.
    By \cref{prop:ann ideal generator}, we know that $h_n$ is given by
    \[
        h_n(\vec{z}) = z_{n^2+s+1} - \det(Z_{1 \ldots n^2} + A_n) + g(\vec{z}),
    \]
    where $Z_{1 \ldots n^2}$ denotes an $n \times n$ matrix formed from the variables $z_1, \ldots, z_{n^2}$ and $g(\vec{z})$ is a polynomial that satisfies $g(\vec{0}) = 0$.
    The constant term of $h_n$ is given by
    \[
        h_n(\vec{0}) = -\det(A_n) \neq 0.
    \]
    In particular, the polynomial $\frac{-1}{\det(A_n)}h_n(\vec{z})$ is a Geometric IPS refutation of $\mathcal{F}_{n}$.
    The soundness of Geometric IPS implies that $\mathcal{F}_n$ is unsatisfiable.
    \cref{prop:ann ideal generator} implies that $\frac{-1}{\det(A_n)}h_n(\vec{z})$ can be computed by a circuit of size $O(s) \le n^{O(1)}$, so $(h_n)_{n \in \naturals} \in \VP$.
    Hence Geometric $\VP$-IPS can refute $\mathcal{F}$ as claimed.
\end{proof}

Next, we prove a conditional lower bound against Geometric $\VF$-IPS.
Our lower bound for Geometric $\VF$-IPS holds assuming a super-polynomial lower bound on the border formula complexity of the determinant.
This is the same condition used in \cref{thm:vf generator} to design a $\VNC^0$-computable hitting set generator against $\VF$.

\begin{theorem} \label{thm:vf geom ips lb}
    Let $\F$ be a field of characteristic zero.
    Assume that any border formula which computes the $n \times n$ determinant $\det_n(X)$ must have size $n^{\omega(1)}$.
    Let $(\Phi_n)_{n \in \naturals}$ be a sequence of arithmetic circuits of size $s(n) \le n^{O(1)}$ such that $\Phi_n$ computes the $n \times n$ determinant $\det_n(X)$.
    Let $(A_n)_{n \in \naturals}$ be a sequence of $n \times n$ matrices such that $\det(A_n) \neq 0$ for all $n \in \naturals$.
    Let $\mathcal{G}_n : \F^{n^2 + s(n)} \to \F^{n^2+s(n)+1}$ be the local encoding of $\Phi_n(A_n) = 0$.
    Let $\mathcal{F}_n$ be the system of equations given by
    \[
        \mathcal{F}_n = \set{g_{n,1}(\vec{y}) = 0, \ldots, g_{n, n^2 + s(n) + 1}(\vec{y}) = 0},
    \]
    where $\mathcal{G}_n = (g_{n,1}, \ldots, g_{n, n^2 + s(n) + 1})$.
    Then $\mathcal{F} = (\mathcal{F}_n)_{n \in \naturals}$ is a sequence of unsatisfiable $\VNC^0$-equations that can be refuted by Geometric $\VP$-IPS, but cannot be refuted by Geometric $\VF$-IPS.
\end{theorem}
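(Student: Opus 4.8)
The statement is the formula analog of \cref{thm:vac0 geom ips lb}, and the plan is to mimic that proof essentially verbatim, swapping in the conditional ingredients in place of the unconditional ones. The argument splits into two halves: the hardness direction (no Geometric $\VF$-IPS refutation) and the completeness direction (a Geometric $\VP$-IPS refutation exists, which in particular certifies unsatisfiability).

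For the hardness direction, I would invoke \cref{lem:local enc hits vf}: under the assumed super-polynomial lower bound on the border formula complexity of $\det_n(X)$, the local encoding $\mathcal{G}_n$ of $\Phi_n(A_n) = 0$ is a $\VNC^0$-computable hitting set generator for $\VF$. Feeding this generator into \cref{lem:geom ips from hsg} immediately yields that the associated system of equations $\mathcal{F}_n = \set{g_{n,i}(\vec{y}) = 0}$ is a family of $\VNC^0$-equations (this is just \cref{lem:generator parameters}, since each $g_{n,i}$ has formula size $2$) that is hard for Geometric $\VF$-IPS. Note this is the \emph{only} place the border formula assumption enters.

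For the completeness direction, I would reuse the analysis from the proof of \cref{thm:vac0 geom ips lb} unchanged, since it depends only on the algebraic structure of $\Ann{\mathcal{G}_n}$ and not on which circuit class is being hit. Concretely: let $h_n(\vec{z})$ be the generator of $\Ann{\mathcal{G}_n}$ given by \cref{prop:ann ideal generator}, so that
\[
    h_n(\vec{z}) = z_{n^2+s+1} - \det(Z_{1 \ldots n^2} + A_n) + g(\vec{z})
\]
with $g(\vec{0}) = 0$. Its constant term is $h_n(\vec{0}) = -\det(A_n) \neq 0$, so $\frac{-1}{\det(A_n)} h_n(\vec{z})$ is a Geometric IPS refutation of $\mathcal{F}_n$; soundness of Geometric IPS then gives unsatisfiability, and \cref{prop:ann ideal generator} gives a circuit of size $O(s) \le n^{O(1)}$ for $h_n$, placing $(h_n)_{n \in \naturals}$ in $\VP$. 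Hence $\mathcal{F}$ is refutable by Geometric $\VP$-IPS but not by Geometric $\VF$-IPS.

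I do not anticipate a genuine obstacle here, since the theorem is an almost mechanical variation of \cref{thm:vac0 geom ips lb}; the one point worth stating carefully is that the completeness side is entirely insensitive to the hardness assumption, so the only role of the border formula lower bound is to upgrade \cref{lem:local enc hits vac0} to \cref{lem:local enc hits vf}. If anything needs a second look, it is merely checking that $\det(A_n) \neq 0$ is exactly the hypothesis needed to guarantee both a nonzero constant term (for the Geometric IPS refutation to exist) and that $\beta = 0$ is the correct output value in the local encoding of $\Phi_n(A_n) = 0$.
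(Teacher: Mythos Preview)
Your proposal is correct and matches the paper's proof essentially verbatim: the hardness direction cites \cref{lem:local enc hits vf} and \cref{lem:geom ips from hsg}, and the completeness direction is handled by noting that $\mathcal{F}$ is literally the same system as in \cref{thm:vac0 geom ips lb}. The paper simply cites \cref{thm:vac0 geom ips lb} for the completeness side rather than re-deriving the constant-term computation you spell out, but the content is identical.
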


\begin{proof}
    The fact that $\mathcal{F}_n$ is a system of $\VNC^0$-equations that cannot be refuted by Geometric $\VF$-IPS follows from the assumed lower bound on the border formula complexity of the determinant together with \cref{lem:local enc hits vf,lem:geom ips from hsg}.
    This is the same system of equations that appears in \cref{thm:vac0 geom ips lb}, so the fact that $\mathcal{F}$ can be refuted by Geometric $\VP$-IPS follows from \cref{thm:vac0 geom ips lb}.
\end{proof}

Our final application to Geometric IPS lower bounds is a conditional lower bound against Geometric $\VBP$-IPS.
We use the same assumption as in \cref{thm:vbp generator}, where we constructed a $\VNC^0$-computable generator that hits $\VBP$.
That is, we assume there is a $p$-family $(f_n)_{n \in \naturals} \in \VP$ such that the branching program complexity of $f_n$ grows super-polynomially in $n$.

\begin{theorem} \label{thm:vbp geom ips lb}
    Let $\F$ be a field of characteristic zero.
    Assume there is a $p$-family $(f_n)_{n \in \naturals} \in \VP$ such that $f_n$ requires branching programs of size $n^{\omega(1)}$.
    Let $(\Phi_n)_{n \in \naturals}$ be a sequence of arithmetic circuits of size $s(n) \le n^{O(1)}$ such that $\Phi_n$ computes $f_n$.
    Let $(\vec{\alpha}^{(n)})_{n \in \naturals}$ be a sequence of points in $\F^n$ and $(\beta_n)_{n \in \naturals}$ be a sequence of field elements such that $f_n(\vec{\alpha}^{(n)}) \neq \beta_n$ for all $n$.
    Let $\mathcal{G}_n : \F^{n + s(n)} \to \F^{n+s(n)+1}$ be the local encoding of $\Phi_n(\vec{\alpha}^{(n)}) = \beta_n$.
    Let $\mathcal{F}_n$ be the system of equations given by
    \[
        \mathcal{F}_n = \set{g_{n,1}(\vec{y}) = 0, \ldots, g_{n, n + s(n) + 1}(\vec{y}) = 0},
    \]
    where $\mathcal{G}_n = (g_{n,1}, \ldots, g_{n, n + s(n) + 1})$.
    Then $\mathcal{F} = (\mathcal{F}_n)_{n \in \naturals}$ is a sequence of unsatisfiable $\VNC^0$-equations that can be refuted by Geometric $\VP$-IPS, but cannot be refuted by Geometric $\VBP$-IPS.
\end{theorem}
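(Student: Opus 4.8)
The plan is to follow the same three-part template used in the proofs of \cref{thm:vac0 geom ips lb} and \cref{thm:vf geom ips lb}, now substituting the branching-program results for the low-depth and formula results. First I would observe that the $\VNC^0$-computability of each equation $g_{n,i}$ is immediate from \cref{lem:generator parameters}, which guarantees every output of a local encoding is computed by a formula of size $2$; hence $\mathcal{F}$ is a family of $\VNC^0$-equations as required.

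Next I would handle the negative direction by combining \cref{lem:local enc hits vbp} with \cref{lem:geom ips from hsg}. Under the stated hypothesis that some $(f_n)_{n \in \naturals} \in \VP$ requires branching programs of size $n^{\omega(1)}$, \cref{lem:local enc hits vbp} tells us that $\mathcal{G} = (\mathcal{G}_n)_{n \in \naturals}$, the local encoding of $\Phi_n(\vec{\alpha}^{(n)}) = \beta_n$, is a $\VNC^0$-computable hitting set generator for $\VBP$. Then \cref{lem:geom ips from hsg} converts this into the statement that the associated system $\mathcal{F}$ of $\VNC^0$-equations admits no Geometric $\VBP$-IPS refutation: any such refutation would be a nonzero element of $\Ann{\mathcal{G}_n}$ computable by a polynomial-size branching program, contradicting the hitting property.

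For the positive direction I would exhibit an explicit Geometric $\VP$-IPS refutation, which also certifies unsatisfiability via soundness of Geometric IPS. By \cref{prop:ann ideal generator}, $\Ann{\mathcal{G}_n}$ is generated by a polynomial $h_n(\vec{z}) = z_{n+s(n)+1} - f_n(z_1 + \alpha^{(n)}_1, \ldots, z_n + \alpha^{(n)}_n) + g(\vec{z}) + \beta_n$ with $g \in \abr{z_{n+1}, \ldots, z_{n+s(n)}}$ (so $g(\vec{0}) = 0$) and with a circuit of size $O(s(n)) \le n^{O(1)}$. Evaluating at $\vec{z} = \vec{0}$ gives $h_n(\vec{0}) = \beta_n - f_n(\vec{\alpha}^{(n)})$, which is nonzero by the hypothesis $f_n(\vec{\alpha}^{(n)}) \neq \beta_n$; rescaling, $r_n(\vec{z}) \coloneqq (\beta_n - f_n(\vec{\alpha}^{(n)}))^{-1} h_n(\vec{z})$ has constant term $1$, and since $h_n \in \Ann{\mathcal{G}_n}$ we have $r_n(g_{n,1}(\vec{y}), \ldots, g_{n,n+s(n)+1}(\vec{y})) = 0$, so $r_n$ is a valid Geometric IPS refutation of $\mathcal{F}_n$ with $(r_n)_{n \in \naturals} \in \VP$. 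Soundness of Geometric IPS then yields that $\mathcal{F}_n$ is unsatisfiable.

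Honestly, there is no genuinely hard step here: the theorem is an assembly of \cref{lem:generator parameters}, \cref{lem:local enc hits vbp}, \cref{lem:geom ips from hsg}, and \cref{prop:ann ideal generator}, all proved earlier. The one point that must not be overlooked is that, unlike ordinary IPS, Geometric IPS is incomplete, so the unsatisfiability of $\mathcal{F}$ cannot be assumed and must be extracted from the explicitly constructed refutation $h_n$; the hypothesis $f_n(\vec{\alpha}^{(n)}) \neq \beta_n$ is exactly what forces $h_n$ to have a nonzero constant term and hence to be a bona fide Geometric IPS refutation.
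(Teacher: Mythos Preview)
Your proposal is correct and follows essentially the same route as the paper: invoke \cref{lem:local enc hits vbp} and \cref{lem:geom ips from hsg} for the negative direction, and use \cref{prop:ann ideal generator} to exhibit $h_n$ with nonzero constant term $\beta_n - f_n(\vec{\alpha}^{(n)})$, rescale to obtain a Geometric $\VP$-IPS refutation, and deduce unsatisfiability from soundness. The only cosmetic difference is that you separately cite \cref{lem:generator parameters} for $\VNC^0$-computability, whereas the paper absorbs this into the citation of \cref{lem:local enc hits vbp}.
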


\begin{proof}
    The fact that $\mathcal{F}_n$ is a system of $\VNC^0$-equations that cannot be refuted by Geometric $\VBP$-IPS follows from \cref{lem:local enc hits vbp,lem:geom ips from hsg}.
    To show that $\mathcal{F}_n$ can be refuted by Geometric IPS, let $h_n(\vec{z}) \in \Ann{\mathcal{G}_n}$ be the generator of $\Ann{\mathcal{G}_n}$.
    By \cref{prop:ann ideal generator}, we know that $h_n$ is given by
    \[
        h(\vec{z}) = z_{n+s+1} - f_n(z_1 + \alpha_1, \ldots, z_n + \alpha_n) + g(\vec{z}) + \beta_n,
    \]
    where $g(\vec{z})$ is a polynomial satisfying $g(\vec{0}) = 0$.
    The constant term of $h_n$ is given by
    \[
        h_n(\vec{0}) = \beta_n-f_n(\vec{\alpha}) \neq 0,
    \]
    so $\frac{1}{\beta_n - f_n(\vec{\alpha})} h_n(\vec{0}) = 1$.
    In particular, the polynomial $\frac{1}{\beta_n-f_n(\vec{\alpha})} h_n(\vec{z})$ is a Geometric IPS refutation of $\mathcal{F}_{n}$.
    The soundness of Geometric IPS implies that $\mathcal{F}_n$ is unsatisfiable.
    Moreover, \cref{prop:ann ideal generator} implies that $\frac{1}{\beta_n-f_n(\vec{\alpha})} h_n$ can be computed by a circuit of size $O(s) \le n^{O(1)}$, so $(h_n)_{n \in \naturals} \in \VP$. Hence Geometric $\VP$-IPS can refute $\mathcal{F}_n$.
\end{proof}

\section{Conclusion and Open Problems} \label{sec:conclusion}

In this work, we gave new constructions of $\VNC^0$-computable hitting set generators for fairly strong circuit classes, whose proof of correctness is based on lower bounds for circuit complexity rather than degree.
Our constructions followed a general paradigm: from a separation of complexity classes $\mathscr{C} \subsetneq \mathscr{D} \subseteq \VP$, we constructed a $\VNC^0$-computable generator that hits $\mathscr{C}$ by using a local encoding of a $p$-family in the difference $\mathscr{D} \setminus \mathscr{C}$.
The seed length of the generator corresponds to the circuit complexity of the hard $p$-family in $\mathscr{D} \setminus \mathscr{C}$.
This means that we cannot hope to use local encodings to construct generators that hit $\VP$.
Because $\VP$ is a class of low-degree polynomials, the same argument as in \cref{subsec:annihilator degree bounds} shows that degree lower bounds for annihilators can be used to construct $\VNC^0$-computable generators for $\VP$.
Can one construct $\VNC^0$-computable generators for $\VP$ whose correctness is based on circuit complexity, not on degree bounds?

A natural approach to this problem would be to take a known hitting set generator $\mathcal{G}$ for $\VP$ and compile it into a different generator $\mathcal{G}'$ that may have worse stretch than $\mathcal{G}$, but improves on the complexity of $\mathcal{G}$ and retains its ability to hit $\VP$.
Such a compiler underlies the $\NC^0$-computable one-way functions and pseudorandom generators constructed by \textcite{AIK06}.
We do not know how to construct such a compiler in the algebraic setting.
One can obtain a new generator $\mathcal{G}'$ by encoding the generator $\mathcal{G}$ by extension variables, similar to how we represent circuits by local encodings, and the resulting $\mathcal{G}'$ is indeed $\VNC^0$-computable.
However, it's not obvious how to show that lower bounds on the annihilators of $\mathcal{G}$ imply lower bounds on annihilators of the new generator $\mathcal{G}'$.
It may be the case that by transforming $\mathcal{G}$ to $\mathcal{G}'$, we introduce new annihilating polynomials of lower complexity.
Our work shows that this is not possible for generators of the form $\vec{x} \mapsto (\vec{x}, f(\vec{x}))$, but it is less clear how to reason about other base generators $\mathcal{G}$.

For example, if we take $\mathcal{G}$ to be the Kabanets--Impagliazzo generator instantiated with the permanent, can we show that its local encoding $\mathcal{G}'$ hits $\VP$, assuming circuit lower bounds for the permanent?
The obvious approach to proving this would be to take an annihilator of the local encoding $\mathcal{G}'$ and obtain from it an annihilator of the base generator $\mathcal{G}$ by substituting a gate variable $y_i$ with the polynomial $f_i(\vec{x})$ computed at the $i$\ts{th} gate, just as we did in the proof of \cref{prop:ann ideal generator}.
If the annihilator of $\mathcal{G}'$ remains nonzero under this substitution, we can conclude a lower bound on its complexity, but it may be the case that the annihilator becomes zero under this substitution, in which case we do not know how to conclude a lower bound on its complexity.

Another natural question in this line of work is to improve on the parameters of our $\VNC^0$-computable generators.
We are only able to construct generators with sublinear stretch.
Can this be improved to linear or superlinear stretch? 
Or are there inherent tradeoffs between the stretch, degree, and complexity of a hitting set generator?

In our application to lower bounds for the Ideal Proof System, we were only able to establish lower bounds against the Geometric Ideal Proof System.
Can these lower bounds be strengthened to hold for subsystems of the Ideal Proof System, such as $\VAC^0$-IPS?
Although polynomial identity testing has a clear and simple relationship to Geometric IPS, it is not obvious if there is a useful application of hitting set generators towards proving lower bounds for the Ideal Proof System.

Finally, constructions of $\NC^0$-computable one-way functions and pseudorandom generators have found numerous applications throughout complexity theory and cryptography.
Are there useful applications of $\VNC^0$-computable generators within algebraic complexity, or in complexity theory more broadly?

\subsection*{Acknowledgments}

We thank Ramprasad Saptharishi for a simpler formulation and proof of \cref{lem:annihilator inductive step}, Rafael Oliveira for helpful conversations on Nullstellensatz degree bounds, and Abhibhav Garg for useful feedback that improved the presentation of this work.

\printbibliography

\end{document}